\newif\ifstoc
\declaretheorem[numberwithin=section]{theorem}
\declaretheorem[numberlike=theorem]{definition}
\declaretheorem[numberlike=theorem]{conjecture}
\declaretheorem[numberlike=theorem]{lemma}
\declaretheorem[numberlike=theorem]{proposition}
\declaretheorem[numberlike=theorem]{example}
\declaretheorem[numberlike=theorem]{corollary}
\declaretheorem[numberlike=theorem]{claim}
\crefname{claim}{claim}{claims}
\Crefname{claim}{Claim}{Claims}
\newcommand{\card}[1]{\lvert#1\rvert}
\newcommand{\suchthat}{\mathrel{:}}
\newcommand{\RR}{\mathbb{R}}
\newcommand{\norm}[1]{{\lVert#1\rVert}}
\newcommand{\norms}[1]{{\lVert#1\rVert}^2}
\newcommand{\floor}[1]{\left\lfloor#1\right\rfloor}
\newcommand{\abs}[1]{\lvert#1\rvert}
\newcommand{\conv}{\operatorname{conv}}
\newcommand{\inner}[2]{#1\cdot #2}
\newcommand{\goesto}{\rightarrow}
\newcommand{\argmin}{\operatorname{argmin}}
\newcommand{\argmax}{\operatorname{argmax}}
\newcommand{\diam}{\operatorname{diam}}
\newcommand{\prob}{\mathbb{P}}
\newcommand{\grad}{\nabla}
\newcommand{\eps}{\epsilon}
\newcommand{\poly}{\operatorname{poly}}
\newcommand{\aff}{\operatorname{aff}}
\newcommand{\linspan}{\operatorname{span}}
\newcommand{\dist}{\operatorname{dist}}
\newcommand{\area}{\operatorname{area}}
\newcommand{\cone}{\operatorname{cone}}
\newcommand{\vf}{\operatorname{vf}}
\newcommand{\facets}{\operatorname{facets}}
\newcommand{\vertices}{\operatorname{vertices}}
\newcommand{\pwidth}{\operatorname{PWidth}}
\newcommand{\pdirw}{\operatorname{PDirW}}
\newcommand{\dirw}{\operatorname{dirW}}
\newcommand{\relint}{\operatorname{relint}}
\newcommand{\minwidth}{\operatorname{minwidth}}
\newcommand{\width}{\operatorname{width}}
\newcommand{\faces}{\operatorname{faces}}
\newcommand{\hfrac}[2]{{#1}/{#2}}
\newcommand{\deq}{\overset{d}{=}}
\newcommand{\sphere}{\mathcal{S}}
\newcommand{\krank}{\operatorname{K-rank}}
\newcommand{\rkrank}[2]{\operatorname{K-rank}_{#1}(#2)}
\newcommand{\gv}{\mathcal G}
\newcommand*{\email}[1]{\href{mailto:#1}{\nolinkurl{#1}} }
\newenvironment{claimproof}{%
\begin{proof}[Proof of claim]%
}{%
\begingroup%
\end{proof}%
\endgroup}
\def\final{1}  
\newcommand{\lnote}[1]{[{Luis: \bf #1}]}
\newcommand{\cnote}[1]{[{Chang: \bf #1}]}
\newcommand{\anonnote}[1]{[{anon: \bf #1}]}
\newcommand{\sidecomment}[1]{\marginpar{\tiny #1}}
\newcommand{\details}[1]{{\color{blue}\ [[#1]] }}
\newcommand{\lnote}[1]{}
\newcommand{\cnote}[1]{}
\newcommand{\anonnote}[1]{}
\newcommand{\sidecomment}[1]{}
\newcommand{\details}[1]{}
\title{The smoothed complexity of Frank-Wolfe methods via conditioning of random matrices and polytopes}
\date{}
\author{Luis Rademacher\\
University of California, Davis\\
\email{lrademac@ucdavis.edu}
\and 
Chang Shu\\
University of California, Davis\\
\email{ccshu@ucdavis.edu}
}
\begin{document}

\maketitle

\begin{abstract}
	Frank-Wolfe methods are popular for optimization over a polytope. 
	One of the reasons is because they do not need projection onto the polytope but only linear optimization over it. 
	To understand its complexity, a fruitful approach in many works has been the use of condition measures of polytopes.
	Lacoste-Julien and Jaggi introduced a condition number for polytopes and showed linear convergence for several variations of the method. 
	The actual running time can still be exponential in the worst case (when the condition number is exponential). 
	We study the smoothed complexity of the condition number, namely the condition number of small random perturbations of the input polytope and show that it is polynomial for any simplex and exponential for general polytopes.
	Our results also apply to other condition measures of polytopes that have been proposed for the analysis of Frank-Wolfe methods: vertex-facet distance (Beck and Shtern) and facial distance (Pe\~na and Rodr\'iguez).

	Our argument for polytopes is a refinement of an argument that we develop to study the conditioning of random matrices.
	The basic argument shows that for $c>1$ a $d$-by-$n$ random Gaussian matrix with $n \geq cd$ has a $d$-by-$d$ submatrix with minimum singular value that is exponentially small with high probability.
	This also has consequences on known results about the robust uniqueness of tensor decompositions, the complexity of the simplex method and the diameter of polytopes.

\end{abstract}

\section{Introduction}\label{sec:introduction}

Frank-Wolfe methods (FWMs) \cite{doi:10.1002/nav.3800030109} are a family of algorithms that attempt to minimize a differentiable function over a convex set. 
For concreteness we start by describing the basic Frank-Wolfe method to minimize a differentiable function $f:C \mapsto \RR$ where $C \subseteq \RR^d$ is a compact convex set.
It is an iterative method and proceeds as follows:

	\begin{minipage}{0.6\textwidth}
		\begin{algorithmic}
			\State Let $x_0 \in C$.
			\For{$k=0,\dotsc, K$}
			\State     Compute $y \in \argmin_{x \in C} (\grad f(x_k))^T x$.
			\State     Let $x_{k+1} = x_k + \alpha^*(y-x_k)$, where $\alpha^*$ is a suitable step size.
			\EndFor
		\end{algorithmic}
	\end{minipage}
\begin{minipage}{0.4\textwidth}
\begin{center}
	\begin{tikzpicture}[scale=0.6]
	\node[align=left] at (0,6) {\small $\min_{x \in C} \norm{x}^2$};
	\draw (-4,2) node[anchor=north]{}
	-- (3,1) node[anchor=north]{}
	-- (2.7,5.5) node[anchor=south]{}
	-- (-3,4.7) node[anchor=south]{}
	-- cycle;
	\filldraw[black] (0.2,3.8) circle (1pt) node[anchor=south] {\small $x_0$};
	\filldraw[black] (2,2) circle (1pt) node[anchor=south] {\small $x_1$};
	\filldraw[black] (0,2) circle (1pt) node[anchor=south] {\small $x_2$};
	\filldraw[black] (0,1) circle (1pt) node[anchor=north] {\small $O$};
	\draw[dashed] (-4,2)--(2,2);
	\draw[decoration={markings,mark=at position 1 with
		{\arrow[scale=1.2,>=stealth]{>}}},postaction={decorate}] (0.2,3.8)--(2,2);
	\draw[dashed] (3,1)--(0.2,3.8);
	\draw[decoration={markings,mark=at position 1 with
		{\arrow[scale=1.2,>=stealth]{>}}},postaction={decorate}] (2,2)--(0,2);
	\end{tikzpicture}
\end{center}
\end{minipage}
\smallskip

Some of our results are about Wolfe's method \cite{Wolfe1976}, which is a variation of Frank-Wolfe methods specialized to the minimum norm point problem in a polytope (that is, a bounded convex polyhedron).
 
\subsection{Our contributions and related work}

In this paper we are interested in the complexity of FWMs.
The time complexity of Wolfe's method is know to be exponential in the worst case (by an upper bound in \cite{Wolfe1976} and a lower bound in \cite{DBLP:journals/siamcomp/LoeraHR20}).
There is a large body of work proving linear convergence of several variations of FWMs \cite{MR842638, DBLP:journals/corr/abs-1301-4666,
	LacosteJulien2013AnAI, NIPS2015_5925, Beck2017,doi:10.1137/15M1009937, MR3920711, 
	DBLP:conf/aistats/PedregosaNAJ20}.
We are particularly interested in \cite{LacosteJulien2013AnAI, NIPS2015_5925,Beck2017,doi:10.1137/15M1009937,MR3920711} which prove \emph{global} linear convergence of certain variations of FWMs: F-W with away steps, pairwise F-W and Wolfe's method when the feasible region is a polytope $C = \conv(A)$ for finite $A \subseteq \RR^d$. 
In these results the upper bound on the running time (actual speed of linear convergence) depends on a condition number of $C$. 
Informally speaking, the dependence is of the following kind: if $x_t$ is the current point after $t$ iterations, then the function value satisfies $f(x_t) - f^* \leq (1-\kappa)^t(f(x_0) - f^*)$ where $f^*$ is the optimal value, $x_0$ is the initial point and $0 \leq \kappa \leq 1$ is a measure of conditioning.
If $\kappa$ is small, then convergence is slow.
In the previously mentioned papers, $\kappa$ is of the form $\textrm{``something''}/\diam(C)$, where ``something'' can be:
\begin{itemize}
\item \cite{NIPS2015_5925} minimum width, $\minwidth(A) =\min_{S \subseteq A} \width (S)$ ($\width$ is standard, see \cref{sec:minwidth});
\item \cite{NIPS2015_5925} pyramidal width, $\pwidth (A)$;
\item \cite{Beck2017} vertex-facet distance, $\vf(C) = \min_{F \in \facets(C)} d(\aff F,\vertices(C) \setminus F)$; or
\item \cite{MR3920711} facial distance, $\Phi(C) = \min_{\substack{F \in \faces(C) \\ \emptyset \varsubsetneq F \varsubsetneq C}} d(F, \conv (\vertices (C)\setminus F))$.
\end{itemize}
We do not provide a definition of pyramidal width at this point as it is complicated and it was shown in \cite{MR3920711} that $\pwidth(A) = \Phi(C)$ (\cref{thm:facialdistancepwidth} here).
It is also known that $\minwidth(A) \leq \pwidth(A)$ \cite[Section 3.1]{NIPS2015_5925}.
We start with the observation that $\Phi(C) \leq \vf(C)$ (\cref{thm:summary}).
(Note that the reverse inequality was claimed in \cite{MR3920711}, but the cube $[0,1]^d$ is a counterexample: $\Phi([0,1]^d) =1/\sqrt{d}$ while $\vf([0,1]^d)=1$.)
This implies that all four quantities lie between $\minwidth(A)$ and $\vf(C)$ (\cref{thm:summary}).
It follows from \cite{DBLP:journals/siamcomp/LoeraHR20} that all of them can be exponentially small as a function of the bit-length of $A$.
In fact, a stronger result follows from the work of Alon and Vu \cite{alon1997anti} combined with the stated inequalities. Alon and Vu showed that there is a 0--1 simplex $S$ such that $\vf(S)$ is sub-exponentially small in the dimension (\cref{cor:alonvu}).
The connection between polytope conditioning for FWMs and the Alon and Vu result was observed in \cite{NIPS2015_5925}.

The main contributions of this paper are about the smoothed analysis of FWMs and the condition numbers of matrices and polytopes.
Smoothed analysis \cite{10.1145/380752.380813} is an approach to understand the behavior of algorithms that are efficient in practice but are inefficient in the worst case. 
The main idea is to study small random perturbations of any given instance of a problem.
Suppose that the instance is described by a vector $x \in \RR^n$.
Then one aims to understand $T(x+g)$, where $g \in \RR^n$ is a random vector with distribution $N(0,\sigma^2 \mathbf{I}_n)$ and $T$ is a measure of complexity (for example, $T(x)$ could be the running time of a particular algorithm on input $x$).
We adopt a definition that first appeared in \cite{DBLP:conf/stoc/BeierV04, DBLP:journals/siamcomp/BeierV06}.
\begin{definition}[{\cite{DBLP:conf/ipco/RoglinV05} \cite[Section 1.1]{DBLP:journals/mp/RoglinV07}}]\label{def:psc}
We say $T$ has \emph{(probabilistic) polynomial smoothed complexity} if there is a polynomial $p$ such that 
\[
\max_{x \in \RR^n, \norm{x} \leq 1} \prob_g\bigl( T(x+g) \geq p(n, 1/\sigma, 1/\delta) \bigr) \leq \delta
\]
\end{definition}
Our first smoothed analysis result concerns FWMs minimizing a convex function on a simplex (\cref{sec:simplices}). 
We show that $\minwidth$ has good smoothed complexity (\cref{lem:simplex}). 
This implies the following result on polytope conditioning that can be combined with results in \cite{NIPS2015_5925} to show polynomial smoothed time complexity of several FWMs for the minimization of a convex function in any simplex:
\begin{restatable}{theorem}{thmsimplexsmoothed}\label{thm:simplex smoothed}
	Let $A= \{A_1, \dotsc, A_{d+1}\}$ be a set of independent Gaussian random vectors with means $\mu_i$, $\norm{\mu_i} \leq 1, i \in [d+1]$, and covariance matrix $\sigma^2 \mathbf{I}_d$.
	Then for $\delta > 0$, with probability at least $1-\delta$, the measure of conditioning $\kappa = \frac{\pwidth(A)}{\diam(A)}$ of $A$ is at least some inverse polynomial in $d$, $\hfrac{1}{\sigma}$ and $\hfrac{1}{\delta}$.
\end{restatable}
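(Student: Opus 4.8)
The plan is to reduce to two elementary facts: that $\minwidth(A)\le\pwidth(A)$ (the inequality of \cite[Section 3.1]{NIPS2015_5925} recalled in the excerpt) and that $\diam(A)=\diam(\conv(A))=\max_{i\ne j}\norm{A_i-A_j}$ (the diameter of a polytope is attained at a pair of its vertices, all of which lie in $A$; interior points of $A$ do not matter). Together these give
\[
\kappa=\frac{\pwidth(A)}{\diam(A)}\ \ge\ \frac{\minwidth(A)}{\max_{i\ne j}\norm{A_i-A_j}},
\]
so it suffices to establish, each with failure probability at most $\delta/2$, (i) a lower bound on $\minwidth(A)$ that is an inverse polynomial in $d,1/\sigma,1/\delta$, and (ii) a polynomial upper bound on $\max_{i\ne j}\norm{A_i-A_j}$; a union bound then finishes the argument.

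Part (i) is exactly \cref{lem:simplex}: since $\minwidth$ has polynomial smoothed complexity, unwinding \cref{def:psc} applied to $1/\minwidth$ gives a polynomial $q$ with $\prob\bigl(\minwidth(A)<1/q(d,1/\sigma,2/\delta)\bigr)\le\delta/2$. For part (ii), each $A_i-A_j$ is Gaussian with mean $\mu_i-\mu_j$ of norm at most $2$ and covariance $2\sigma^2\mathbf I_d$; writing $A_i-A_j=(\mu_i-\mu_j)+g_{ij}$ with $g_{ij}\sim N(0,2\sigma^2\mathbf I_d)$, using that $x\mapsto\norm x$ is $1$-Lipschitz and that $\expectation\norm{g_{ij}}\le\sigma\sqrt{2d}$, Gaussian concentration yields $\prob\bigl(\norm{A_i-A_j}>2+\sigma\sqrt2(\sqrt d+t)\bigr)\le e^{-t^2/2}$. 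Taking $t=\sqrt{2\ln(d(d+1)/\delta)}$ and a union bound over the at most $\binom{d+1}{2}$ pairs gives $\prob\bigl(\diam(A)>r\bigr)\le\delta/2$ with $r:=2+\sigma\sqrt2\bigl(\sqrt d+\sqrt{2\ln(d(d+1)/\delta)}\bigr)$.

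Combining the two bounds, with probability at least $1-\delta$ we get $\kappa\ge 1/(q\,r)$, and it remains only to check that $q\,r$ is genuinely a polynomial in $(d,1/\sigma,1/\delta)$. This is the one point that needs care, since $r$ grows linearly in $\sigma$. It is resolved by scale-invariance of $\kappa$ and the linear scaling of $\minwidth$ in $\sigma$: the bound of \cref{lem:simplex} may be taken with $q$ of the form $q(d,1/\sigma,1/\delta)=\tfrac1\sigma\,\poly(d,1/\delta)$, so that $q\,r=\tfrac2\sigma\poly(d,1/\delta)+\poly(d,1/\delta)\sqrt{\ln(d/\delta)}$, which is a polynomial in $d,1/\sigma,1/\delta$. (Alternatively, restricting to $\sigma\le 1$, as is customary in smoothed analysis, already makes $r$ polynomial in $d$ and $\ln(1/\delta)$ alone.)

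The substantive content is entirely in \cref{lem:simplex} — controlling the minimum width of $d+1$ Gaussian points in $\mathbb R^d$; relative to that, the argument above is bookkeeping, and the only genuine subtlety is the $\sigma$-dependence discussed in the previous paragraph. The diameter estimate is nothing more than norm concentration for Gaussian vectors together with a union bound, so I do not expect it to be an obstacle.
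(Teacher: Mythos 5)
Your proposal is correct and follows essentially the same route as the paper: lower-bound $\pwidth$ via $\minwidth$ using \cref{lem:simplex}, upper-bound $\diam(A)$ by Gaussian norm concentration plus a union bound (the paper bounds each $\norm{A_i}$ via \cref{lem:laurent} rather than the pairwise differences, a cosmetic difference), and combine with a union bound. The $\sigma$-cancellation you flag as the one subtle point is exactly how the paper's proof concludes — it rewrites the ratio as $\delta\sqrt{\pi/2}\big/\bigl((d+1)^2(\sqrt{2d+3\ln((d+1)/\delta)}+1/\sigma)\bigr)$ so that only $1/\sigma$ appears.
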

Note that even the problem of finding the minimum norm point in a simplex is not known to have a simple polynomial time algorithm. 
All polynomial time algorithms we know for such a special case are general purpose convex programming algorithms such as the ellipsoid method.
Moreover, \cite{DBLP:journals/siamcomp/LoeraHR20} shows that the linear programming problem reduces in strongly polynomial time to the minimum norm point in a simplex problem. 
This suggests that to find a simple polynomial time algorithm for the minimum norm point in a simplex is hard and, in particular, to find a strongly polynomial time algorithm would imply the existence of a strongly polynomial time algorithm for linear programming, which would solve a major open problem.

Our second smoothed analysis result concerns condition measures of general polytopes (\cref{sec:polytopeconditioning}).
We show that the standard global linear convergence results for FWMs mentioned above based on polytope conditioning cannot guarantee polynomial complexity for general polytopes in the average or smoothed sense.
More specifically, for V-polytopes $\conv(A)$ with $\card{A}$ and $d$ large and comparable, $d \approx \delta \card{A}$, $\delta \in (0,1)$, we show that vertex-facet distance does not have polynomial smoothed complexity. 
Given that the complexity here increases as $\vf(A)$ gets smaller, in the context of \cref{def:psc} one sets $T=1/\vf$.
It is enough to take $x=0$ there and we show:
\begin{restatable}{theorem}{thmpolytopeconditioning}\label{thm:polytopeconditioning}
	Let $\delta \in (0,1)$.
	Suppose $A = \{A_1, \dotsc, A_{n+1}\}$ is a set of iid.\ standard Gaussian random vectors in $\RR^d$ and $d = \floor{ \delta n }$. 
	Let $P_{n+1} = \conv(A_1, \dotsc, A_{n+1} )$. 
	Then	
	\[
	\prob \bigl(\diam(P_{n+1}) \geq \sqrt{d} \bigr)  \geq 1 - e^{-\frac{nd}{32}},
	\]
	and there exists constants $0<c, c'<1$ (that depends only $\delta$) such that,
	\[
	\lim_{n \goesto \infty} \prob \bigl(\vf(P_{n+1}) \leq c^d  \bigr) \geq c'. 
	\]
	Hence the measure of conditioning $\kappa = \frac{\vf(P_{n+1})}{\diam(P_{n+1})}$ of $A$ is exponentially small in $d$ with constant probability.
\end{restatable}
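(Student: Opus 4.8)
The two displayed estimates can be proved separately and then the two events intersected. For the diameter I would only use $\diam(P_{n+1}) \ge \max_{i\ne j}\norm{A_i-A_j}$: partition $[n+1]$ into $\floor{(n+1)/2}\ge n/2$ disjoint pairs, note that for a pair $\{i,j\}$ we have $A_i-A_j\sim N(0,2\mathbf I_d)$, hence $\norm{A_i-A_j}^2\sim 2\chi^2_d$, and apply the Laurent--Massart lower-tail bound $\prob(\norm{A_i-A_j}^2<d)=\prob(\chi^2_d<d/2)\le e^{-d/16}$. Difference vectors over disjoint pairs are independent, so the probability that \emph{all} $\ge n/2$ of them have squared length below $d$ is at most $e^{-(n/2)(d/16)}=e^{-nd/32}$, giving the claim. (A single pair already yields $1-e^{-d/16}$; using all of them produces the stated exponent.)

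\textbf{The vertex--facet bound: reduction.} With probability $1-o(1)$, $P_{n+1}$ is simplicial and each $A_i$ is a vertex, so $\vf(P_{n+1})=\min_F\min_{j\notin F}\dist(A_j,\aff F)$ over $d$-subsets $F$ spanning a supporting hyperplane; thus it suffices to produce \emph{one} facet $F=\conv(A_{i_1},\dots,A_{i_d})$ and one external vertex $A_j$ with $\dist(A_j,\aff F)\le c^d$ --- equivalently $d+1$ of the points that are ``nearly affinely dependent'' with the extra property that $d$ of them form a facet. The input is the random-matrix conditioning result: since $d=\floor{\delta n}$ we have $n+1\ge (1/\delta)d$, so with $c=1/\delta>1$ the $d\times(n+1)$ Gaussian matrix $[A_1\mid\dots\mid A_{n+1}]$ has, with high probability, a $d\times d$ submatrix with $\sigma_{\min}\le\gamma^d$, i.e.\ $d$ of the points lie within $\gamma^d$ of a common hyperplane. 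The trouble is that this hyperplane cuts through the \emph{interior} of the point cloud and is essentially never a supporting hyperplane, while $\vf$ only sees facets. (Dually, after a harmless recentering so $0\in\interior P_{n+1}$, a facet $F$ corresponds to a vertex $v$ of $P_{n+1}^{\circ}$ and $\vf(F)=(1-\sigma_2(v))/\norm v$ with $\sigma_2(v)$ the second-largest of the numbers $\langle v,A_i\rangle$; a near-singular $A_S$ yields a large-norm point $A_S^{-T}\ones$, but it violates other constraints, so it is not a vertex --- the same obstruction.)

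\textbf{The vertex--facet bound: refinement.} To force the near-degeneracy to the boundary I would localize. Fix a direction $w$ (say $w=e_1$) and restrict to the ``cap'' consisting of the $t:=\lceil c_2 d\rceil$ points of $A$ with largest $\langle w,\cdot\rangle$, for a constant $c_2\in(1,1/\delta)$ (so $t\le n+1$). Conditioning on which points fall in the cap leaves the coordinates orthogonal to $w$ iid standard Gaussian in $\RR^{d-1}$ (only the $w$-coordinate is reshaped into a top order statistic), and one runs a suitably steered version of the matrix argument on these $\ge c_2 d$ cap points to produce $d$ of them, $A_{i_1},\dots,A_{i_d}$, whose affine hull is both (i) within $\gamma_1^d$ of a further cap point $A_k$, and (ii) an upper supporting hyperplane of $P_{n+1}$ --- a genuine facet $F$; here (ii) uses that the cap is extreme in direction $w$ and that the points outside the cap lie strictly below it in that direction, so they do not spoil the supporting property (this is the delicate point; one must keep the constructed facet ``nearly horizontal'' and well above the cap threshold). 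Converting near-affine-dependence into a distance costs only a $\sqrt d$ factor (a coefficient vector $\lambda$ with $\sum_i\lambda_i=0$ has $\norm\lambda/\abs{\lambda_{\max}}\le\sqrt d$) and normalizing the hyperplane normal loses at most $\poly(d)$, so $\vf(P_{n+1})\le\dist(A_k,\aff F)\le c^d$ for a suitable $c\in(0,1)$ depending only on $\delta$. The matrix result holds with probability $1-o(1)$, but the cap and conditioning events only keep the probability above a constant $c'$; letting $n\to\infty$ gives the stated $\liminf$, and intersecting with the diameter event gives $\kappa=\vf(P_{n+1})/\diam(P_{n+1})\le c^d/\sqrt d$, exponentially small in $d$ with constant probability.

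\textbf{Main obstacle.} The crux is exactly the passage from an ``interior'' near-linear-dependence --- what the plain matrix argument delivers --- to a near-degenerate configuration sitting at an \emph{actual} facet, with only polynomial loss in the quantitative bound; this is where the argument must interact with the points \emph{outside} the localizing cap to preserve the supporting-hyperplane property, and is presumably the ``refinement'' of the matrix argument referred to in the introduction. Doing this while keeping the success probability a positive constant (rather than letting it vanish) is the remaining bookkeeping.
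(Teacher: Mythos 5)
Your diameter argument is correct and is exactly the paper's: disjoint pairs, $\norm{A_i-A_j}^2\sim 2\chi^2_d$, Laurent--Massart lower tail with $t=d/16$, and independence across the $\geq n/2$ pairs giving $e^{-nd/32}$.

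The vertex--facet half has a genuine gap, and you have located it yourself: the step ``run a suitably steered version of the matrix argument on the cap points to produce $d$ of them whose affine hull is (i) within $\gamma_1^d$ of a further cap point and (ii) a supporting hyperplane'' is precisely the theorem, and no argument is given for why such a configuration exists with constant probability. Nothing forces the $d$ points selected by a near-singularity argument to be the vertex set of a facet, even after restricting to an extreme cap; and the conditioning you propose (top order statistics in the $w$-coordinate, iid Gaussians orthogonally) reshapes exactly the coordinate that governs the supporting-hyperplane property, so ``keeping the facet nearly horizontal'' is not bookkeeping but the whole difficulty. The paper avoids this entirely by never trying to make a prescribed near-degenerate $d$-tuple into a facet. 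Instead it splits off the last point: condition on $P_n=\conv(A_1,\dotsc,A_n)$, and for each facet $S$ of $P_n$ consider the thin slab of width $\eps$ on the \emph{inner} side of $\aff(A_S)$ but outside $P_n$. If the independent point $A_{n+1}$ lands in the union of these slabs, it is a vertex of $P_{n+1}$ (being outside $P_n$), the facet $S$ survives in $P_{n+1}$ (all points remain on its inner side), and $\vf(P_{n+1})\leq\dist(A_{n+1},\aff A_S)\leq\eps$. The Gaussian measure of this union is bounded below by a constant for $\eps=c^d$ using: the Donoho--Tanner neighborliness theorem to get exponentially many facets whp; Gilbert--Varshamov to extract $c_2^d$ facets pairwise sharing few vertices; Bonferroni with a quantitative bound on the Gaussian measure of the intersection of two affine bands (\cref{lem:twobandsaff}, which needs the noncentral chi-square comparison because affine hulls of Gaussian points are not through the origin); and a bound on $\mathcal{G}(P_n\setminus(P_n)_{-\eps})$ to discard the part of the slabs inside $P_n$. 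So the ``refinement'' is not a steering of the submatrix selection but a change of target: one lower-bounds the measure of the union of bands attached to the \emph{actual facets} of a random polytope, rather than to arbitrary $(d-1)$-subsets, and lets a fresh point fall into it.
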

\Cref{thm:polytopeconditioning} combined with \cref{thm:summary} implies that none of the four measures of polytope conditioning ($\minwidth$, $\pwidth$, $\Phi$, $\vf$) has polynomial smoothed complexity.

A way of interpreting \cref{thm:polytopeconditioning} is that the standard conditioning measures of polytopes for FWMs are somewhat pessimistic and can appear ill-conditioned even then polytope is bad only locally. 
For example, vertex-facet distance can be small even if one vertex and one facet are bad while the rest of the polytope is good.
In other words, it may still be possible to show smoothed polynomial complexity of FWMs in a different way.

\Cref{thm:polytopeconditioning} is a statement about the minimum distance between the affine hull of $d$ \cnote{to confirm} points that form a facet and a vertex not on that facet.
In order to understand this problem we study first a simplified version where we replace affine hull by span and we remove the restriction that the $d-1$ points form a facet.
Namely, we study the following question: given $n$ standard Gaussian random points in $\RR^d$, how close can one of the points be to the span of some $d-1$ others when $n$ is somewhat larger than $d$, say, $n=2d$?
This question is easier to understand than the polytope version and it relates to conditioning of random matrices and the restricted isometry property in compressive sensing.
The relation starts from the known observation (\cref{lem:RudelsonVershynin}) that the minimum point-hyperplane distance is, up to polynomial factors, the same as the smallest singular value of a matrix. 
Given this, our question is essentially equivalent to: given an $d$-by-$n$ random matrix with iid.\ standard Gaussian entries, what is the minimum of the smallest singular values over $d$-by-$d$ submatrices?
We answer this question by showing that when $n/d \geq c > 1$ the minimum smallest singular value above (and, equivalently, minimum point-hyperplane distance) is exponentially small: 
\begin{restatable}{theorem}{thmsigmamgeneral}\label{thm:sigmam general}
	Let $A$ be an $d$-by-$n$ random matrix with iid.\ standard Gaussian entries with $d \geq 2$ and $\frac{n}{d} \geq c_0> 1$.
	Then, there exist constants $c_2, c_4 > 1$, $0< c_6 < 1$ (that depend only on $c_0$) such that with probability at least $1-2c_4 c_6^d$,
	\[
	\min_{S \subseteq [n], \card{S}=d} \sigma_d (A_S) \leq  \frac{1}{c_4c_2^{d-1}}.
	\]
\end{restatable}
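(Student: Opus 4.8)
The starting point is the elementary bound $\sigma_d(A_S)\le\dist(A_j,\linspan(A_i:i\in S\setminus\{j\}))$, valid for every $j\in S$: if $x\in\RR^d$ is supported on $S$ with $x_j=1$ and expresses $A_j$ minus its orthogonal projection onto the span of the other columns of $A_S$, then $\|A_Sx\|$ equals that distance while $\|x\|\ge1$, so $\sigma_d(A_S)\le\|A_Sx\|/\|x\|\le\|A_Sx\|$. Hence it suffices to produce, with the stated probability, a $(d-1)$-subset $T$ of the columns and a further column $A_j$, $j\notin T$, with $\dist(A_j,\linspan A_T)\le t$, where $t:=c/M$, $M:=\binom{|R|}{d-1}$, $c>0$ a small constant, and $R$ the reservoir below; this gives the theorem with $c_2$ slightly below $M^{1/(d-1)}$, a constant $>1$ depending only on $c_0$. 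First I would split the columns into a reservoir $R$ with $|R|=\lceil\tfrac{c_0+1}{2}d\rceil$ and a test set $Q$ with $|Q|=n-|R|=\Theta(d)$; since $|R|/(d-1)$ is bounded below by a constant $>1$, $M$ is exponentially large in $d$. Conditioning on $(A_i)_{i\in R}$, every $(d-1)$-subset $T\subseteq R$ a.s.\ spans a hyperplane $H_T=\linspan A_T$ through the origin, the test columns $(A_j)_{j\in Q}$ are i.i.d.\ standard Gaussian and independent of the reservoir, and for fixed $H_T$ one has $\dist(A_j,H_T)\deq|N(0,1)|$, so $\prob(\dist(A_j,H_T)\le t\mid R)=p$ with $p\ge\tfrac12\sqrt{2/\pi}\,t$. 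Writing $\gamma(R)$ for the standard Gaussian measure of $\bigcup_{T\subseteq R,\,|T|=d-1}\{x:\dist(x,H_T)\le t\}$, the failure probability of the theorem is at most $\expectation_R[(1-\gamma(R))^{|Q|}]\le\expectation_R[e^{-|Q|\gamma(R)}]$, so everything reduces to showing $\gamma(R)=\Omega(1)$ with probability $1-e^{-\Omega(d)}$.

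To lower-bound $\gamma(R)$ I would apply the Chung--Erd\H{o}s (second-moment) inequality to $N_g:=\#\{T:\dist(g,H_T)\le t\}$ for a Gaussian point $g$, with the reservoir fixed: $\expectation_gN_g=Mp$ and $\expectation_gN_g^2=Mp+S(R)$, where $S(R):=\sum_{T\ne T'}\prob_g(\dist(g,H_T)\le t,\ \dist(g,H_{T'})\le t)$, whence $\gamma(R)=\prob_g(N_g\ge1)\ge(Mp)^2/(Mp+S(R))$. The choice $t=c/M$ makes $Mp=\Theta(c)$, so it remains to prove that $S(R)=O(1)$ with probability $1-e^{-\Omega(d)}$; on that event $\gamma(R)=\Omega(1)$ and the failure probability is $e^{-\Omega(d)}$, as claimed, with $c_4,c_6$ absorbing constants.

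The core estimate concerns $S(R)$. For two hyperplanes $H_T,H_{T'}$ whose defining columns together have rank $\ge d$ (the generic case, holding whenever $|T\cap T'|\le d-2$), $H_T\cap H_{T'}$ has codimension two and the slab intersection is governed by the $2$-plane $\linspan(\nu_T,\nu_{T'})$ of unit normals, giving $\prob_g(\text{both})\le\min\!\bigl(p,\ \tfrac{2t^2}{\pi}(1-\langle\nu_T,\nu_{T'}\rangle^2)^{-1/2}\bigr)$. The key structural fact is that, conditioning on the shared columns $A_{T\cap T'}$ and projecting onto their orthogonal complement, $\nu_T$ and $\nu_{T'}$ become \emph{independent} uniformly random unit vectors in dimension $d-|T\cap T'|$; hence, with $s:=d-1-|T\cap T'|$, the correlation $\langle\nu_T,\nu_{T'}\rangle$ is distributed as one coordinate of a uniform unit vector in $\RR^{s+1}$. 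From this I get (i) $\expectation_R[\min(p,\tfrac{2t^2}{\pi}(1-\langle\nu_T,\nu_{T'}\rangle^2)^{-1/2})]$ is $O(t^2)$ when $s\ge2$ and $O(t^2\log(1/t))=O(t^2d)$ when $s=1$; summing against the pair counts $M\binom{d-1}{s}\binom{|R|-d+1}{s}$ (which total $M^2$ and are only $M\cdot\poly(d)$ for bounded $s$) yields $\expectation_R[S(R)]=O(M^2t^2)=O(c^2)=O(1)$; and (ii) $\prob_R(|\langle\nu_T,\nu_{T'}\rangle|\ge1-\delta)\le(2\delta)^{s/2}$, so two reservoir hyperplanes are $\delta$-close with probability exponentially small in $s$.

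Upgrading the mean bound on $S(R)$ to a high-probability bound is the step I expect to be the main obstacle. The plan is to split $\sum_{T\ne T'}$ by the overlap parameter $s$: for $s$ above a threshold, a union bound over pairs using (ii) shows that, except with probability $e^{-\Omega(d)}$, no two of those hyperplanes are $\delta_0$-close for a suitable constant $\delta_0=\delta_0(c_0)$, so each such term is $O(t^2)$ and the sub-sum is $O(M^2t^2)$ deterministically on that event; for the remaining (bounded, or slowly growing) $s$ there are only $M\cdot\poly(d)$ pairs with total mean $M\,t^2\,\poly(d)\ll M^2t^2$, so Markov already gives an $e^{-\Omega(d)}$ tail against the budget. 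The delicate point is to choose the threshold so that both tails are simultaneously $e^{-\Omega(d)}$: the counts $\binom{d-1}{s}\binom{|R|-d+1}{s}$ grow quickly in $s$, so the union-bound regime cannot begin too early, while the extra $\log(1/t)=\Theta(d)$ factor at $s=1$ forces care at the low end; if a clean threshold is elusive I would instead bound a moment $\expectation_R[S(R)^k]$ with $k=\Theta(d)$. Everything else --- the reduction, the reservoir/test split, the Chung--Erd\H{o}s step, and assembling the final probability --- is routine.
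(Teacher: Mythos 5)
Your skeleton matches the paper's: reduce $\sigma_d$ to a column-to-hyperplane distance (the upper bound in \cref{lem:RudelsonVershynin}), split the columns into a reservoir of size $\Theta(d)$ strictly between $d$ and $n$ and an exponentially-large test set, lower-bound the Gaussian measure of the union of slabs around reservoir hyperplanes by a second-moment/Bonferroni argument, and control pairwise slab intersections via the fact that, after conditioning on the shared columns, the two normals are independent and uniform on a sphere of dimension $d-|T\cap T'|$, so the angle anticoncentrates by the Archimedes-type bound (\cref{lem:parallelogram,lem:twobands,lem:archimedes}). The one place you diverge is exactly the step you flag as the main obstacle: you run Chung--Erd\H{o}s over \emph{all} $\binom{|R|}{d-1}$ subsets and must then stratify the pair sum $S(R)$ by the overlap parameter $s$, balancing a Markov bound for small $s$ against a union bound for large $s$. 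The paper sidesteps this entirely with a combinatorial preprocessing step: by a greedy Gilbert--Varshamov selection (\cref{lem:gilbert general}) it extracts a subfamily $\mathcal{T}$ of $(d-1)$-subsets that is still exponentially large ($|\mathcal{T}|=c_2^{d-1}$) but in which every pair shares at most $(1-c_1/2)(d-1)$ columns, i.e.\ every pair has $s=\Omega(d)$. Restricting the Bonferroni sum to $\mathcal{T}$ (which only decreases the union) means every pairwise term enjoys the tail $t^{-(c_1(d-1)/2-1)}$ from \cref{lem:twobands}, and a single union bound over the $\binom{|\mathcal{T}|}{2}$ pairs closes the high-probability statement with no threshold analysis, no $s=1$ logarithmic correction, and no higher moments (\cref{lem:union of bands(non-asym)}).

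For what it is worth, your fallback plan does appear to close: with threshold $s\geq\eps d$ for a sufficiently small constant $\eps=\eps(c_0)$, the sub-sum over $1\leq s<\eps d$ has expectation $O\bigl(Mt^2\,e^{O(\eps\log(1/\eps))d}\,\poly(d)\bigr)=e^{-\Omega(d)}$ (since $Mt=O(1)$ and $t$ is exponentially small), so Markov gives an $e^{-\Omega(d)}$ tail there, while for $s\geq\eps d$ the union bound over $\leq M\cdot e^{O(\eps\log(1/\eps))d}$ pairs with the $(2\delta_0)^{s/2}$ closeness bound succeeds for $\delta_0$ small enough depending on $\eps$ and $c_0$. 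But you have not carried out this bookkeeping, and it is genuinely more delicate than the paper's route; if you want a clean write-up, adopt the Gilbert--Varshamov selection --- it converts the whole second-moment step into a two-line union bound and is the idea your proposal is missing.
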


\begin{restatable}{theorem}{thmsigmamgeneraltwo}\label{thm:sigmam general 2}
	Let $A$ be an $d$-by-$n$ random matrix with iid.\ standard Gaussian entries with $d \geq 2$ and $1 < \frac{n}{d-1} \leq C_0$.\details{can use $(n-1)/(d-1)\leq C_0$ with similar proof}
	Then, there exist constants $C_1 > 1$, $0 < C_2 < 1$ (that depend only on $ C_0$) such that with probability at least $1-nC_2^{d-1}$,
	\[
	\min_{S \subseteq [n], \card{S}=d} \sigma_d (A_S) \geq \frac{1}{C_1^{d-1}}.
	\]
\end{restatable}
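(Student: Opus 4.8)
The plan is a union bound over all $d$-element column subsets $S\subseteq[n]$, exploiting the hypothesis $n/(d-1)\le C_0$ to keep the number of such subsets only singly exponential with a base controlled by $C_0$, against which I can pit a linear-in-$\eps$ small-ball estimate for a single point-to-span distance. The first step is to reduce $\sigma_d(A_S)$ to such distances: for any $d\times d$ matrix $M$ with columns $m_1,\dots,m_d$ and any $x$ with $\norm{x}=1$, picking $i$ with $\abs{x_i}\ge 1/\sqrt d$ gives $Mx=x_i m_i+w$ for some $w\in\linspan(m_j:j\ne i)$, so $\norm{Mx}\ge \abs{x_i}\dist\bigl(m_i,\linspan(m_j:j\ne i)\bigr)\ge \frac1{\sqrt d}\dist\bigl(m_i,\linspan(m_j:j\ne i)\bigr)$. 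Minimizing over $x$ and applying this to each submatrix yields
\[
\min_{\card S=d}\sigma_d(A_S)\ \ge\ \frac1{\sqrt d}\min\Bigl\{\dist\bigl(A_i,\linspan(A_j:j\in T)\bigr)\ :\ T\subseteq[n],\ \card T=d-1,\ i\in[n]\setminus T\Bigr\}.
\]

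Next I would estimate the small-ball probability of one such distance. Fix a pair $(i,T)$ with $i\notin T$ and $\card T=d-1$. The vector $A_i$ is independent of $(A_j)_{j\in T}$, and almost surely $V:=\linspan(A_j:j\in T)$ is $(d-1)$-dimensional, so $V^\perp$ is a line spanned by a unit vector $u$ that is a function of $(A_j)_{j\in T}$ alone. Hence $\dist(A_i,V)=\abs{\inner{A_i}{u}}$, which conditionally on $(A_j)_{j\in T}$ has the distribution of $\abs{N(0,1)}$; therefore $\prob\bigl(\dist(A_i,V)\le\eps\bigr)\le\eps$ for all $\eps>0$.

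Now a union bound over the at most $n\binom{n-1}{d-1}\le n\,2^{n-1}$ pairs $(i,T)$, taking $\eps=\sqrt d/C_1^{\,d-1}$, gives
\[
\prob\Bigl(\min_{\card S=d}\sigma_d(A_S)\le \tfrac{1}{C_1^{\,d-1}}\Bigr)\ \le\ n\,2^{n-1}\,\frac{\sqrt d}{C_1^{\,d-1}}.
\]
Since $n\le C_0(d-1)$ we have $2^{n-1}\le (2^{C_0})^{d-1}$, and since $\sqrt d\le(\sqrt2)^{d-1}$ for $d\ge2$, the right-hand side is at most $n\,\bigl(2^{\,C_0+1/2}/C_1\bigr)^{d-1}$. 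Choosing $C_1:=2^{\,C_0+1}$ and $C_2:=1/\sqrt2\in(0,1)$ makes this exactly $n\,C_2^{\,d-1}$, and picking $C_1$ any larger lets $C_2$ be an arbitrary constant in $(0,1)$.

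I do not expect a serious obstacle, since this is the ``easy'' direction, but it is worth pinpointing where the hypothesis is used: the bound $\binom{n-1}{d-1}\le 2^{n-1}\le(2^{C_0})^{d-1}$ is only useful because $n=O(d)$, and if $n/d$ were unbounded the base of this exponential would grow and no fixed $C_1$ could beat it — consistent with \cref{thm:sigmam general}, which shows the conclusion genuinely fails once $n/d$ exceeds a constant and $d\to\infty$. The only mildly technical points are the deterministic inequality $\sigma_d(M)\ge \frac1{\sqrt d}\min_i\dist(m_i,\linspan(\text{rest}))$ and the identification of $\dist(A_i,V)$ with $\abs{N(0,1)}$, both routine. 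Alternatively, one could union bound $\prob(\sigma_d(A_S)\le\eps)$ over the $\binom nd$ submatrices directly, using a standard $O(\eps\sqrt d)$ tail bound for the smallest singular value of a square Gaussian matrix; the accounting is the same.
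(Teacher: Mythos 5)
Your proof is correct and follows essentially the same route as the paper's: a union bound over all pairs (column, $(d-1)$-subset), a linear-in-$\eps$ small-ball estimate for the point-to-span distance (the paper phrases this as the Gaussian measure of a union of $\eps$-bands, via part 1 of \cref{lem:union of bands(non-asym)}), the hypothesis $n/(d-1)\le C_0$ to keep the number of subsets singly exponential, and the distance-to-$\sigma_d$ reduction of \cref{lem:RudelsonVershynin}, which you reprove rather than cite. You are in fact slightly more careful than the paper in tracking the $\sqrt d$ factor from that reduction by taking $\eps=\sqrt d/C_1^{d-1}$.
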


While \cref{thm:sigmam general,thm:sigmam general 2} are new as far as we know, there is a large body of work, partly motivated by compressive sensing, that studies questions related to them.
In that area one is generally interested in showing that all $d$-by-$k$ submatrices of $A$ are well-conditioned, say, $\sigma_1/\sigma_k$ is no more than a constant (the \emph{restricted isometry property} of Cand\`es and Tao \cite{DBLP:journals/tit/CandesT05,DBLP:journals/tit/CandesT06}).
This can only happen when $k$ is much smaller than $d$, a regime very different from our case $k=d$.
The standard analyses in compressive sensing as well as recent results such as \cite{cai2019asymptotic} do not seem to be able to clarify the behavior in our regime.

The idea of the proof of \cref{thm:sigmam general} (\cref{sec:matrixconditioning}) is the following:
Consider the case $n=2d$ for concreteness and aim to show that with constant probability one point is exponentially close to the span of $d-1$ others.
Let $\mathcal{S}$ be the family of sets of $d-1$ columns of $A$.
For $S \in \mathcal{S}$, let $\mathcal{B}_S$ be the set of points in $\RR^d$ within distance $\eps$ of $\linspan{S}$.
Let $V_\eps = \bigcup_{S \in \mathcal S} \mathcal{B}_S$.
It is enough to show that for $\eps = \hfrac{1}{c^d}$, $c>1$, the Gaussian volume $\gv (V_\eps)$ is at least a constant.
We do this by lower bounding it using the \emph{first two terms of the inclusion-exclusion principle (Bonferroni inequality)}:
\[
	\gv(V_\eps) \geq \sum_S \gv (\mathcal{B}_S) - \frac{1}{2} \sum_{S,T: S\neq T} \gv(\mathcal{B}_S \cap \mathcal{B}_T).
\] 
Note that $\mathcal{B}_S \cap \mathcal{B}_T$ can be large if $S$ and $T$ share many columns.
To deal with this difficulty, replace $\mathcal S$ above with a large subfamily $\mathcal{T} \subseteq \mathcal S$ of subsets of columns where each pair of subsets has few columns in common by picking separated subsets \emph{greedily (Gilbert-Varshamov bound)}.
See \cite{razborov1988bounded}, \cite[Lemma 19.3]{MR2865719} for another instance of Bonferroni's inequality with almost pairwise independence.

While \cref{thm:sigmam general,thm:sigmam general 2} are results about random matrices, they have direct implications in the analysis of algorithms: 
In \cref{sec:tensor} we discuss how \cref{thm:sigmam general} condtions the applicability of the robustness of tensor decomposition result by Bhaskara, Charikar and Vijayaraghavan \cite{bhaskara2014uniqueness}.
In \cref{sec:deltadistance} we discuss how \cref{thm:sigmam general} conditions the applicability of results about the complexity of the simplex method and the diameter of polytopes in 
\cite{brunsch2013finding,%
DBLP:conf/stacs/BrunschGR15,%
DBLP:journals/dcg/DadushH16,%
DBLP:journals/mp/EisenbrandV17%
}.

\ifstoc
See \cref{sec:preliminaries} for preliminaires such as basic results and definitions.
\else
\section{Preliminaries}\label{sec:preliminaries}

\subsection{Notations}

For $v \in \RR^d$ and $i \in [d]$, let $v_{-i}$ denote vector $v$ with coordinate $v_i$ removed, that is $v_{-i} := (v_1, \dotsc, v_{i-1},v_{i+1},\dotsc, v_d)$.
If $v \neq 0$, let $\hat{v} := v/\norm{v}_2$.
Let $B(x,\eps) := \{y \in \RR^d \suchthat \norm{y-x} _2\leq \eps\}$.
Let $\sphere^{d-1}$ denote the $(d-1)$-dimensional unit sphere in $\RR^d$.
For $v \in \sphere^{d-1}$, denote the spherical cap centered at $v$ with angle $\alpha$ as $\mathcal{C}_\alpha(v) := \{x \in \sphere^{d-1} \suchthat \inner{v}{x} \geq \cos \alpha\}$.
For $A \subseteq \RR^d$, let $A_\eps := \{x \in \RR^d \suchthat \dist(x,A) \leq \eps \}$, $A_{-\eps} := \{x \in \RR^d \suchthat B(x,\eps) \subset A\}$. 
Let $\mathcal{G}$ denote the standard multivariate Gaussian probability measure.
For random variables or distributions $X,Y$, notation $X \deq Y$ states that $X$ and $Y$ have the same distribution.

\subsection{Noncentral Chi-Square Distribution}
We first recall the definition of noncentral chi-square distribution.
\begin{definition}\label{def:noncentral chi-square}
	Let $Y_1, Y_2, \dotsc, Y_k$ be independent Gaussian random variables with means $\mu_i$ and unit variance.
	Then the random variable $\sum_{i=1}^k Y_i^2$ is distributed according to the noncentral chi-square distribution with $k$ degrees of freedom and noncentrality parameter $\lambda = \sum_{i=1}^k \mu_i^2$. 
	The probability density function of noncentral chi-square distribution is given by
	\[
		f_Y(x;k,\lambda) = \sum_{i=0}^\infty \frac{e^{-\lambda /2}(\lambda /2)^i}{i!}f_{Z_{k+2i}}(x),
	\]
	where $Z_q$ is distributed as (central) chi-square with $q$ degrees of freedom, denoted as $\chi^2_q$.
\end{definition}

We show an anti-concentration inequality of noncentral chi-square distribution by comparing to (central) chi-square distribution.
\begin{lemma}\label{lem:noncentral chi-square}
	Let $\mu \in \RR$.
    Let $X_i \sim \mathcal{N}(0,\sigma^2)$, $Y_i \sim \mathcal{N}(\mu,\sigma^2)$, $i \in \left[k\right]$, be independent.
	Then 
    \[ 
	\mathbb{P}\left(\sum_{i=1}^{k}X_i^2 \geq t^2 \right) \leq	\mathbb{P} \left( \sum_{i=1}^{k}Y_i^2 \geq t^2 \right).
	\] 
\end{lemma}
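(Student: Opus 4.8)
The plan is to recognize the inequality as a stochastic-domination statement and prove it from the Poisson-mixture description of the noncentral chi-square recorded in \cref{def:noncentral chi-square}. First I would normalize: dividing through by $\sigma^2$, one has $\sum_{i=1}^k X_i^2 \deq \sigma^2 W$ with $W \sim \chi^2_k$, and $\sum_{i=1}^k Y_i^2 \deq \sigma^2 W'$ with $W'$ noncentral chi-square with $k$ degrees of freedom and parameter $\lambda = k\mu^2/\sigma^2$ (in fact $\lambda = \sum_i \mu_i^2/\sigma^2$, so the argument would not even need the means to be equal). Hence it suffices to show that for every $s \geq 0$ and every $\lambda \geq 0$,
\[
\prob\bigl(\chi^2_k \geq s\bigr) \leq \prob\bigl(\chi^2_k(\lambda) \geq s\bigr),
\]
i.e., that the noncentral $\chi^2_k(\lambda)$ stochastically dominates the central $\chi^2_k$.

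To prove this domination I would use that, by \cref{def:noncentral chi-square}, a $\chi^2_k(\lambda)$ variable has the law of $Z_{k+2N}$, where $N$ is Poisson with mean $\lambda/2$ and, given $N = i$, the variable $Z_{k+2i} \sim \chi^2_{k+2i}$ is drawn independently of $N$. For each fixed $i \geq 0$ we have $Z_{k+2i} \deq Z_k + Z'_{2i}$ with $Z_k \sim \chi^2_k$ and $Z'_{2i} \sim \chi^2_{2i}$ independent and nonnegative, so $\prob(Z_{k+2i} \geq s) \geq \prob(Z_k \geq s)$. Averaging over the law of $N$ gives
\[
\prob\bigl(\chi^2_k(\lambda) \geq s\bigr) = \sum_{i=0}^\infty \prob(N=i)\,\prob(Z_{k+2i} \geq s) \geq \sum_{i=0}^\infty \prob(N=i)\,\prob\bigl(\chi^2_k \geq s\bigr) = \prob\bigl(\chi^2_k \geq s\bigr),
\]
which is exactly what is needed.

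If one prefers to avoid invoking the mixture density, an equivalent self-contained route is available: write $\sum_i X_i^2 \deq \norm{g}^2$ and $\sum_i Y_i^2 \deq \norm{g+m}^2$ for $g \sim \mathcal{N}(0,\sigma^2 \mathbf{I}_k)$ and $m = \mu\ones$, split $g$ into the one-dimensional component along $\hat m$ and the orthogonal component $g^\perp$, and condition on $g^\perp$. This reduces everything to the case $k=1$: for $W \sim \mathcal{N}(0,\sigma^2)$, $a \in \RR$, $s \geq 0$, one needs $\prob\bigl((W+a)^2 \geq s\bigr) \geq \prob(W^2 \geq s)$, equivalently $\prob\bigl(W \in (-a-\sqrt s,\,-a+\sqrt s)\bigr) \leq \prob\bigl(W \in (-\sqrt s,\,\sqrt s)\bigr)$. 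This holds because an interval of fixed length $2\sqrt s$ carries the most mass of the symmetric unimodal $\mathcal{N}(0,\sigma^2)$ density when it is centered at the mean: the map $b \mapsto \int_b^{b+2\sqrt s} f_W$ has derivative $f_W(b+2\sqrt s) - f_W(b)$, which is $\geq 0$ for $b \leq -\sqrt s$ and $\leq 0$ for $b \geq -\sqrt s$.

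I do not expect a genuine obstacle here: the statement is classical (noncentral $\chi^2$ stochastically dominates central $\chi^2$). The only points requiring care are the normalization by $\sigma$ and phrasing the reduced inequality for an arbitrary noncentrality $\lambda \geq 0$ rather than one tied to a specific mean, so that the conditioning/averaging arguments apply verbatim.
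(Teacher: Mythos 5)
Your primary argument is exactly the paper's proof: normalize by $\sigma$, invoke the Poisson-mixture representation of the noncentral chi-square from \cref{def:noncentral chi-square}, use that $Z_{k+2i}$ stochastically dominates $Z_k$ term by term, and average over the Poisson weights. The alternative one-dimensional reduction you sketch is also sound (it is essentially \cref{lem: gaussian conc} applied after conditioning), but the main route is correct and identical in approach to the paper's.
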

\begin{proof}
	The proof follows directly from the density function of noncentral chi-square distribution and some basic facts about the chi-square distribution.
	We consider random variables $Y_i/\sigma \sim \mathcal{N}(\mu/ \sigma,1)$.
	Then $\sum_{i=1}^{k} Y_i^2/ \sigma ^2$ is distributed as noncentral chi-square with $k$ degrees of freedom and noncentrality parameter $k\mu^2  /\sigma^2$. 
	From \cref{def:noncentral chi-square},
	\begin{align*}
		\mathbb{P}\left(\sum_{i=1}^k \frac{Y_i^2}{\sigma^2} \ge \frac{t^2}{\sigma^2}\right) &= \sum_{i=0}^\infty \frac{e^{-k\mu^2  /2\sigma^2}(k\mu^2  /2\sigma^2)^i}{i!}\mathbb{P} \left( Z_{k+2i} \ge \frac{t^2}{\sigma^2} \right)\\
		&\ge \sum_{i=0}^\infty \frac{e^{-k\mu^2  /2\sigma^2}(k\mu^2  /2\sigma^2)^i}{i!}\mathbb{P} \left(Z_{k} \ge \frac{t^2}{\sigma^2} \right)\\
		& = \mathbb{P} \left(Z_{k} \ge \frac{t^2}{\sigma^2} \right) \\
		& = \mathbb{P} \left(\sum_{i=1}^{k}X_i^2 \geq t^2 \right),
	\end{align*}
	where $Z_k \sim \chi^2_{k}$. The inequality comes from the fact that chi-square random variable with $(k+i)$ degrees of freedom is equal in distribution to the sum of a chi-square random variable with $k$ degrees of freedom and squares of $i$ independent standard Gaussian random variables, so that $Z_{k+2i}$ has larger tail than $Z_k$.
\end{proof}

The following lemma provides a comparison inequality between the ratio of noncentral chi-square random variables and chi-square random variables, which is used in the proof of \cref{lem:twobandsaff}.

\begin{lemma}\label{lem: monotonicity}
	Let $\mu \in \RR$.
	Let $X_0, X_i, Y_0 \sim \mathcal{N}(0,\sigma^2)$, $Y_i \sim \mathcal{N}(\mu,\sigma^2)$, $i \in \left[k\right]$ and be independent.
	Then for any $t \in (0,1)$,
	\[ 
	\mathbb{P}\left(\frac{Y_0^2}{Y_0^2 + \sum_{1}^{n} Y_i^2} \geq t^2\right) \leq	\mathbb{P}\left(\frac{X_0^2}{X_0^2 + \sum_{1}^{n} X_i^2} \geq t^2\right).
	\] 
\end{lemma}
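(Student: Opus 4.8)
The plan is to reduce the statement to the anti-concentration comparison already established in \cref{lem:noncentral chi-square}. Because $t \in (0,1)$, the quantity $s := (1-t^2)/t^2$ is a well-defined positive number, and for nonnegative reals $u,v$ one has the elementary equivalence $\frac{u}{u+v} \geq t^2 \iff v \leq s\,u$. Applying this with $u = Y_0^2$, $v = \sum_i Y_i^2$ (and likewise with the $X$'s), it suffices to prove
\[
\prob\Bigl(\sum_i Y_i^2 \leq s\, Y_0^2\Bigr) \leq \prob\Bigl(\sum_i X_i^2 \leq s\, X_0^2\Bigr).
\]

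First I would condition on the value of $Y_0$ and use that $Y_0$ is independent of $(Y_1,\dotsc,Y_k)$: for each fixed $y \in \RR$,
\[
\prob\Bigl(\sum_i Y_i^2 \leq s y^2 \giventhat Y_0 = y\Bigr) = \prob\Bigl(\sum_i Y_i^2 \leq s y^2\Bigr) = 1 - \prob\Bigl(\sum_i Y_i^2 \geq s y^2\Bigr),
\]
the last equality up to the null set where $\sum_i Y_i^2 = s y^2$. Now \cref{lem:noncentral chi-square}, applied with its threshold $t^2$ replaced by $s y^2$, gives $\prob(\sum_i X_i^2 \geq s y^2) \leq \prob(\sum_i Y_i^2 \geq s y^2)$, and hence $\prob(\sum_i Y_i^2 \leq s y^2) \leq \prob(\sum_i X_i^2 \leq s y^2)$ for every $y$.

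Finally I would integrate this pointwise inequality against the common distribution of $Y_0$ and $X_0$ (both are $\mathcal N(0,\sigma^2)$, so $Y_0 \deq X_0$); since the right-hand side conditional probability depends on $X_0$ only through $X_0^2$, the integrated inequality is exactly the displayed claim, and undoing the first reduction finishes the proof. I do not expect a genuine obstacle: all the probabilistic content is carried by \cref{lem:noncentral chi-square}, and the only points needing (minor) care are the equivalence $\frac{u}{u+v}\geq t^2 \iff v \leq s u$ — which is precisely where the hypothesis $t\in(0,1)$ enters, ensuring $s>0$ so that \cref{lem:noncentral chi-square} is invoked with a legitimate positive threshold — and the harmless passage between strict and non-strict inequalities, justified by absolute continuity of the chi-square laws.
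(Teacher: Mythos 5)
Your proposal is correct and follows essentially the same route as the paper's proof: both rewrite the event $\frac{u}{u+v}\ge t^2$ as $v\le(1/t^2-1)u$, condition on the value of $Y_0$ (the paper conditions on $Y_0^2$ via its density, which is the same computation), apply \cref{lem:noncentral chi-square} pointwise with threshold $(1/t^2-1)y$, and integrate against the common law of $X_0$ and $Y_0$. No meaningful difference.
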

\begin{proof}
	Let $f$ denote the probability density function of $X_0^2$ and $Y_0^2$.
	By the law of total expectation,
	\begin{align*}
	\mathbb{P}\left( \frac{Y_0^2}{Y_0^2 + \sum_{i=1}^{n} Y_i^2} \geq t^2 \right) 
	& = \int_{0}^{\infty} \mathbb{P}\left(\frac{y}{y + \sum_{i=1}^{n} Y_i^2} \geq t^2\right) f(y) dy\\
	& =  \int_{0}^{\infty} \mathbb{P}\left(\sum_{i=1}^{n} Y_i^2 \leq (1/t^2-1) y\right) f(y) dy\\
	& \leq \int_{0}^{\infty} \mathbb{P}\left(\sum_{i=1}^{n} X_i^2 \leq (1/t^2-1) x \right) f(x)dx \quad \text{(\cref{lem:noncentral chi-square})} \\
	& = \int_{0}^{\infty} \mathbb{P}\left(\frac{x}{x + \sum_{i=1}^{n} X_i^2} \geq t^2 \right) f(x) dx\\
	& = \mathbb{P}\left(\frac{X_0^2}{X_0^2 + \sum_{i=1}^{n} X_i^2} \geq t^2 \right).
	\end{align*}
\end{proof}

\subsection{Concentration and tail inequalities}


\begin{lemma}\label{lem: gaussian conc}
	Let $X \sim \mathcal{N}(0,\sigma^2)$, $Y \sim \mathcal{N}(c,\sigma^2)$, $c \in \RR$, then for any $t>0$,
	\[
	\mathbb{P}(\norm{Y}_2 < t) \leq \mathbb{P}(\norm{X}_2 < t).
	\]
\end{lemma}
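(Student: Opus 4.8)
The plan is to recognize this as the scalar case of the anti-concentration inequality \cref{lem:noncentral chi-square} that was just proved. Since $X$ and $Y$ are real-valued, $\norm{X}_2 = \abs{X}$ and $\norm{Y}_2 = \abs{Y}$, so the events in question are $\{X^2 < t^2\}$ and $\{Y^2 < t^2\}$. Applying \cref{lem:noncentral chi-square} with $k = 1$ and $\mu = c$ gives $\prob(X^2 \geq t^2) \leq \prob(Y^2 \geq t^2)$; taking complements (the boundary event $\{Y^2 = t^2\}$ has probability zero) yields $\prob(\norm{Y}_2 < t) \leq \prob(\norm{X}_2 < t)$ immediately. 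So on this route the whole proof is one line of bookkeeping.

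If one prefers a self-contained argument, here is the route I would take instead. By the symmetry $c \leftrightarrow -c$ I would first reduce to the case $c \geq 0$. Writing $\phi$ for the $\mathcal{N}(0,\sigma^2)$ density and substituting $x \mapsto x + c$, one gets $\prob(\norm{Y}_2 < t) = \int_{-t-c}^{t-c}\phi(x)\,dx =: g(c)$, and since $\prob(\norm{X}_2 < t) = g(0)$ it suffices to show that $g$ is nonincreasing on $[0,\infty)$. Differentiating with the Leibniz rule, $g'(u) = \phi(t+u) - \phi(\abs{t-u})$ (using that $\phi$ is even), and for $u \geq 0$ one has $t + u \geq \abs{t-u} \geq 0$, so the unimodality of $\phi$ (it is nonincreasing in the absolute value of its argument) gives $g'(u) \leq 0$. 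Hence $g(c) \leq g(0)$, which is the claim.

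I do not expect any genuine obstacle here: the statement is elementary, and whichever route one chooses, the only care needed is the reduction to a centered Gaussian (a change of variables together with the $c \leftrightarrow -c$ symmetry) and the standard fact that a symmetric unimodal density puts the most mass on a centered interval of fixed length.
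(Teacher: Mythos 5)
Your proposal is correct, and both of your routes work. Your primary route --- specializing \cref{lem:noncentral chi-square} to $k=1$, $\mu=c$ and taking complements (the events $\{Y^2<t^2\}$ and $\{Y^2\ge t^2\}$ are exact complements, so no boundary caveat is even needed) --- is a genuinely different derivation from the paper's: the paper proves this lemma by a direct computation, substituting $x\mapsto x+c$ in the Gaussian integral over $[-t,t]$ and splitting the shifted interval so that the difference $\mathbb{P}(\abs{Y}<t)-\mathbb{P}(\abs{X}<t)$ becomes $\int_{t}^{t+c}\phi-\int_{t-c}^{t}\phi\le 0$ by monotonicity of the density on $[0,\infty)$. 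There is no circularity in your reduction, since \cref{lem:noncentral chi-square} is proved independently from the series expansion of the noncentral chi-square density; what the reduction buys is a one-line proof, at the cost of invoking machinery (the noncentral chi-square density formula and a stochastic-domination fact for $\chi^2_{k+2i}$ versus $\chi^2_k$) that is heavier than the statement deserves. Your second, self-contained route is essentially the paper's argument in differential form: your inequality $\phi(t+u)\le\phi(\abs{t-u})$ for $u\ge 0$ is exactly the pointwise version of the paper's comparison of the two boundary integrals, so the two proofs differ only in whether the monotonicity of $g$ is established by differentiation or by an explicit decomposition of intervals.
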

\begin{proof}
	Without loss of generality, we may assume $c>0$.
	Then
	\begin{align*}
		\mathbb{P}(\norm{Y}_2 < t) &=\int_{-t}^{t}e^{\frac{-(x-c)^2}{2\sigma^2}}dx\\
		&=\int_{-t-c}^{t-c}e^{\frac{-x^2}{2\sigma^2}}dx\\
		&=\int_{-t-c}^{-t}e^{\frac{-x^2}{2\sigma^2}}dx + \int_{-t}^{t}e^{\frac{-x^2}{2\sigma^2}}dx - \int_{-t-c}^{t}e^{\frac{-x^2}{2\sigma^2}}dx\\
		& = \mathbb{P}\bigl(\norm{X}_2 < t \bigr) + \left(\int_{t}^{t+c}e^{\frac{-x^2}{2\sigma^2}}dx-\int_{t-c}^{t}e^{\frac{-x^2}{2\sigma^2}}dx \right)\\
		&\leq \mathbb{P}\bigl(\norm{X}_2 < t\bigr).
	\end{align*}
\end{proof}

\begin{lemma}[\cite{laurent2000}]\label{lem:laurent}
	Let $(X_1, \dotsc, X_n)$ be iid.\ standard Gaussian variables. Let $\alpha_1, \dotsc, \alpha_n$ be nonnegative. Let $Z= \sum_{i=1}^{n} \alpha_i(X_i^2-1)$. Then, the following inequalities hold for any positive $t$:
	\begin{align*}
		\mathbb{P} \bigl( Z \ge 2\norm{\alpha}_2\sqrt{t}+2\norm{\alpha}_{\infty}t \bigr) &\le \exp(-t),\\
		\mathbb{P} \bigl( Z \le -2\norm{\alpha}_2\sqrt{t} \bigr) &\le \exp(-t).
	\end{align*}
\end{lemma}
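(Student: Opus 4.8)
The plan is to use the Cram\'er--Chernoff (exponential moment) method, which is the standard route for bounds of this type (this is the Laurent--Massart bound). I may assume $\alpha \neq 0$, since otherwise $Z = 0$ and both inequalities are trivial. First I would write down the moment generating function: by independence of the $X_i$ and the scalar identity $\mathbb{E}\bigl[e^{s(X_1^2-1)}\bigr] = e^{-s}(1-2s)^{-1/2}$, valid for $s < 1/2$, we get for every $0 \le \lambda < 1/(2\norm{\alpha}_\infty)$
\[
\log \mathbb{E}\bigl[e^{\lambda Z}\bigr] = \sum_{i=1}^n \Bigl( -\tfrac12 \log(1 - 2\lambda\alpha_i) - \lambda\alpha_i \Bigr).
\]

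For the upper tail I would control each summand with the elementary inequality $-\tfrac12\log(1-u) - \tfrac u2 \le u^2/\bigl(4(1-u)\bigr)$ on $u \in [0,1)$, which follows by comparing power series coefficients ($\tfrac12\sum_{k\ge2} u^k/k \le \tfrac14\sum_{k\ge2} u^k$). Taking $u = 2\lambda\alpha_i$, then bounding $\alpha_i \le \norm{\alpha}_\infty$ in each denominator and summing $\sum_i \alpha_i^2 = \norm{\alpha}_2^2$, this gives $\log\mathbb{E}\bigl[e^{\lambda Z}\bigr] \le \lambda^2\norm{\alpha}_2^2/(1 - 2\lambda\norm{\alpha}_\infty)$. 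Markov's inequality then yields $\mathbb{P}(Z \ge s) \le \exp\bigl(-\lambda s + \lambda^2\norm{\alpha}_2^2/(1-2\lambda\norm{\alpha}_\infty)\bigr)$ for every admissible $\lambda$. With $s = 2\norm{\alpha}_2\sqrt t + 2\norm{\alpha}_\infty t$ I would plug in the explicit multiplier $\lambda = \sqrt t/\bigl(\norm{\alpha}_2 + 2\norm{\alpha}_\infty\sqrt t\bigr)$, which lies in $(0, 1/(2\norm{\alpha}_\infty))$ because $\norm{\alpha}_2 > 0$; a direct simplification---cleanest after substituting $\mu = \lambda/(1-2\lambda\norm{\alpha}_\infty)$, which turns the exponent into $-\mu\bigl(s - \mu\norm{\alpha}_2^2\bigr)/(1 + 2\mu\norm{\alpha}_\infty)$ and makes $\mu = \sqrt t/\norm{\alpha}_2$---shows the exponent is exactly $-t$, which is the first inequality.

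For the lower tail the same scheme works with $-\lambda$ in place of $\lambda$, and here no constraint on the multiplier is needed: from $\log\mathbb{E}[e^{-\lambda Z}] = \sum_i\bigl(\lambda\alpha_i - \tfrac12\log(1 + 2\lambda\alpha_i)\bigr)$ and the scalar bound $u - \tfrac12\log(1+2u) \le u^2$ for $u \ge 0$ (immediate since $\phi(u) := u^2 - u + \tfrac12\log(1+2u)$ has $\phi(0) = \phi'(0) = 0$ and $\phi'' \ge 0$ on $[0,\infty)$), I get $\log\mathbb{E}[e^{-\lambda Z}] \le \lambda^2\norm{\alpha}_2^2$ for all $\lambda > 0$. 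Markov's inequality gives $\mathbb{P}(Z \le -s) \le \exp(-\lambda s + \lambda^2\norm{\alpha}_2^2)$, optimized at $\lambda = s/(2\norm{\alpha}_2^2)$ to $\exp\bigl(-s^2/(4\norm{\alpha}_2^2)\bigr)$, and $s = 2\norm{\alpha}_2\sqrt t$ makes this $e^{-t}$. I do not expect a real obstacle: the whole argument is routine, and the only point requiring care is the bookkeeping of constants---in particular choosing the multiplier $\lambda$ in the upper-tail step so that the Chernoff exponent lands precisely on $-t$, together with verifying the two one-variable inequalities used to bound $\log\mathbb{E}[e^{\pm\lambda(X_1^2-1)}]$.
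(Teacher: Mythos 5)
Your proof is correct. The paper gives no proof of this lemma---it is quoted directly from Laurent and Massart \cite{laurent2000}---and your Cram\'er--Chernoff argument, with the scalar bounds $-\tfrac12\log(1-u)-\tfrac u2\le u^2/(4(1-u))$ and $u-\tfrac12\log(1+2u)\le u^2$ controlling the log-moment generating function, is essentially the standard proof from that reference.
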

\details{
	Let $(X_1, \dotsc, X_d)$ be iid.\ standard Gaussian variables. 
	Let $X= \sum_{i=1}^{d} X_i^2$. Then, the following inequalities hold for any positive $t$:
	\begin{align*}
		\mathbb{P} \bigl( X \le d -2\sqrt{dt} \bigr) &\le \exp(-t).
	\end{align*}

}

\subsection{Gilbert-Varshamov bound}

\begin{lemma}\label{lem:gilbert general}
	Let $A(n,t,w)$ be the maximum number of binary $n$-vectors with exactly $w$ ones and pairwise Hamming distance greater than or equal to $t$.
	Then for any $c_0>1$, there exist constants $c_1 > 0$ and $c_2 > 1$ (that depend only on $c_0$) such that for all $d \geq 1$ and $n/d \geq c_0$ we have $A(n, c_1 d, d) \geq c_2^d$.
\end{lemma}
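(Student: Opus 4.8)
The plan is to reduce the statement to a greedy packing (Gilbert--Varshamov) argument for constant-weight binary codes and then to control the relevant ``ball size'' with a hypergeometric tail bound. Identify a weight-$d$ binary $n$-vector with the $d$-subset $S \subseteq [n]$ of its support; two such vectors with supports $S,T$ are at Hamming distance $2(d-\card{S\cap T})$, so requiring pairwise Hamming distance $\geq c_1 d$ is the same as requiring $\card{S\cap T}\leq d(1-c_1/2)$ for every pair. Build such a family greedily: repeatedly add a $d$-subset that is at Hamming distance $\geq c_1 d$ from all previously chosen ones, until none can be added. If $\mathcal{C}$ is the resulting maximal family, then every $d$-subset is within Hamming distance $< c_1 d$ of some member of $\mathcal{C}$ (a member is within distance $0$ of itself, and a non-member could otherwise be added), so $\binom{n}{d}\leq\card{\mathcal{C}}\cdot N$, where $N := \#\{T : T \text{ a } d\text{-subset},\ \text{dist}(S_0,T) < c_1 d\}$ is the same for every $S_0$. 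Counting such $T$ by $j := \card{S_0\setminus T}$ gives $N = \sum_{0\le j < c_1 d/2}\binom{d}{j}\binom{n-d}{j}$, and therefore $A(n,c_1 d, d) \geq \card{\mathcal{C}} \geq \binom{n}{d}/N$.

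It remains to show $N/\binom{n}{d}$ is exponentially small in $d$ for a suitable constant $c_1 = c_1(c_0)$. The key observation is the Vandermonde identity $\sum_{j\ge 0}\binom{d}{j}\binom{n-d}{j} = \binom{n}{n-d} = \binom{n}{d}$, which says exactly that $p_j := \binom{d}{j}\binom{n-d}{j}/\binom{n}{d}$ is the probability mass function of a hypergeometric random variable $H$ --- the number of elements of a fixed $d$-subset among $n-d$ samples drawn without replacement from $[n]$ --- with mean $\mu := \mathbb{E} H = d(n-d)/n = d(1 - d/n)$. Since $n/d \geq c_0$, we have $\mu \geq d(1 - 1/c_0)$. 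Choosing $c_1 := 1 - 1/c_0 > 0$, the threshold satisfies $c_1 d/2 \leq \mu/2$, so $N/\binom{n}{d} = \prob(H < c_1 d/2) \leq \prob(H \leq \mu/2)$. Applying the multiplicative Chernoff lower-tail bound for the hypergeometric distribution (valid because $H$ is at least as concentrated as the binomial with the same mean) with deviation parameter $1/2$ gives $\prob(H \leq \mu/2) \leq e^{-\mu/8} \leq e^{-(1-1/c_0)d/8}$. Hence $A(n, c_1 d, d) \geq \binom{n}{d}/N \geq e^{(1-1/c_0)d/8}$, and the lemma holds with $c_1 = 1 - 1/c_0$ and $c_2 = e^{(1-1/c_0)/8} > 1$, both depending only on $c_0$.

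The main obstacle is obtaining a tail bound that is uniform in both $d$ and the aspect ratio $\rho = n/d$, so that the final constants depend only on $c_0$: for fixed $c_0$ the value $\rho$ can be arbitrarily large, which is why the additive Hoeffding bound $\prob(H \leq \mu - t) \leq e^{-2t^2/(n-d)}$ is too weak (its exponent degrades as $n-d$ grows), and the multiplicative form --- whose bound $e^{-\mu/8}$ depends on $\rho$ only through $\mu \geq (1-1/c_0)d$ --- is needed. With it, no separate treatment of small $d$ is required: the derivation is valid for every $d \geq 1$, and in any case when $c_1 d \leq 2$ the constraint is vacuous, since distinct weight-$d$ vectors automatically have Hamming distance $\geq 2$ and then $A(n, c_1 d, d) = \binom{n}{d} \geq c_0^d$. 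If one prefers to avoid quoting hypergeometric concentration, the same conclusion follows elementarily: the ratio of consecutive terms $\binom{d}{j+1}\binom{n-d}{j+1}\big/\bigl(\binom{d}{j}\binom{n-d}{j}\bigr) = (d-j)(n-d-j)/(j+1)^2$ exceeds $2$ for all $j$ up to some $\Theta(d)$ once $c_1$ is a small enough constant, so $N$ is at most twice its largest term, and, continuing the doubling up to the mode, that term is at most $2^{-\Omega(d)}\binom{n}{d}$ (using $\binom{d}{j}\binom{n-d}{j} \leq \binom{n}{d}$ for each $j$), giving $N \leq 2^{-\Omega(d)}\binom{n}{d}$ as before.
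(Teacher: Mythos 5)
Your proof is correct, and it shares the paper's Gilbert--Varshamov skeleton: both arguments take a maximal (greedily built) family of weight-$d$ vectors with pairwise distance $\geq c_1 d$ and lower-bound its size by $\binom{n}{d}$ divided by the volume of a Hamming ball of radius $c_1 d$. The genuine difference is in how that ball volume is controlled. The paper replaces the constant-weight ball by the full Hamming ball, bounds it by $\sum_{k\le t}\binom{n}{k}\le (ne/t)^t$, compensates with $\binom{n}{d}\ge (n/d)^d$, and then chooses $c_1$ implicitly via a monotonicity/limit argument so that $(c_0 e/c_1)^{c_1}<c_0$; everything is elementary but the constants are not explicit. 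You instead keep the exact constant-weight ball $\sum_{j<c_1 d/2}\binom{d}{j}\binom{n-d}{j}$, observe via Vandermonde that the summands normalized by $\binom{n}{d}$ form a hypergeometric distribution with mean $\mu=d(1-d/n)\ge(1-1/c_0)d$, and kill the lower tail with a multiplicative Chernoff bound (legitimate for sampling without replacement by Hoeffding's convex-ordering reduction to the binomial). This buys explicit constants $c_1=1-1/c_0$ and $c_2=e^{(1-1/c_0)/8}$, with a larger admissible $c_1$ than the paper's as $c_0\downarrow 1$ (though downstream, in \cref{lem:union of bands(non-asym)}, any positive constant suffices); the cost is an appeal to hypergeometric concentration, which your term-ratio fallback would remove if self-containedness is preferred. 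Your treatment of the edge cases is also sound: the derivation is valid for all $d\ge 1$ since $c_1 d/2\le\mu/2$ always holds, and when $c_1 d\le 2$ the distance constraint is vacuous for distinct constant-weight vectors.
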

\begin{proof}
	Pick vectors greedily (Gilbert-Varshamov bound) to get, for integral $t$:
	\[
	A(n,t,w) \geq \frac{\binom{n}{w}}{B(n,t)},
	\]
	where $B(n,t)$ is the number of binary vectors at Hamming distance less than or equal to $t$ from the zero vector.
	We have $B(n,t) = \sum_{k=0}^{t} \binom{n}{k} \leq \left(\frac{ne}{t}\right)^t$ (see footnote\footnote{
	\(\sum_{k=0}^{t} \binom{n}{k} \leq \sum_{k=0}^{t} \frac{n^k}{k!} = \sum_{k=0}^{t} \frac{t^k}{k!}(n/t)^k \leq e^t(n/t)^t \).
}).
	Note that this last inequality is valid also for fractional $0<t\leq n$ using the fact that $(a/b)^b$ is increasing in $b$ for $a>0$, $0 < b \leq a/e$.\details{its derivative is $(a/b)^b(\log(a/b)-1)$}
	Thus, for any $0< c < 1$ and $d \geq 1$ we have\details{using $\binom{n}{k}\geq (n/k)^k$ for $k\geq 1$}
	\begin{align}\label{eqn:gilbert}
	A(n,cd,d) \geq 
	\frac{\binom{n}{d}}{ \left(\frac{ne}{cd}\right)^{cd} } 
	\geq \frac{ (n/d)^d }{ \left(\frac{ne}{cd}\right)^{cd} }
	= \frac{ (n/d)^{d(1-c)} }{ \left(\hfrac{e}{c}\right)^{cd} }
	\geq \frac{ c_0^{d(1-c)} }{ \left(\hfrac{e}{c}\right)^{cd} }
	= \left( \frac{ c_0 }{ \left(\hfrac{c_0e}{c}\right)^{c} } \right)^d
	\end{align}
	We have $\lim_{c \to 0^+} (a/c)^{c} = 1$ and $(a/c)^{c}$ is increasing again for $a>0$, $0 \leq c \leq a/e$.
	Given this, choose $c_1 \in (0,1)$ such that $(c_0 e/c_1)^{c_1} < c_0$.
	Let $c_2 := \frac{ c_0 }{ \left(\hfrac{c_0e}{c_1}\right)^{c_1} } > 1$.
	We have $A(n,c_1 d, d) \geq c_2^d$ for $d\geq 1$.
	The claim follows.
\end{proof}

\subsection{Generalization of Archimedes' formula}
\begin{lemma}\label{lem:archimedes}
	Let $d \geq 3$.
	Let $U$ be a uniformly random $d$-dimensional unit vector.
	Then $(U_1, \dotsc, U_{d-2})$ is uniform in $B^{d-2}$ and $\Pr(\norm{(U_1, \dotsc, U_{d-2})} \leq t) = t^{d-2}$.
\end{lemma}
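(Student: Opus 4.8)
The plan is to realize the uniform point on the sphere via a Gaussian and reduce everything to a one-dimensional (radial) computation. Write $U \deq G/\norm{G}$ where $G=(G_1,\dotsc,G_d)$ has iid.\ standard Gaussian coordinates, so that $(U_1,\dotsc,U_{d-2}) \deq (G_1,\dotsc,G_{d-2})/\norm{G}$. The law of $G$ is invariant under any orthogonal transformation $R$ acting on the first $d-2$ coordinates (and fixing the last two), and such an $R$ commutes both with dividing by $\norm{G}$ and with projecting onto the first $d-2$ coordinates; hence the distribution of $(U_1,\dotsc,U_{d-2})$ is invariant under the orthogonal group of $\RR^{d-2}$. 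A rotationally invariant distribution on $\RR^{d-2}$ is determined by the law of its Euclidean norm (conditioned on the norm, it is the uniform measure on the corresponding sphere). So it suffices to compute $\prob\bigl(\norm{(U_1,\dotsc,U_{d-2})}\leq t\bigr)$ and observe that it coincides with the radial law $t\mapsto t^{d-2}$ of the uniform distribution on $B^{d-2}$.

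For the radial computation, write
\[
\norms{(U_1,\dotsc,U_{d-2})}=\frac{\sum_{i=1}^{d-2}G_i^2}{\sum_{i=1}^{d}G_i^2}=\frac{V}{V+W},
\]
where $V:=\sum_{i=1}^{d-2}G_i^2\sim\chi^2_{d-2}$ and $W:=G_{d-1}^2+G_d^2\sim\chi^2_2$ are independent. By the standard identity for ratios of independent chi-square variables, $V/(V+W)$ has the $\mathrm{Beta}\bigl(\tfrac{d-2}{2},1\bigr)$ distribution, and since $\mathrm{Beta}(\alpha,1)$ has density $\alpha s^{\alpha-1}$ on $[0,1]$ its cumulative distribution function is $s\mapsto s^{\alpha}$. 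Therefore, for $t\in[0,1]$,
\[
\prob\bigl(\norm{(U_1,\dotsc,U_{d-2})}\leq t\bigr)=\prob\Bigl(\tfrac{V}{V+W}\leq t^2\Bigr)=(t^2)^{(d-2)/2}=t^{d-2},
\]
which together with the rotational invariance established above yields both assertions of the lemma.

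The argument is routine and I do not expect a genuine obstacle; the only facts invoked are the reduction ``rotationally invariant $\Rightarrow$ determined by the radial marginal'' and the chi-square--ratio/Beta identity. If one prefers to avoid the latter, one can instead compute the marginal density of $(U_1,\dotsc,U_{d-2})$ directly from the surface-area element of $\sphere^{d-1}$: writing a point as $u=(x,\sqrt{1-\norms{x}}\,\theta)$ with $x\in B^{d-2}$ and $\theta\in\sphere^{1}$, a Gram-determinant computation gives the surface element $(1-\norms{x})^{0}\,dx\,d\theta=dx\,d\theta$ (in general, dropping $m$ of the coordinates produces a factor $(1-\norms{x})^{(m-2)/2}$; here $m=2$, so the exponent vanishes), whence the marginal of $x$ is normalized Lebesgue measure on $B^{d-2}$. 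The hypothesis $d\geq 3$ is needed only so that $d-2\geq 1$ and the circle $\sphere^{1}$ in the last two coordinates is genuinely one-dimensional.
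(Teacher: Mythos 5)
Your proof is correct, but it takes a genuinely different route from the paper: the paper does not actually prove the first assertion, it simply cites it as well known (Corollary 4 of Barthe et al.) and notes that the second assertion follows immediately. You instead give a complete, self-contained derivation: the Gaussian representation $U \deq G/\norm{G}$ yields rotational invariance of the projection in $\RR^{d-2}$, which reduces the identification to the radial law, and the radial law is computed via the chi-square--ratio/Beta identity, giving $\prob(\norm{(U_1,\dotsc,U_{d-2})}^2 \leq t^2) = (t^2)^{(d-2)/2} = t^{d-2}$, which is exactly the radial CDF of the uniform measure on $B^{d-2}$. All steps check out, including the edge case $d=3$ (where rotational invariance in $\RR^1$ just means symmetry) and the exponent bookkeeping in your alternative surface-element computation (dropping $m=2$ coordinates yields the factor $(1-\norms{x})^{(m-2)/2}=1$). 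What your approach buys is a proof that does not lean on an external reference and that in fact derives both parts in one stroke (you compute the radial CDF directly, so the second claim is not deduced from the first but established alongside it); what the paper's approach buys is brevity. Either would be acceptable here.
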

\begin{proof}
	The first part is well-known, a proof can be found in \cite[Corollary 4]{barthe2005probabilistic}. The second part follows immediately from the first part.
	\details{ \href{https://mathoverflow.net/questions/33129/intuitive-proof-that-the-first-n-2-coordinates-on-a-sphere-are-uniform-in-a}{link}
	}
\end{proof}

\subsection{One-off-distance vs sigma min}
\begin{lemma}[{see e.g. \cite[Lemma 3.5]{DBLP:journals/corr/BhaskaraCMV13} for a proof}]\label{lem:RudelsonVershynin}
	If $A \in \RR^{m \times n}$  has columns $a_1, \dotsc, a_n$ and $m\geq n$, then denoting $a_{-i} = \linspan{(a_j: j \neq i)}$, we have 
	\begin{align*}
	\frac{1}{\sqrt{n}} \min_{i \in [n]}\dist(a_i, a_{-i}) 
	\leq \sigma_{n}(A) 
	\leq \min_{i \in [n]}\dist(a_i, a_{-i}).
	\end{align*}
\end{lemma}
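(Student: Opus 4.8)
The plan is to work from the variational characterization $\sigma_n(A) = \min_{\norm{x}=1}\norm{Ax} = \min_{x \neq 0}\norm{Ax}/\norm{x}$, which is valid because $m \geq n$, and to translate between unit vectors $x \in \RR^n$ and the linear combinations $Ax = \sum_j x_j a_j$ of the columns. Both inequalities come from exhibiting, for each direction, an explicit near-optimal witness: for the upper bound a sparse $x$ built from a projection, and for the lower bound the coordinate of the minimizing $x$ of largest absolute value.

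For the upper bound I would fix $i \in [n]$, let $p$ be the orthogonal projection of $a_i$ onto the subspace $a_{-i}$, so that $\norm{a_i - p} = \dist(a_i, a_{-i})$ and $p = \sum_{j \neq i} c_j a_j$ for some scalars $c_j$. Setting $x_i := 1$ and $x_j := -c_j$ for $j \neq i$ gives $Ax = a_i - p$, hence $\norm{Ax} = \dist(a_i, a_{-i})$, while $\norm{x} \geq \abs{x_i} = 1$. Therefore $\sigma_n(A) \leq \norm{Ax}/\norm{x} \leq \dist(a_i, a_{-i})$, and taking the minimum over $i$ yields the right-hand inequality.

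For the lower bound I would take a unit vector $x$ with $\norm{Ax} = \sigma_n(A)$ and choose an index $i \in \argmax_{j}\abs{x_j}$. By pigeonhole $\abs{x_i}^2 \geq \frac1n \sum_j \abs{x_j}^2 = \frac1n$, so $\abs{x_i} \geq 1/\sqrt n > 0$. Writing $Ax = x_i a_i + \sum_{j \neq i} x_j a_j$ and solving for $a_i$ gives $a_i = \frac{1}{x_i} Ax - \sum_{j \neq i} \frac{x_j}{x_i} a_j$; since $-\sum_{j \neq i}\frac{x_j}{x_i} a_j$ lies in $a_{-i}$ and $a_i - \left(-\sum_{j\neq i}\frac{x_j}{x_i}a_j\right) = \frac{1}{x_i}Ax$, we get $\dist(a_i, a_{-i}) \leq \norm{\tfrac{1}{x_i}Ax} = \sigma_n(A)/\abs{x_i} \leq \sqrt n\,\sigma_n(A)$. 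Hence $\min_i \dist(a_i, a_{-i}) \leq \sqrt n\, \sigma_n(A)$, which rearranges to the left-hand inequality.

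I do not expect a real obstacle here: the proof is elementary. The only points that need a little care are that $\sigma_n(A)$ is genuinely given by the stated minimization (this uses $m \geq n$), that the pigeonhole step guarantees $x_i \neq 0$ so that dividing by $x_i$ is legitimate, and the sign bookkeeping when exhibiting the witness in each direction. It is also worth noting in passing that both inequalities are sharp — orthonormal columns give equality throughout, and the $\sqrt n$ factor is needed when the minimizing $x$ is spread evenly across coordinates — so no sharpening should be attempted.
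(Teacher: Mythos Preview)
Your argument is correct. The paper itself does not prove this lemma; it merely cites \cite[Lemma 3.5]{DBLP:journals/corr/BhaskaraCMV13}. There is, however, a commented-out proof in the source that proceeds by a different route, via the (left) inverse: if $b_i^\top$ denotes the $i$th row of the left inverse of $A$, then $b_i \cdot a_j = \delta_{ij}$ forces $b_i \perp a_{-i}$, whence $\dist(a_i,a_{-i}) = 1/\norm{b_i}$; one then sandwiches $\sigma_n(A)^{-1}$, which equals the operator norm of the left inverse, between $\max_i \norm{b_i}$ and $\sqrt{n}\,\max_i \norm{b_i}$. Your direct use of the variational characterization is at least as clean: it avoids assuming full column rank to form the inverse, and the degenerate case (dependent columns, where $\sigma_n(A)$ and the minimum distance both vanish) is covered without a separate argument.
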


\subsection{Facts about Gaussian random polytopes}

\subsubsection{Gaussian \texorpdfstring{$\eps$}{epsilon}-neighborhood}

\begin{corollary}\label{lem: gauss neighborhood}
	Let $Q$ be a convex set in $\mathbb{R}^d$. 
	Then there exists an absolute constant $c>0$ such that 
	\(
	\mathcal{G}\left(Q \setminus Q_{-\eps} \right)\leq c \eps d^{1/4}
	\).
\end{corollary}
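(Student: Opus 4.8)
The plan is to read this off the sharp estimate for the Gaussian surface area of convex sets due to Ball and Nazarov, in its ``collar'' form: for a positive definite $A$, the associated Gaussian measure $\gamma_A$ satisfies $\gamma_A(Q_{h_1}\setminus Q_{-h_2})\le C(h_1+h_2)\sqrt{\norm{A}_{HS}}$ for every convex $Q\subseteq\RR^d$ and all $h_1,h_2\ge 0$, with $C$ an absolute constant \cite{nazarov2003maximal, chernozhukov2017central}. The standard Gaussian measure $\mathcal G$ is $\gamma_{I_d}$, and $\norm{I_d}_{HS}=\sqrt d$, so $\sqrt{\norm{I_d}_{HS}}=d^{1/4}$. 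Setting $h_1=0$, $h_2=\eps$ then gives exactly $\mathcal G(Q\setminus Q_{-\eps})\le C\eps\, d^{1/4}$. The only points to check are bookkeeping: $Q_{-\eps}=\{x\in\RR^d: x+\eps B\subseteq Q\}$ is convex (an intersection of translates of the convex set $Q$), and we may assume $Q$ is closed and bounded, since $Q\cap nB$ increases to $Q$ while $(Q\cap nB)_{-\eps}$ increases to $Q_{-\eps}$, so $\mathcal G(Q\setminus Q_{-\eps})=\lim_n\mathcal G\big((Q\cap nB)\setminus(Q\cap nB)_{-\eps}\big)$ and the bounded case suffices.

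If one prefers to invoke only the one-sided outer form, $\mathcal G(K_h\setminus K)\le C h\, d^{1/4}$ for convex $K$ and $h>0$ (the version most directly derived from the surface-area bound), I would reduce to it via the complement. With $Q$ closed convex and $Q^c=\RR^d\setminus Q$, one has, up to a set of measure zero, $\RR^d\setminus Q_{-t}=(Q^c)_t$, hence $Q\setminus Q_{-\eps}=(Q^c)_\eps\setminus Q^c$. Writing $\mathcal G\big((Q^c)_\eps\setminus Q^c\big)=\int_0^\eps\big(-\tfrac{d}{dt}\mathcal G(Q_{-t})\big)\,dt$ (using the absolute continuity of $t\mapsto\mathcal G(Q_{-t})$, which is standard), each integrand is the inner Gaussian surface area of the convex set $Q_{-t}$ — note $Q_{-(t+s)}=(Q_{-t})_{-s}$ — and this equals its ordinary Gaussian surface area because $\bdry Q_{-t}=\bdry\big((Q^c)_t\big)$ and $Q_{-t}$ is convex; Nazarov's theorem bounds it by $C d^{1/4}$ for every $t$, and integrating over $[0,\eps]$ yields the claim.

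The substance of the statement is entirely the exponent $d^{1/4}$, rather than the $\sqrt d$ one would get from a crude width/Steinhagen argument; this is precisely Nazarov's theorem that the maximal Gaussian surface area of convex bodies in $\RR^d$ is $\Theta(d^{1/4})$, which I would cite rather than reprove. Everything else — convexity of $Q_{-\eps}$, the complement identity, reduction from bounded to general convex sets, and the measure-theoretic justification of the coarea integral — is routine, so the only real ``obstacle'' is correctly extracting the $d^{1/4}$ from $\sqrt{\norm{I_d}_{HS}}$ and locating the exact form of the cited estimate.
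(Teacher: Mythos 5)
Your first paragraph is exactly the paper's proof: it invokes the two-sided collar bound of Chernozhukov et al.\ (Lemma A.2 there, built on Ball and Nazarov) with $A=\mathbf{I}_d$, reads off $\sqrt{\norm{\mathbf{I}_d}_{HS}}=d^{1/4}$, and sets $h_1=0$, $h_2=\eps$. The additional reduction via the complement and the coarea integral is a fine but unnecessary alternative; the argument is correct and matches the paper's approach.
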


\begin{proof}
	Follows immediately from \cite[Lemma A.2]{chernozhukov2017central} and the fact $\norm{I}_{HS} = \sqrt{d}$ (Hilbert-Schmidt norm).
	Their proof is based on \cite{Ball1993,nazarov2003maximal}.
\end{proof}

\subsubsection{Distances of facets}
\begin{lemma}[{\cite[Theorem 4.4.5]{vershyninbook}}] \label{lem: R-V sigma max}
	Let $X$ be an $m \times n$ random matrix whose entries are iid.\ standard Gaussian random variables.
	Then for $t > 0$,
	\begin{align*}
	\mathbb{P} \left( \sigma_{\max}(X) > c(\sqrt{m} + \sqrt{n} + t) \right) \leq 2e^{-t^2},
	\end{align*}
	where $c$ is some absolute positive constant.
\end{lemma}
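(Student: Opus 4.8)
The plan is to prove this by the classical $\eps$-net discretization of the operator norm, starting from the identity $\sigma_{\max}(X) = \sup_{u \in \sphere^{n-1},\, v \in \sphere^{m-1}} v^\top X u$. First I would fix $\eps = 1/4$ and choose $\eps$-nets $\mathcal N \subseteq \sphere^{n-1}$ and $\mathcal M \subseteq \sphere^{m-1}$ of the unit spheres; by the standard volumetric estimate these can be taken with $\card{\mathcal N} \le (1+2/\eps)^n = 9^n$ and $\card{\mathcal M} \le 9^m$. A routine approximation argument (for the maximizing pair $u^*,v^*$, pick net points $u',v'$ within $\eps$ and expand $v'^\top X u'$; each error term is at most $\eps\,\sigma_{\max}(X)$) shows $\sigma_{\max}(X) \le (1-2\eps)^{-1}\max_{u \in \mathcal N,\, v \in \mathcal M} v^\top X u = 2\max_{u \in \mathcal N,\, v \in \mathcal M} v^\top X u$, reducing the problem to controlling the maximum of $v^\top X u$ over finitely many pairs.

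For fixed unit vectors $u,v$, the quantity $v^\top X u = \sum_{i,j} v_i X_{ij} u_j$ is a centered Gaussian of variance $\norm{u}^2\norm{v}^2 = 1$, so $\prob(v^\top X u > s) \le e^{-s^2/2}$ for all $s>0$. A union bound over the at most $9^{m+n}$ pairs $(u,v) \in \mathcal N \times \mathcal M$ gives $\prob\bigl(\max_{u\in\mathcal N, v\in\mathcal M} v^\top X u > s\bigr) \le 9^{m+n} e^{-s^2/2}$, hence, by the net reduction, $\prob(\sigma_{\max}(X) > s) \le 9^{m+n} e^{-s^2/8}$.

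It then remains to pick $s = c(\sqrt m + \sqrt n + t)$ with $c$ a suitable absolute constant. Using $(\sqrt m + \sqrt n + t)^2 \ge m + n + t^2$, the bound becomes $\exp\bigl((m+n)\ln 9 - \tfrac{c^2}{8}(m+n+t^2)\bigr)$, which is at most $e^{-t^2} \le 2e^{-t^2}$ as soon as $\tfrac{c^2}{8} \ge \ln 9$ and $\tfrac{c^2}{8}\ge 1$, e.g.\ $c = 5$. This yields the claimed inequality.

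I do not expect a real obstacle here: the proof rests only on the covering-number bound for the sphere, the elementary net-approximation inequality for the operator norm, and the one-dimensional Gaussian tail bound, with the rest being arithmetic to make the union bound land on the advertised $2e^{-t^2}$. (An alternative, slightly sharper but less self-contained, is to bound $\expectation\,\sigma_{\max}(X) \le \sqrt m + \sqrt n$ via Gordon's comparison inequality and then apply Gaussian concentration to the $1$-Lipschitz map $X \mapsto \sigma_{\max}(X)$ measured in the Frobenius norm, which gives $\prob(\sigma_{\max}(X) > \sqrt m + \sqrt n + t) \le e^{-t^2/2}$; this trades the self-contained net argument for Gordon's inequality as a black box.)
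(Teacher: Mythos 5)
Your proof is correct, and the arithmetic checks out: with $\eps=1/4$ the net reduction gives the factor $2$, the union bound over at most $9^{m+n}$ pairs combines with the Gaussian tail $e^{-s^2/8}$, and $(\sqrt m+\sqrt n+t)^2\ge m+n+t^2$ makes $c=5$ work. The paper does not prove this lemma at all -- it cites it verbatim as Theorem 4.4.5 of Vershynin's book -- and your $\eps$-net argument is essentially the proof given in that cited source, so there is nothing to reconcile.
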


\begin{lemma}\label{lem:affmin}
	Given linearly independent vectors $p_1, p_2, \dotsc, p_d \in \mathbb{R}^d$, the shortest vector in their affine hull is
	$v = \hfrac{P^{-1}\mathbbm{1}}{\norms{P^{-1} \mathbbm{1}}}$,
	where  $P = ( p_1 \cdots p_d )^\top$.
	In particular, $\norm{v}= \hfrac{1}{\norm{P^{-1}\mathbbm{1}}}$.
\end{lemma}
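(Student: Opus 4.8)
The plan is to reduce this to the elementary problem of finding the point of a hyperplane nearest the origin. Since $p_1,\dots,p_d$ are linearly independent in $\RR^d$, they are in particular affinely independent, and their affine hull is exactly $\{\sum_{i=1}^d \lambda_i p_i : \lambda \in \RR^d,\ \mathbbm{1}^\top \lambda = 1\} = \{P^\top \lambda : \mathbbm{1}^\top\lambda = 1\}$, where $P^\top = (p_1\ \cdots\ p_d)$ has the $p_i$ as its columns. So the shortest vector in the affine hull is the optimum of $\min\{\norms{P^\top\lambda} : \mathbbm{1}^\top\lambda = 1\}$.

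Next I would change variables. Because the $p_i$ are linearly independent, $P$ is invertible, so $\lambda \mapsto x = P^\top\lambda$ is a bijection of $\RR^d$ with inverse $\lambda = (P^\top)^{-1}x = (P^{-1})^\top x$. Under it the constraint $\mathbbm{1}^\top\lambda = 1$ becomes $\mathbbm{1}^\top (P^{-1})^\top x = (P^{-1}\mathbbm{1})^\top x = 1$. Hence the optimization is exactly $\min\{\norms{x} : \langle w, x\rangle = 1\}$ with $w := P^{-1}\mathbbm{1}$, where $w \neq 0$ since $P^{-1}$ is invertible and $\mathbbm{1}\neq 0$.

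Finally I would apply the standard fact that the point of the hyperplane $\{x : \langle w,x\rangle = 1\}$ closest to the origin is $x^\star = w/\norms{w}$, with $\norm{x^\star} = 1/\norm{w}$. This is immediate from Cauchy--Schwarz: on the hyperplane $1 = \langle w, x\rangle \le \norm{w}\,\norm{x}$, so $\norm{x}\ge 1/\norm{w}$ with equality iff $x$ is a scalar multiple of $w$, and the constraint then pins that scalar to $1/\norms{w}$; alternatively a one-line Lagrange-multiplier computation gives the same. Translating back through the change of variables, the shortest vector in the affine hull is $v = x^\star = P^{-1}\mathbbm{1}/\norms{P^{-1}\mathbbm{1}}$ with $\norm{v} = 1/\norm{P^{-1}\mathbbm{1}}$, as claimed.

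There is essentially no obstacle here; the one thing to keep straight is the bookkeeping of transposes in the change of variables, so that the normal direction comes out as $P^{-1}\mathbbm{1}$ rather than $(P^\top)^{-1}\mathbbm{1}$, and to note explicitly the two places linear independence of the $p_i$ is used: to make the affine parametrization valid and to guarantee $P$, hence $w$, is nonsingular.
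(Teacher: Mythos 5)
Your proof is correct, and it takes a somewhat different route from the paper's. The paper does not re-derive the optimality condition from scratch: it cites \cite[Lemma 1.2]{DBLP:journals/siamcomp/LoeraHR20} for the fact that the affine minimizer $v$ satisfies $Pv = \norms{v}\mathbbm{1}$ (i.e.\ $\inner{p_i}{v} = \norms{v}$ for every $i$), then inverts $P$ to get $v = \norms{v}P^{-1}\mathbbm{1}$ and takes norms. You instead make the argument self-contained: you rewrite the affine hull as the hyperplane $\{x \suchthat \inner{P^{-1}\mathbbm{1}}{x} = 1\}$ via the change of variables $x = P^\top\lambda$, and then apply the elementary nearest-point-to-a-hyperplane formula (Cauchy--Schwarz). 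The two arguments are really the same computation seen from opposite ends --- the cited condition $Pv=\norms{v}\mathbbm{1}$ is exactly the statement that $v$ is the appropriate multiple of the hyperplane's normal $P^{-1}\mathbbm{1}$ --- but yours buys independence from the external reference at the cost of the transpose bookkeeping you rightly flag, while the paper's buys brevity by leaning on a known characterization of the affine minimizer. Your handling of the side conditions (nonsingularity of $P$, hence $P^{-1}\mathbbm{1}\neq 0$, hence the hyperplane misses the origin) is complete.
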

\begin{proof}
	From \cite[Lemma 1.2]{DBLP:journals/siamcomp/LoeraHR20}, the shortest vector in the affine hull, $v$, satisfies
	\(	 
		Pv = \norm{v}^2 \mathbbm{1}.
	\)
	Since $P$ has independent columns,
	\(
	v = \norm{v}^2 P^{-1} \mathbbm{1}.
	\)
	Compute norms of vectors in the above equation to get
	\(
	\norm{v} = 1/\norm{P^{-1} \mathbbm{1}}.
	\)
	The claim follows.
\end{proof}

The following lemma can directly generalize to Gaussian random vectors with mean zero and covariance matrix $\sigma^2 \mathbf{I}_{d}$ by scaling by $\sigma$.
\begin{lemma}\label{lem:boundofaffmin}
	Let $X_1,\dots,X_n$ be iid.\ standard Gaussian random vectors in $\RR^d$. 
	For $S \subseteq \left[n\right]$, $|S| = d$, define $V_S$ as the shortest vector in $\aff (X_S)$.
	Then there exists a constant $c > 0$ such that
	\[
	\mathbb{P}\left(\max_{S \subseteq \left[n\right], |S| = d} \norm{V_S} \leq c  (2+\sqrt{n/d}) \right) \geq 1 - 2e^{- d}.
	\]
\end{lemma}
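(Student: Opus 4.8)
The plan is to reduce the uniform bound over all $\binom{n}{d}$ index sets $S$ to a single statement about the largest singular value of one $n$-by-$d$ Gaussian matrix, which is what makes the failure probability as small as $2e^{-d}$ possible. Write $X$ for the $n$-by-$d$ matrix whose $i$-th row is $X_i$, and for $S\subseteq[n]$ with $|S|=d$ let $P_S$ be the $d$-by-$d$ submatrix of $X$ consisting of the rows indexed by $S$. Almost surely the rows of $P_S$ are linearly independent, so \cref{lem:affmin} applies and gives $\norm{V_S}=1/\norm{P_S^{-1}\mathbf 1}$.

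First I would lower bound the denominator: $\norm{P_S^{-1}\mathbf 1}\geq \sigma_{\min}(P_S^{-1})\norm{\mathbf 1}=\sqrt d/\sigma_{\max}(P_S)$, hence $\norm{V_S}\leq \sigma_{\max}(P_S)/\sqrt d$. Next, since $P_S$ is obtained from $X$ by deleting rows, $\norm{P_S u}\leq\norm{Xu}$ for every unit vector $u$, so $\sigma_{\max}(P_S)\leq\sigma_{\max}(X)$; crucially this bound does not depend on $S$, so $\max_{S}\norm{V_S}\leq \sigma_{\max}(X)/\sqrt d$. Finally I would apply \cref{lem: R-V sigma max} to the $n$-by-$d$ matrix $X$ with $t=\sqrt d$: with probability at least $1-2e^{-d}$ we get $\sigma_{\max}(X)\leq c(\sqrt n+\sqrt d+\sqrt d)=c(\sqrt n+2\sqrt d)$, and dividing by $\sqrt d$ yields $\max_S\norm{V_S}\leq c(\sqrt{n/d}+2)$ on that event, which is exactly the claim with the same absolute constant $c$. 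The extension to mean-zero covariance $\sigma^2\mathbf I_d$ is by scaling, as the lemma's preamble notes.

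There is no deep obstacle; the main thing to get right is to avoid the naive route of bounding each $\norm{V_S}$ separately and union bounding, which would force the deviation term to beat $\binom{n}{d}\leq(en/d)^d$ and would spoil the clean $2e^{-d}$ bound. Funneling everything through $\sigma_{\max}(X)$ for the whole matrix removes the combinatorial factor entirely. The only routine points to verify are the almost-sure linear independence of $X_S$ (so \cref{lem:affmin} is applicable) and the monotonicity $\sigma_{\max}(P_S)\leq\sigma_{\max}(X)$ under row deletion. (One could instead start from the cruder bound $\norm{V_S}\leq\min_{i\in S}\norm{X_i}$ since each $X_i\in\aff(X_S)$, but controlling $\max_S\min_{i\in S}\norm{X_i}$ — an order statistic of the $\norm{X_i}$ — is less clean than the singular-value argument above, so I would not take that path.)
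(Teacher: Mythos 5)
Your proposal is correct and follows essentially the same route as the paper: reduce $\norm{V_S}=1/\norm{P_S^{-1}\mathbf 1}$ via \cref{lem:affmin}, bound it by $\sigma_{\max}(P_S)/\sqrt d\leq\sigma_{\max}(X)/\sqrt d$ using submatrix monotonicity, and apply \cref{lem: R-V sigma max} with $t=\sqrt d$. Your write-up is in fact cleaner than the paper's (which has a notational slip between $X$ and $A$), and your remark about avoiding the union bound over $\binom{n}{d}$ sets correctly identifies why the argument is structured this way.
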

\begin{proof}
	Let $X$ be the matrix whose column vectors are $X_1, \dotsc, X_n$.
	For any $S \subseteq \left[n\right], |S|= d$, $X_S$ is linearly independent with probability 1.
	By \cref*{lem:affmin},
	\begin{align}\label{eqn:shortest vec}
			\|V_S\| = \frac{1}{\|X_S^{-1}\mathbbm{1}\|} \le \frac{1}{\sqrt{d}\  \sigma_{\min}(X_S^{-1})} = \frac{\sigma_{\max}(A_S)}{\sqrt{d}} \le \frac{\sigma_{\max}(A)}{\sqrt{d}}.
	\end{align}
	From \cref{lem: R-V sigma max} we know
	\(
	\mathbb{P} \left( \sigma_{\max}(A) > c (\sqrt{d}+\sqrt{n} + t) \right) \leq 2e^{- t^2}.
	\)
	The claim follows by letting $t = \sqrt{d}$ and applying \eqref{eqn:shortest vec}.
\end{proof}

%
%

\subsubsection{Number of facets}

We will need the fact that the number of facets of the convex hull of $n$ Gaussian random points in $\RR^d$ is exponential in $d$ with high probability when $n=cd$, $c>1$.
We could not find such a result in the literature and we do not see how to deduce it from results on the asymptotic number of facets in stochastic geometry \cite{MR258089,DBLP:journals/dcg/AffentrangerW91, MR2093024, MR2144555, boroczky2018facets} (the difficulties are: either they only determine the expectation or variance of the number of facets, or the bounds are as $n$ goes to infinity for fixed $d$).
Nevertheless, it is easy to deduce what we want from the work of Donoho and Tanner on compressive sensing and the neighborliness of random polytopes. 
We build on top of basic polytope theory from \cite{Ziegler}.

\begin{definition}[Neighborliness]
	A polytope $P$ is $k$-neighborly if every subset of $k$ vertices forms a $(k-1)$-face.
\end{definition}

Let $f_l(P)$ denote the number of $l$-faces of polytope $P$.


\begin{theorem}[\cite{Donoho9452}, Corollary 1.1, Lemma 3.2]\label{thm:neighborliness}
	There exists a function (threshold) $\rho(\delta): (0,1) \to \RR$, $\rho(\delta)  > 0$ with the following property:
	Let $\delta \in (0,1)$.
	Let $d = \floor{\delta n}$.
	Let $\rho < \rho(\delta)$. 
	Let $X_1, \dotsc, X_n$ be iid.\ samples from a Gaussian distribution in $\RR^d$ with non-singular covariance.
	Let $P = \conv\{ X_1, \dotsc, X_n\}$.
	Then
	$
	\lim_{n \goesto \infty} \prob \bigl(\text{$f_1(P) = n$ and $P$ is $\floor{\rho d}$-neighborly} \bigr) = 1 
	$.
\end{theorem}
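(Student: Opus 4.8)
The plan is to obtain this statement as a corollary of the Donoho--Tanner phase transition for randomly projected simplices; the substantive content is a reduction that identifies a Gaussian random polytope with such a projection. First I would reduce to standard Gaussians: if $X_i \sim \mathcal{N}(\mu,\Sigma)$ with $\Sigma$ non-singular, write $X_i \deq \mu + \Sigma^{1/2} Z_i$ with $Z_i$ iid $\mathcal{N}(0,I_d)$. The affine map $z \mapsto \mu + \Sigma^{1/2} z$ is invertible, carries $\conv\{Z_1,\dots,Z_n\}$ onto $P$, and induces an isomorphism of face lattices (in particular it preserves the number of vertices and the neighborliness). Hence it suffices to treat the case where $X_1,\dots,X_n$ are iid standard Gaussian.

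Next I would exhibit $P$ as a random projection of the simplex. Let $\Gamma \in \RR^{d \times n}$ be the matrix with columns $X_1,\dots,X_n$, so that $\Gamma e_i = X_i$ and therefore $P = \Gamma(\Delta^{n-1})$, where $\Delta^{n-1} = \conv\{e_1,\dots,e_n\} \subseteq \RR^n$ is the standard simplex. Taking a QR factorization $\Gamma^\top = QR$ with $Q \in \RR^{n\times d}$ having orthonormal columns and $R \in \RR^{d\times d}$ invertible (which holds almost surely), I get $\Gamma = R^\top Q^\top$ and hence $P = R^\top\bigl(Q^\top \Delta^{n-1}\bigr)$. The map $x \mapsto Q^\top x$ is the orthogonal projection of $\RR^n$ onto the column space of $Q$, which is the row space of $\Gamma$ and hence a uniformly random $d$-dimensional subspace (since $\Gamma$ has iid Gaussian entries, its rows are rotationally invariant), composed with an isometric identification of that subspace with $\RR^d$; and $R^\top$ is an invertible linear map, so it preserves the entire face lattice. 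Therefore $P$ is combinatorially equivalent to the orthogonal projection of $\Delta^{n-1}$ onto a uniformly random $d$-dimensional subspace.

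Finally I would quote Donoho and Tanner. Their weak-neighborliness threshold (Corollary 1.1 of \cite{Donoho9452}) furnishes a function $\rho(\delta) > 0$ such that, with $d = \floor{\delta n}$ and any fixed $\rho < \rho(\delta)$, the random orthogonal projection of $\Delta^{n-1}$ onto a $d$-dimensional subspace is $\floor{\rho d}$-neighborly with probability tending to $1$ as $n \to \infty$, and their Lemma 3.2 ensures that no vertex is collapsed by the projection, so all $n$ points survive as vertices of $P$ (this is in any case forced by $\floor{\rho d}$-neighborliness once $\floor{\rho d} \geq 1$). Transporting both statements back through the combinatorial equivalence of the previous step yields exactly the claimed conclusion. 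I expect the only delicate point --- the genuine analytic work being entirely inside the cited paper --- to be bookkeeping: one must check that Donoho--Tanner's convention for $k$-neighborliness and their projector model (a random orthoprojector, equivalently a Gaussian map applied to the simplex) agree with the ones used here, and that the linear equivalence really transfers both the vertex count and the neighborliness. The latter holds because almost surely the $X_i$ are in general position, so $P$ has no accidental lower-dimensional faces, and invertible linear images have isomorphic face lattices; no new probabilistic estimate is needed beyond what Donoho--Tanner prove.
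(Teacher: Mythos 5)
The paper offers no proof of its own here --- it states the result as a direct citation of Donoho--Tanner --- and your reduction (affine invariance, then the QR/rotation-invariance identification of $\conv\{X_1,\dotsc,X_n\}$ with an invertible image of the orthogonal projection of $\Delta^{n-1}$ onto a uniformly random $d$-dimensional subspace) is exactly the model equivalence that the cited Corollary 1.1 and Lemma 3.2 are meant to supply, so your argument is correct and matches the intended route. One small caveat: under this paper's definition of $k$-neighborliness (quantified over \emph{vertices} of $P$, not over the $n$ generating points), neighborliness alone does not force $f_1(P)=n$, so you genuinely need the ``no vertex is collapsed'' conclusion from Donoho--Tanner rather than your parenthetical shortcut --- but since you cite that as well, nothing is missing.
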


The above theorem demonstrates, given its assumptions, that when $n$ is large enough, $P$ has $\binom{n}{\floor{\rho d}}$ many $\floor{\rho d}$-faces with high probability.
Note also that $P$ is simplicial (every facet is a simplex) a.s.
Thus, a.s. each facet of $P$ provides at most $\binom{d}{\floor{\rho d}}$ many $\floor{\rho d}$-faces, and the number of facets is at least
\begin{align*}
\frac{\binom{n}{\floor{\rho d}}}{\binom{d}{\floor{\rho d}}} 
\ge \left( \frac{n}{d} \right)^{\floor{\rho d}} 
\ge \left(\frac{1}{\delta}\right)^{\floor{\rho d}} 
\geq c^d,
\end{align*}
for some $c>1$ (and $d$ large enough).
We conclude:
\begin{corollary}\label{lem: num of facets}
	Let $\delta \in (0,1)$.
	Let $P$ be the convex hull of $n$ iid.\ standard Gaussian random points in $\mathbb{R}^d$, $d = \floor{\delta n}$.
	Then there exists a constant $c > 1$ (that depends only on $\delta$) such that
	$
	\lim_{n \goesto \infty} \mathbb{P}\bigl( f_{d}(P) \geq c^d \bigr) = 1
	$.
\end{corollary}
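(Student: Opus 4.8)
The plan is to combine the Donoho--Tanner neighborliness theorem (\cref{thm:neighborliness}) with a double count of low-dimensional faces. Fix $\delta \in (0,1)$, pick $\rho$ with $0 < \rho < \rho(\delta)$, and write $k := \floor{\rho d}$. By \cref{thm:neighborliness}, with probability tending to $1$ as $n \goesto \infty$ the polytope $P = \conv\{X_1, \dots, X_n\}$ simultaneously satisfies $f_1(P) = n$ (so all $n$ sample points are vertices) and is $k$-neighborly. Neighborliness then produces many faces of dimension $k-1$: each of the $\binom{n}{k}$ size-$k$ subsets of vertices spans such a face, so on this event $f_{k-1}(P) \geq \binom{n}{k}$.

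Next I would bound $f_{k-1}(P)$ from above using $f_d(P)$, the number of facets. Since $n$ iid.\ standard Gaussian points are in general position almost surely, $P$ is simplicial a.s., so each facet is a $(d-1)$-simplex with exactly $d$ vertices and therefore contains exactly $\binom{d}{k}$ faces of dimension $k-1$. As every proper face of a polytope lies in some facet (basic polytope theory, \cite{Ziegler}), this gives $f_{k-1}(P) \leq f_d(P)\binom{d}{k}$. Combining the two bounds,
\[
f_d(P) \;\geq\; \frac{\binom{n}{k}}{\binom{d}{k}} \;=\; \prod_{i=0}^{k-1}\frac{n-i}{d-i} \;\geq\; \left(\frac{n}{d}\right)^{k},
\]
where the last step uses that $n \geq d$ makes each factor at least $n/d$. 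Since $d = \floor{\delta n} \leq \delta n$ we have $n/d \geq 1/\delta$, so $f_d(P) \geq (1/\delta)^{\floor{\rho d}}$; and for $d$ large enough $\floor{\rho d} \geq (\rho/2)d$, whence $f_d(P) \geq \bigl((1/\delta)^{\rho/2}\bigr)^{d}$. Setting $c := (1/\delta)^{\rho/2} > 1$ (which depends only on $\delta$) and intersecting the high-probability event of \cref{thm:neighborliness} with the a.s.\ general-position event, we conclude $\lim_{n \goesto \infty}\prob\bigl(f_d(P) \geq c^d\bigr) = 1$.

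The only delicate point --- and the closest thing to an obstacle --- is the face bookkeeping: one must use that simpliciality holds almost surely, so that the per-facet count $\binom{d}{k}$ is exact rather than merely an upper bound, and that every $(k-1)$-face is contained in at least one facet, so that the $\binom{n}{k}$ neighborly faces are all accounted for among the facets. Everything else is elementary: the binomial-ratio estimate and the fact that $\floor{\rho d}$ grows linearly in $d$. The conditions $\floor{\rho d} \geq 1$ and $\floor{\rho d} \geq (\rho/2)d$ are needed only for large $d$, which is harmless since the statement concerns the limit $n \goesto \infty$.
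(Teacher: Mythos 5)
Your argument is correct and is essentially the paper's own proof: both invoke the Donoho--Tanner neighborliness theorem to get $\binom{n}{\floor{\rho d}}$ low-dimensional faces, use almost-sure simpliciality to bound the number of such faces per facet by $\binom{d}{\floor{\rho d}}$, and conclude via the ratio $\binom{n}{k}/\binom{d}{k} \geq (n/d)^k \geq (1/\delta)^{\floor{\rho d}} \geq c^d$. Your write-up is if anything slightly more careful about the face-dimension bookkeeping and the choice of $c$.
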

\Cref{lem: num of facets} can probably also be proven directly from different but related neighborliness results by Vershik and Sporyshev \cite{MR1166627}, \cite[Theorem 2]{Donoho9452}.

\subsection{Condition measures of polytopes}

\subsubsection{Width and minwidth}\label{sec:minwidth}
\begin{definition}[Directional width and width]
	The \textit{directional width} of a set $A \subseteq \RR^d$ with respect to a direction $r \in \RR^d$ is defined as $\dirw(A,r):= \sup_{s,v \in \mathcal{A}} \langle\frac{r}{\|r\|},s-v \rangle$. 
	The \textit{width} of $A$, denoted $\width(A)$ is the infimum of the directional width over all directions on its affine hull. 
	
\end{definition}

\begin{definition}[{Minwidth, \cite[Section 3.1]{NIPS2015_5925}}]
	The minwidth of a finite set $A \subseteq \RR^d$, denoted $\minwidth(A)$, is the minimum width over all subsets of $A$.
\end{definition}

\subsubsection{Pyramidal width}
\begin{definition}[Pyramidal directional width, \cite{NIPS2015_5925}]\label{def:pdirw}
	We define the \textit{pyramidal directional width} of a finite set $A \subseteq \RR^d$ with respect to a direction $r \in \RR^d$ and a base point $x\in \conv(A)$ to be
	\[
	\pdirw(A,r,x):= \min_{S \in S_x} \dirw(S \cup \{ s (A,r) \},r) =  \min_{S \in S_x} \max_{s \in A , v \in S} \left\langle\frac{r}{\norm{r}},s-v \right\rangle,
	\]
	where $S_x :=\{T \subseteq A \suchthat x \text{ is a proper convex combination of all the elements in }T \}$ and $s(A,r):= \argmax_{v\in A}\langle r,v \rangle$.
\end{definition}

\begin{definition}[Feasible direction, \cite{NIPS2015_5925}]
	A direction $r$ is feasible for $A$ from $x$ if it points inwards $conv(A)$, i.e. $r \in cone(A-x)$. A direction $r$ is feasible for $A$ if it is feasible for $A$ from some $x \in A$.
\end{definition}

\begin{definition}[Pyramidal width, \cite{NIPS2015_5925}]\label{def:pwidth}
	We define the \textit{pyramidal width} of a finite set $A \subseteq \RR^d$ to be the smallest pyramidal directional width of all its faces,
	\[
	\pwidth(A) := 
	\min_{\substack{K \in \faces(\conv(A))\\ x \in K \\ r \in \cone(K-x) \setminus \{0\} }} \pdirw(K \cap A,r,x).
	\]
\end{definition}

\subsubsection{Vertex-facet distance}

The \emph{vertex-facet distance} polytope conditioning parameter for the analysis of FWMs was introduced in \cite{Beck2017}.
We adopt here the slightly specialized definition in \cite{MR3920711}, which is defined as a property of a polytope independent of the representation, while the original version in \cite{Beck2017} can depend on the numbers used to represent a polytope.
\begin{definition}[vertex-facet distance \cite{Beck2017, MR3920711}]
Let $P \subseteq \RR^d$ be a polytope with  $\dim(\aff(P)) \geq 1$. 
The vertex-facet distance of $P$ is
\[
\vf(P) := \min_{F \in \facets (P)} \dist(\aff (F), \vertices(P) \setminus F).
\]
\end{definition}
\cnote{add examples?}

\subsubsection{Relation between vertex-facet distance and pyramidal width}

We show $\vf \bigl(\conv(A) \bigr) \geq \pwidth(A)$. It seems that this result may have already been know to \cite[comment before Theorem 1, combined with Theorem 2]{MR3920711}, but it is claimed there in the wrong direction.
That direction is impossible as the example of a unit cube shows: $\pwidth([0,1]^d) = 1/\sqrt{d}$ \cite[Lemma 4]{NIPS2015_5925}, but $\vf([0,1]^d) = 1$.

\begin{proposition}\label{prop:vfpwidth}
	Let $A \subseteq \RR^d$ be a finite set with at least two points. 
	Then $\vf \bigl(\conv(A) \bigr) \geq \pwidth(A)$.
\end{proposition}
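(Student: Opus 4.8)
The plan is to unwind the definition of $\pwidth(A)$ and exhibit, for every facet $F$ of $P := \conv(A)$ and every vertex $v \in \vertices(P)\setminus F$, a single legal choice of face $K$, base point $x\in K$ and direction $r\in\cone(K-x)\setminus\{0\}$ for which $\pdirw(K\cap A,r,x)\le\dist(\aff(F),v)$; taking the minimum over such $v$ and then over $F$ yields $\pwidth(A)\le\vf(P)$. First I would reduce to the case where $P$ is full-dimensional (otherwise pass to $\aff(A)$, within which both $\pwidth$ and $\vf$ are intrinsically computed). Fix a facet $F$ with unit outer normal $g$ and offset $b$, so $P\subseteq\{y:\langle g,y\rangle\le b\}$ and $F=P\cap\{y:\langle g,y\rangle=b\}$, whence $\dist(\aff(F),v)=b-\langle g,v\rangle$ for any $v\notin F$. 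The choice I would make is $K=P$ itself (a face of $\conv(A)$), direction $r=g$, and base point $x=\tfrac12 v+\tfrac12 c_F$, where $c_F$ is the centroid of $F$ (any point of $\relint(F)$ would do).

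For the width computation I would use the second formula in \cref{def:pdirw}: $\pdirw(A,g,x)=\min_{S\in S_x}\max_{s\in A,\,w\in S}\langle g,s-w\rangle$. Since $A\subseteq P\subseteq\{\langle g,\cdot\rangle\le b\}$ while the vertices of $F$ (which lie in $A$, as all vertices of $\conv(A)$ do) attain $\langle g,\cdot\rangle=b$, we get $\max_{s\in A}\langle g,s\rangle=b$. Now test the particular set $S:=\{v\}\cup\vertices(F)\subseteq A$: the point $x$ is a proper convex combination of its elements (weight $\tfrac12$ on $v$ and $\tfrac1{2\,\lvert\vertices(F)\rvert}$ on each vertex of $F$, all strictly positive), so $S\in S_x$. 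For this $S$ we have $\min_{w\in S}\langle g,w\rangle=\langle g,v\rangle$, because every vertex of $F$ satisfies $\langle g,\cdot\rangle=b>\langle g,v\rangle$. Hence $\max_{s\in A,\,w\in S}\langle g,s-w\rangle=b-\langle g,v\rangle$, which gives $\pdirw(A,g,x)\le b-\langle g,v\rangle=\dist(\aff(F),v)$.

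The only step that genuinely needs an argument — and in that sense the main obstacle, since $g$ is the \emph{outward} normal and hence is never a feasible direction at a boundary point of $P$ — is checking that the chosen triple is legal, i.e.\ that $r=g\in\cone(P-x)\setminus\{0\}$. I would prove this by showing $x\in\interior(P)$, after which $\cone(P-x)=\RR^d$ and we are done. The key point is that $c_F\in\relint(F)$, so the only facet of $P$ containing $c_F$ is $F$, whereas $v\notin F$; therefore no facet of $P$ contains both $v$ and $c_F$, so for every facet inequality $\langle g',\cdot\rangle\le b'$ at least one of $v,c_F$ satisfies it strictly, and hence the open segment $(v,c_F)\ni x$ lies in $\interior(P)$. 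Putting the pieces together, $\pwidth(A)\le\pdirw(A,g,x)\le\dist(\aff(F),v)$ for every facet $F$ and every $v\in\vertices(P)\setminus F$, so $\pwidth(A)\le\min_{F\in\facets(P)}\dist(\aff(F),\vertices(P)\setminus F)=\vf(\conv(A))$. As a sanity check, on the cube $P=[0,1]^d$ with $g=e_1$ every vertex off the facet $\{x_1=1\}$ has $\langle e_1,v\rangle=0$, so the construction yields only $\pwidth\le 1$, which is consistent with the known value $\pwidth([0,1]^d)=1/\sqrt d$ and with $\vf([0,1]^d)=1$.
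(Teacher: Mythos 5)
Your proof is correct and follows essentially the same route as the paper's: take $K=P$, let $r$ be the outer normal to the facet $F$, take the base point in the relative interior of the pyramid $\conv(F\cup\{v\})$, and test the set $S$ consisting of $v$ together with the vertices of $F$. You simply spell out the details the paper leaves implicit (feasibility of $r$ via $x\in\interior(P)$, membership of $S$ in $S_x$, and the width computation), so no substantive difference.
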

\begin{figure}[ht]
	\centering
	\includegraphics[width=0.45\columnwidth]{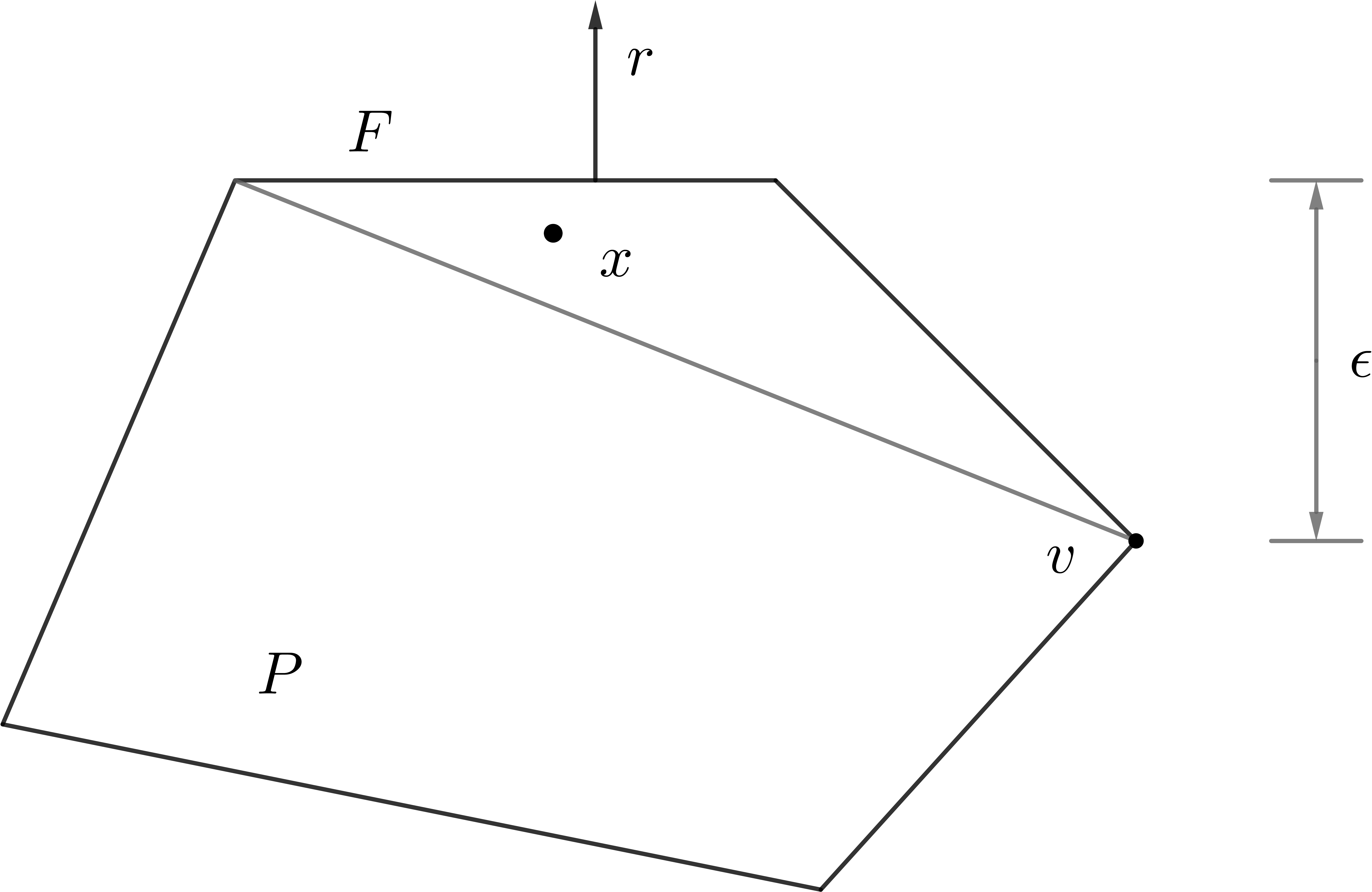}
	\caption{Proof of \cref{prop:vfpwidth}}
	\label{fig:vf vs pwidth}
\end{figure}
\begin{proof}
	Let $P = \conv(A)$.
	Let $F$ be a facet of $P$ and pick $v \in \vertices(P) \setminus F$ so that $\dist \bigl(v, \aff(F)\bigr) = \eps := \vf(P)$.
	Pick $x \in \relint \bigl( \conv(F \cup \{v\}) \bigr) $ and let $r$ be the unit outer normal vector to $F$ (in $\aff(P)$ if $P$ is not full-dimensional).
	We set $K=P$ in \cref{def:pwidth} so that $r \in \cone(K-x)=\aff(P)$ and $\pwidth(A) \leq \pdirw(K\cap A,r,x) = \pdirw(A,r,x)$.
	Now, set $S = A \cap (F \cup \{v\})$ in \cref{def:pdirw} so that, with these choices, $\pdirw(A,r,x) \leq \dirw(S,r) \leq \eps$.
	The claim follows.
\end{proof}

\subsubsection{Facial distance}

\begin{definition}[\cite{MR3920711}]
Let $C \subseteq \RR^d$ be a polytope with $\dim(\aff(C)) \geq 1$. 
The \emph{facial distance} of $C$ is 
\[
\Phi(C) := \min_{\substack{F \in \faces(C) \\ \emptyset \varsubsetneq F \varsubsetneq C}} d(F, \conv (\vertices (C)\setminus F)).
\]
\end{definition}


	
\subsubsection{Relation between facial distance and pyramidal width}

One of the motivations of \cite{MR3920711} to introduce parameter $\Phi$ is that it is the same as $\pwidth$ (except in degenerate cases) while the definition of $\Phi$ is simpler to use in many cases.
We quote their result next.
\begin{theorem}[{\cite[Theorem 2]{MR3920711}}]\label{thm:facialdistancepwidth}
Let $A \subseteq \RR^d$ be a finite set with at least two points.
Then $\Phi\bigl(\conv(A)\bigr) = \pwidth(A)$.
\end{theorem}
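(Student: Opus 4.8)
The plan is to prove the two inequalities $\pwidth(A)\le\Phi(\conv A)$ and $\pwidth(A)\ge\Phi(\conv A)$ separately. First I would reduce to the case $A=V:=\vertices(C)$ with $C:=\conv(A)$ full dimensional: work inside $\aff(C)$, and observe (short argument) that points of $A$ lying in $\conv$ of the others change neither $\pwidth(A)$ nor $\Phi(C)$. Throughout I would use three elementary facts: for disjoint compact convex sets $X,Y$, one has $\dist(X,Y)=\max_{\norm{u}\le 1}\bigl(\min_{y\in Y}\langle u,y\rangle-\max_{x\in X}\langle u,x\rangle\bigr)$, realized by the unit vector along $q-p$ for a nearest pair $p\in X$, $q\in Y$; a point lies in $\relint\conv(T)$ iff it is a convex combination of $T$ with all coefficients strictly positive; and the unwound identity $\pdirw(S,r,x)=\max_{a\in S}\langle\hat{r},a\rangle-\max_{T\in S_x}\min_{v\in T}\langle\hat{r},v\rangle$, where $S_x=\{T\subseteq S:x\in\relint\conv T\}$.

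For $\pwidth(A)\le\Phi(\conv A)$ I would argue constructively from a face $F$ achieving $\Phi(C)=\dist(F,\conv(V\setminus F))$, with nearest pair $p\in F$, $q\in\conv(V\setminus F)$ and unit direction $u=(q-p)/\norm{q-p}$. A preliminary step replaces $F$ by the subface $\{y\in F:\langle u,y\rangle=\max_F\langle u,\cdot\rangle\}$ (still a minimizing face, still realizing its distance at $p$, and now $\langle u,\cdot\rangle$ is constant on it). Then I would instantiate the definition of $\pwidth$ with $K$ the minimal face of $C$ containing $p$ and $q$, base point $x=\tfrac12(p+q)$, direction $r=p-q$ (which lies in $\cone(K-x)$ since $p-x=\tfrac12 r$ and $p\in K$), and active set $T=\operatorname{supp}(p)\cup\operatorname{supp}(q)$, a set of vertices of $C$ in $K$ with $x\in\relint\conv T$. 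Using that every vertex of $\operatorname{supp}(p)$ achieves the value $\max\langle\hat{r},\cdot\rangle$ and every vertex of $\operatorname{supp}(q)$ achieves the common value $\langle\hat{r},q\rangle$, a direct computation gives $\pdirw(K\cap A,r,x)\le\langle\hat{r},p\rangle-\langle\hat{r},q\rangle=\Phi(C)$. The nontrivial point is to verify $\max_{a\in K\cap A}\langle\hat{r},a\rangle=\langle\hat{r},p\rangle$, i.e.\ that no vertex of $K$ lies strictly farther than $p$ in the direction $\hat{r}$; this is exactly where the choice of $K$ and the preliminary flattening of $F$ are needed.

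For the reverse inequality I would take an arbitrary admissible triple $(K,x,r)$ with $T^*$ achieving the inner maximum, so that $\pdirw(K\cap A,r,x)=M-\mu$ with $M=\max_{a\in K\cap A}\langle\hat{r},a\rangle$ and $\mu=\min_{v\in T^*}\langle\hat{r},v\rangle$; since $r\in\cone(K-x)$ one gets $M>\langle\hat{r},x\rangle\ge\mu$. The goal is to produce a proper nonempty face $F$ of $C$ with points $p\in F$, $q\in\conv(V\setminus F)$ satisfying $\norm{p-q}\le M-\mu$, whence $\Phi(C)\le M-\mu$; minimizing over triples then yields $\pwidth(A)\ge\Phi(C)$, and together with the previous paragraph $\pwidth(A)=\Phi(\conv A)$. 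Here $F$ cannot simply be the face of $K$ exposed by $\hat{r}$, because $C$ may extend past $K$ in the $\hat{r}$ direction; instead $F$ should be read off from the decomposition $x=\sum_{v\in T^*}\gamma_v v$ together with the hyperplane $\{y:\langle\hat{r},y\rangle=\mu\}$, splitting $T^*$ into the vertices on that hyperplane and the rest and using $x\in\relint\conv T^*$ to locate a nearby point on the far side.

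I expect this last step — selecting the correct face $F$ in the $\ge$ direction — to be the main obstacle, with the ``$p$ maximizes $\langle\hat{r},\cdot\rangle$ over $K$'' verification in the $\le$ direction a secondary one. Small explicit examples in $\RR^3$ (suitably skewed tetrahedra) show that for a given admissible triple neither the $\hat{r}$-maximal nor the $\hat{r}$-minimal face of $C$ need realize the bound $M-\mu$, so the choice of $F$ genuinely has to be decomposition-dependent, likely with a short case analysis according to whether $T^*$ meets the $\hat{r}$-maximal face of $K$. By contrast the reductions (to $A=V$, to $C$ full dimensional) and the convex-duality reformulation of $\Phi$ are routine.
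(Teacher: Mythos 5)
The paper does not prove this statement---it is quoted from Pe\~na and Rodr\'iguez \cite{MR3920711}---so there is no in-paper proof to compare against; what follows judges your plan on its own. There are two genuine gaps. The smaller one is your opening reduction: the claim that points of $A$ lying in the convex hull of the others change neither $\pwidth(A)$ nor $\Phi(\conv A)$ is false for $\pwidth$. With $A=\{0,\tfrac12,1\}\subseteq\RR$ one has $\pwidth(\{0,1\})=1$, but taking $K=[0,1]$, $x=\tfrac12$, $r=1$ and $T=\{\tfrac12\}\in S_x$ gives $\pdirw(A,1,\tfrac12)\le 1-\tfrac12=\tfrac12$, so $\pwidth(A)\le\tfrac12$. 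Thus $\pwidth$ is not an invariant of $\conv(A)$, the reduction to $A=\vertices(C)$ cannot be made, and the same example shows the identity as transcribed here (with $\Phi$ defined through $\vertices(C)\setminus F$) can only hold when $A$ already equals the vertex set; Pe\~na and Rodr\'iguez take the facial distance relative to $A$ itself, i.e.\ with $\conv(A\setminus F)$, and a correct proof has to carry the non-vertex atoms of $A$ through the entire argument rather than discard them.

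The larger gap is that the inequality $\pwidth(A)\ge\Phi$ is simply not proved: you explicitly defer the construction of the face $F$ from an admissible triple $(K,x,r)$ and the optimal set $T^*$, calling it the main obstacle. That construction is the substantive half of the theorem---it is exactly where the combinatorial data $(T^*,\mu,M)$ must be converted into an actual face of $C$ together with points $p\in F$ and $q\in\conv(A\setminus F)$ at distance at most $M-\mu$---and gesturing at a decomposition-dependent case analysis does not close it. By contrast, your $\le$ direction is essentially sound and can be completed along the lines you indicate: for a minimizing face $F$ with nearest pair $(p,q)$ the function $\langle u,\cdot\rangle$ is automatically constant on $F$ (your flattening is in fact vacuous there: a vertex $w\in F$ with $\langle u,w\rangle<\langle u,p\rangle$ would lie outside the flattened face $F'$, and the optimality of $(p,q)$ for the pair $F'$, $\conv(\vertices(C)\setminus F')$ then forces $\langle \hat r,w\rangle\le\langle\hat r,q\rangle<\langle\hat r,p\rangle$, a contradiction), after which every point of $K\cap A$ lies either on $F$ with value $\langle\hat r,p\rangle$ or in $\conv(\vertices(C)\setminus F)$ with value at most $\langle\hat r,q\rangle$, settling the verification you flagged. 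This half closely mirrors the paper's own proof of \cref{prop:vfpwidth}.
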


\subsubsection{Summary result}
\begin{theorem}\label{thm:summary}
Let $A \subseteq \RR^d$ be a finite set with at least two points. Then
\[
\minwidth(A) \leq \Phi \bigl(\conv(A)\bigr) = \pwidth(A) \leq \vf \bigl(\conv (A) \bigr).
\]
\begin{proof}
	Immediate from \cite[Section 3.1]{NIPS2015_5925}, \cref{thm:facialdistancepwidth,prop:vfpwidth}.
\end{proof}
\end{theorem}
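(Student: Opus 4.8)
The plan is to read the claimed chain of inequalities from left to right and obtain each piece from an ingredient that, by this point in the paper, is already available.

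For the leftmost inequality $\minwidth(A) \le \pwidth(A)$, I would invoke the observation of \cite[Section 3.1]{NIPS2015_5925}. Concretely, unwinding \cref{def:pwidth,def:pdirw}: every term appearing in the minimum that defines $\pwidth(A)$ has the form $\pdirw(K\cap A, r, x) = \dirw\bigl(S\cup\{s(K\cap A,r)\},\,r\bigr)$ for some face $K$ of $\conv(A)$, some base point $x$, some feasible direction $r$, and some $S \in S_x$. This is the directional width along $r$ of a particular finite subset $S' := S\cup\{s(K\cap A,r)\}$ of $A$, hence it is at least $\width(S') \ge \minwidth(A)$. Taking the minimum over all admissible $K,x,r$ gives $\pwidth(A) \ge \minwidth(A)$. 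The only point to watch is the convention that $\width$ is measured relative to the affine hull of the subset, which is routine bookkeeping.

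For the middle equality $\Phi\bigl(\conv(A)\bigr) = \pwidth(A)$ I would simply cite \cref{thm:facialdistancepwidth} (the result of \cite{MR3920711} quoted above); combined with the previous step this already yields $\minwidth(A) \le \Phi\bigl(\conv(A)\bigr)$. For the rightmost inequality $\pwidth(A) \le \vf\bigl(\conv(A)\bigr)$ I would cite \cref{prop:vfpwidth}, proved above (and whose proof also records why the reverse inequality claimed in \cite{MR3920711} fails, e.g.\ for the cube). Chaining the three facts gives the theorem. Since all three components are either already established or directly quotable, there is no genuine obstacle at this stage: the statement is essentially a one-line corollary. The real work was front-loaded into \cref{prop:vfpwidth} and into recognizing that the correct direction of the $\Phi$–$\vf$ comparison is the one stated here; once those are in place the assembly is immediate.
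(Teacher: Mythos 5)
Your proposal is correct and follows exactly the same route as the paper: the leftmost inequality is the observation from \cite[Section 3.1]{NIPS2015_5925}, the equality is \cref{thm:facialdistancepwidth}, and the rightmost inequality is \cref{prop:vfpwidth}. The only difference is that you unwind the definitions to justify $\minwidth(A)\leq\pwidth(A)$ explicitly, whereas the paper simply cites it; both are fine.
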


\fi

\section{Conditioning of simplices}\label{sec:simplices}

In this section we show that the smoothed conditioning of any simplex is polynomial.
This implies that several FWMs have smoothed polynomial complexity on the minimum norm point in a simplex problem and the minimization of many convex functions on a simplex.
To put this result in context, we first argue (based on known results) that even a simplex with vertices having 0--1 coordinates can have bad conditioning.
Another relevant context to keep in mind is the fact that linear programming reduces in strongly polynomial time to the minimum norm point in a simplex \cite{DBLP:journals/siamcomp/LoeraHR20}.

\subsection{Equality of \texorpdfstring{$\width$}{width} and \texorpdfstring{$\minwidth$}{minwidth} of a simplex}
We start with the observation that the minwidth of a simplex is the same as its width.
\begin{lemma}\label{lem:minwidth}
	Let $A$ be the vertex set of a simplex in $\RR^d$ and $A_0 \subset A$ which includes more than one vertex.
	Then  $\width(A) \leq\width(A_0)$. 
	In particular, $\minwidth(A) = \width(A)$.
\end{lemma}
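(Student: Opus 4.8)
The plan is to reduce the statement to the special case where $A_0$ is obtained from $A$ by deleting a \emph{single} vertex, and then to handle that case by tilting an optimal direction for $A_0$ slightly out of $\aff(A_0)$.

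\textbf{Reduction.} Any subset of the vertex set of a simplex is again affinely independent, hence itself the vertex set of a simplex. So it suffices to prove: if $A$ is the vertex set of a $k$-simplex with $k\ge 2$ and $v\in A$, then $\width(A)\le\width(A\setminus\{v\})$. Indeed, given a general $A_0\subsetneq A$ with $\card{A_0}\ge 2$, delete the vertices of $A\setminus A_0$ one at a time; at every step the set we delete from is the vertex set of a simplex with at least $3$ vertices, so the special case applies and the width does not increase, giving $\width(A)\le\dotsb\le\width(A_0)$. The final claim $\minwidth(A)=\width(A)$ then follows, since $\minwidth(A)$ is the minimum of $\width(S)$ over subsets $S\subseteq A$ with $\card{S}\ge 2$, each such $\width(S)$ is at least $\width(A)$ by the lemma, and $S=A$ is attained.

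\textbf{The one-vertex case.} Work inside $E:=\aff(A)$, whose direction space $L$ has dimension $k$. Put $A':=A\setminus\{v\}$, let $H:=\aff(A')$, which is a hyperplane of $E$ because $A$ is a simplex, and let $n\in L$ be a unit normal to $H$. Every point of $A'$ has the same inner product $c:=\langle n,\cdot\rangle$ with $n$, while $h:=\langle n,v\rangle-c\neq 0$ since $v\notin H$. By compactness of the unit sphere in the direction space of $H$, pick a unit direction $r_0$ parallel to $H$ with $\dirw(A',r_0)=\width(A')$, and for $\theta\in(-\pi/2,\pi/2)$ set $r_\theta:=\cos\theta\,r_0+\sin\theta\,n$, a unit vector of $L$ (here $r_0\perp n$). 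Projecting the vertices of $A'$ onto $r_\theta$ shifts them all by the common amount $c\sin\theta$, so their range along $r_\theta$ is $[\,a\cos\theta+c\sin\theta,\;b\cos\theta+c\sin\theta\,]$ where $a:=\min_{u\in A'}\langle r_0,u\rangle$, $b:=\max_{u\in A'}\langle r_0,u\rangle$; the projection of $v$ is $\langle r_0,v\rangle\cos\theta+(c+h)\sin\theta$. Dividing the inclusion of $v$'s projection in this range by $\cos\theta>0$, the condition that $v$ does not extend the range becomes $a\le \langle r_0,v\rangle+h\tan\theta\le b$, i.e.\ $\tan\theta\in\frac1h([a,b]-\langle r_0,v\rangle)$, a nonempty closed interval of $\RR$. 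Since $\tan$ maps $(-\pi/2,\pi/2)$ onto $\RR$, choose such a $\theta$; then the vertex $v$ lies within the range of the projections of $A'$, so
\[
\dirw(A,r_\theta)=\cos\theta\cdot\dirw(A',r_0)\le\dirw(A',r_0)=\width(A'),
\]
and therefore $\width(A)\le\dirw(A,r_\theta)\le\width(A')$, as desired.

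\textbf{Main obstacle.} The only delicate point, and the reason for deleting one vertex at a time, is that tilting works cleanly only when there is a single ``extra'' vertex and hence a one-dimensional normal direction to rotate into; if several vertices were removed at once, their projections could fall on opposite sides of the range and no single rotation of $r_0$ would keep all of them inside it. Everything else is routine: the existence of the rotation angle reduces to surjectivity of $\tan$ onto a nonempty interval, and the only place the simplex hypothesis enters is in guaranteeing $v\notin\aff(A')$, which is what makes that interval (and the slope $h$) well behaved.
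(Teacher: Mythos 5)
Your proof is correct and follows essentially the same route as the paper's: reduce to deleting one vertex at a time, then show the width cannot increase by extending/tilting the width-achieving pair of parallel planes of the facet into parallel hyperplanes enclosing the whole simplex. Your explicit computation with $r_\theta$ and the surjectivity of $\tan$ just makes rigorous the step the paper asserts without detail, namely that an enclosing extension exists and that the optimal one is no wider.
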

\begin{proof}
	We prove by induction in $d$.
    The width of a polytope is the minimum distance between parallel supporting hyperplanes in its affine hull. 
	Width of a $2$-simplex is the minimum height of  triangle, which is smaller than the length of any edge.
	For a $k$-simplex $A$, suppose the width of one of its facet is given by the distance between two parallel $(k-2)$-dimensional planes, $p^{k-2}_1$ and $p^{k-2}_2$.
	One can extend $p^{k-2}_1$ and $p^{k-2}_2$ to parallel hyperplanes in $\mathbb{R}^k$ that enclose $A$.
	Suppose extensions $p^{k-1}_1$ and $p^{k-1}_2$ give the minimum distance.
	Then,
	\[
	\dist\bigl( p^{k-1}_1, p^{k-1}_2 \bigr) 
	= \min_{a \in p^{k-1}_1, b \in p^{k-1}_2} \norm{a-b} 
	\leq \min_{a \in p^{k-2}_1, b \in p^{k-2}_2} \norm{a-b} 
	= \dist\bigl( p^{k-2}_1, p^{k-2}_2 \bigr)
	\]
	which shows that the width of a $k$-simplex is less than the width of any of its facets.
	The claim then follows by induction.
\end{proof}

\subsection{Bad worst case conditioning of a 0--1 simplex}

Lacoste-Julien and Jaggi \cite{NIPS2015_5925} observed that the $\minwidth$ of the unit cube in $\RR^d$ is exponentially small in $d$. 
This example was one of their motivations for introducing $\pwidth$, which is $1/\sqrt{d}$ for the cube. 
Their observation is based on the following result by Alon and Vu:
\begin{theorem}[{\cite[Theorem 3.2.2]{alon1997anti}, \cite[Corollary 27]{MR1785291}}]\label{tmh:alonvu}
There are $d+1$ vectors in $\{0,1\}^d$ that form the vertices of a $d$-dimensional simplex $S$ so that 
	\[
		\frac{2^{d-1}}{d^{d/2}} \leq \vf(S) \leq \frac{2^{d(2+ o(1))}}{d^{d/2}}.
	\]
\end{theorem}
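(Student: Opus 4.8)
The plan is to recall the argument of Alon and Vu. The first step is to translate $\vf(S)$ into linear algebra: a $d$-simplex has exactly $d+1$ facets, each opposite a unique vertex, so $\vf(S) = \min_{i} \dist(v_i, \aff F_i)$, where $F_i$ is the facet missing $v_i$, and equivalently $\vf(S) = d\,\vol_d(S)/\max_i \vol_{d-1}(F_i)$. I would take the construction to have $0$ as a vertex and let $P$ be the $d\times d$ matrix with rows $v_1,\dots,v_d$; then \cref{lem:affmin} writes each $\dist(v_i,\aff F_i)$ as the reciprocal of the norm of an explicit vector built from $P^{-1}$ (a row of $P^{-1}$ when $i \ge 1$, and $P^{-\top}\mathbf 1$ when $i = 0$), while $\vol_d(S) = |\det P|/d!$. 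Hence, up to factors polynomial in $d$, $\vf(S)$ is the least singular value $\sigma_{\min}(P) = 1/\norm{P^{-1}}$, and the problem reduces to constructing a nonsingular $0$--$1$ matrix $P$ of order $d$ that is as ill conditioned as possible --- an ``anti-Hadamard'' matrix.

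For the lower bound $\vf(S) \ge 2^{d-1}/d^{d/2}$, I would combine two classical facts about $0$--$1$ matrices. First, $|\det P|$ is a nonzero integer, so $\vol_d(S) \ge 1/d!$. Second, a Hadamard-type estimate shows that any $(d-1)$-simplex whose $d$ vertices lie in $\{0,1\}^d$ has $(d-1)$-volume at most $\poly(d)\cdot d^{d/2}/\bigl(2^{d-1}(d-1)!\bigr)$: writing this volume by Cauchy--Binet as $\tfrac{1}{(d-1)!}\bigl(\sum_J (\det E_J)^2\bigr)^{1/2}$ for the edge matrix $E$, each minor $\det E_J$ is bounded by the maximum determinant of a $(d-1)\times(d-1)$ $0$--$1$ matrix, which is at most $d^{d/2}/2^{d-1}$ by the affine change of variables $x \mapsto 2x-\mathbf 1$ together with Hadamard's inequality for $\pm 1$ matrices of order $d$. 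The gain of the $2^{-(d-1)}$ factor over the naive Hadamard bound is exactly the point: edge vectors of a $0$--$1$ simplex are differences of $0$--$1$ vectors and so cannot realize an unrestricted $\pm 1$ Hadamard configuration. Dividing the first estimate by the second in $\vf(S) = d\,\vol_d(S)/\max_i \vol_{d-1}(F_i)$ gives the bound.

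For the upper bound $\vf(S) \le 2^{d(2+o(1))}/d^{d/2}$, I would exhibit an explicit $0$--$1$ matrix $P = P_d$ defined recursively in $d$: embed a smaller anti-Hadamard block and adjoin a carefully chosen row and column of $0$'s and $1$'s so that a near-null vector of $P_d$ is obtained from a near-null vector of the smaller block with its image under the matrix contracted by a constant factor, while $P_d$ remains $0$--$1$ and nonsingular. Iterating this $\Theta(d)$ times yields $\norm{P_d^{-1}} \ge d^{d/2}/2^{\Theta(d)}$ --- within a $2^{O(d)}$ factor of the maximum possible, since the previous paragraph bounds $\norm{P^{-1}}$ by $\poly(d)\,d^{d/2}/2^{d-1}$ for any nonsingular $0$--$1$ matrix. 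Taking $S$ to have vertices $0$ and the rows of $P_d$, the reduction of the first paragraph gives $\vf(S) \le \poly(d)/\norm{P_d^{-1}} \le 2^{d(2+o(1))}/d^{d/2}$, and since $P_d$ is nonsingular the lower bound also applies to this same $S$, so $\vf(S) = 2^{\Theta(d)}/d^{d/2}$. The main obstacle --- the technical heart of the construction --- is designing the recursion so that ill conditioning is amplified by a fixed factor per added dimension without ever leaving the class of $0$--$1$ matrices; this is Alon and Vu's contribution, and I would follow their construction.
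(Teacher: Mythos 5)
The paper does not prove this statement at all --- it is imported verbatim from Alon--Vu (and the only part used downstream, in \cref{cor:alonvu}, is the upper bound). So the comparison here is between your sketch and the cited source. Your outline is a faithful account of the Alon--Vu argument: for a simplex $\vf(S)=\min_i\dist(v_i,\aff F_i)=d\,\vol_d(S)/\max_i\vol_{d-1}(F_i)$; the lower bound follows from $|\det P|\ge 1$ plus the Hadamard-type bound on $0$--$1$ determinants applied via Cauchy--Binet to the facet volumes; the upper bound follows from an anti-Hadamard $0$--$1$ matrix whose inverse has an entry of size $d^{d/2}/2^{2d(1+o(1))}$, which is exactly where the $2^{d(2+o(1))}$ comes from. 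The reduction to matrix conditioning is correct (one small transposition slip: with $P$ having rows $v_1,\dotsc,v_d$, \cref{lem:affmin} gives $\dist(0,\aff F_0)=1/\norm{P^{-1}\mathbbm{1}}$, and $\dist(v_i,\linspan\{v_j: j\ne i\})$ is the reciprocal of the norm of the $i$-th \emph{column} of $P^{-1}$, not a row).

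Two substantive caveats. First, the recursive anti-Hadamard construction is the entire content of the upper bound, and you give it only as a one-sentence description ("adjoin a carefully chosen row and column so that a near-null vector is contracted by a constant factor") with an explicit deferral to the reference; nothing in the proposal exhibits the added row and column or verifies that the contraction can be sustained for $\Theta(d)$ steps while staying $0$--$1$ and nonsingular. As a standalone proof this is therefore incomplete at its crux --- though it is no less complete than the paper's own treatment, which is a bare citation. Second, your lower-bound computation leaks polynomial factors in $d$ (the $\sqrt{d}$ from Cauchy--Binet, and the bordering step converting $(d-1)\times(d-1)$ $0$--$1$ minors to $\pm1$ Hadamard bounds), so it delivers $2^{d(1-o(1))}/d^{d/2}$ rather than the exact stated constant $2^{d-1}/d^{d/2}$. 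This is immaterial for every use of the theorem in the paper, but the precise constant requires the sharper bookkeeping of the original source.
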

\cite{DBLP:journals/siamcomp/LoeraHR20} observed that $\pwidth$ can be exponentially small in the size (bitlength) of a set of points with integer coordinates.
Using \cref{tmh:alonvu} and the relationships between polytope condition measures, we can immediately strengthen this result and show that this is not just a ``large numbers'' phenomenon, namely, all condition measures are exponentially small even for a 0--1 simplex:
\begin{corollary}\label{coro:minwidth-pw-facial d-vf}
	\label{cor:alonvu}
	There are $d+1$ vectors in $\{0,1\}^d$ that form the vertices of a $d$-dimensional simplex $S$ so that $\width(\vertices(S)) = \minwidth(\vertices(S)) \leq \pwidth(\vertices(S)) = \Phi(S) \leq \vf(S) \leq \frac{2^{d(2+ o(1))}}{d^{d/2}}$.
\end{corollary}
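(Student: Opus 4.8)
The plan is to combine three results already in place: the existence statement of Alon and Vu (\cref{tmh:alonvu}), the fact that for a simplex $\width$ and $\minwidth$ coincide (\cref{lem:minwidth}), and the summary chain of inequalities among the four condition measures (\cref{thm:summary}). The corollary is essentially a bookkeeping consequence of these, so there is no substantive obstacle; the only real content beyond the quoted Alon--Vu bound is the structural relation $\Phi(C) \le \vf(C)$ baked into \cref{thm:summary}, which is what lets us push the (small) bound on $\vf$ down to all the other measures.

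Concretely, I would first invoke \cref{tmh:alonvu} to fix $d+1$ vectors in $\{0,1\}^d$ that are the vertices of a $d$-dimensional simplex $S$ with
\[
\vf(S) \le \frac{2^{d(2+o(1))}}{d^{d/2}}.
\]
Next I would set $A := \vertices(S)$, a finite set of $d+1 \ge 2$ points, and note $\conv(A) = S$. Applying \cref{lem:minwidth} to the simplex vertex set $A$ gives $\width(A) = \minwidth(A)$. Applying \cref{thm:summary} to the same $A$ gives
\[
\minwidth(A) \le \Phi(S) = \pwidth(A) \le \vf(S).
\]
Stringing these together with the Alon--Vu bound yields
\[
\width(\vertices(S)) = \minwidth(\vertices(S)) \le \pwidth(\vertices(S)) = \Phi(S) \le \vf(S) \le \frac{2^{d(2+o(1))}}{d^{d/2}},
\]
which is exactly the claim.

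The only points worth checking are that the hypotheses of the two invoked lemmas are met — $A$ is the vertex set of a simplex (for \cref{lem:minwidth}) and a finite set with at least two elements (for \cref{thm:summary}) — both of which hold trivially for $A = \vertices(S)$ with $d \ge 1$. No part of this requires new computation; the heavy lifting was done in proving \cref{thm:summary} (in particular \cref{prop:vfpwidth}, correcting the direction of the inequality relative to \cite{MR3920711}) and in the cited Alon--Vu construction.
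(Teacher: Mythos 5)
Your proof is correct and follows exactly the same route as the paper's: take the Alon--Vu simplex from \cref{tmh:alonvu}, get the leftmost equality from \cref{lem:minwidth}, the rightmost bound from \cref{tmh:alonvu} itself, and the remaining relations from \cref{thm:summary}. The hypothesis checks you include are straightforward but harmless.
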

\begin{proof}
	Let $S$ be the $d$-dimensional simplex given by \cref{tmh:alonvu}.
	\Cref{lem:minwidth} gives the leftmost equality.
	The rightmost inequality is one of the conclusions of \cref{tmh:alonvu}.
	The other relations follow from \cref{thm:summary}.
\end{proof}

\subsection{Polynomial smoothed complexity of FWMs on a simplex}
	Now we start analyzing smoothed complexity of FWMs on the minimization of a strongly convex function with Lipschitz gradient on a simplex.
	\begin{definition}
		A differentiable function $f$ is said to have $L$-Lipschitz gradient if for some $L>0$ and for all $ x,y$ in its domain we have
		$
		\norm{\grad f(x) - \grad f(y)} \leq L \norm{x-y}
		$.
	\end{definition}
	
	\begin{definition}
		A differentiable function $f$ is $\mu$-strongly convex if for some $\mu > 0$ and for all $ x,y$ in its domain, we have
		\[
		f(y) \geq f(x) + \grad f(x)^T(y-x) + \frac{\mu}{2} \norm{y-x}^2.
		\]
	\end{definition}
	In \cite[Theorem 1]{NIPS2015_5925}, Lacoste-Julien and Jaggi proved the global linear convergence of FWMs on the minimization of a strongly convex function with Lipschitz gradient:
	suppose $u_t$ is the current point after $t$ good iterations\footnote{The number of good iterations depends on variants of FWMs being used. 
	It is always lower bounded by some linear function of the actual number of iterations. See details in \cite[Theorem 1]{NIPS2015_5925}.}, $f(u_t)$ satisfies
	\begin{align}\label{eqn:conv rate}
			f(u_t) - f^* \leq \left(1-\frac{\mu}{4L}\left(\frac{\pwidth(A)}{\diam(A)}\right)^2\right)^t(f(u_0) - f^*) ,
	\end{align}
	where $f^*$ is the optimal value and $u_0$ is the initial point.
	To show polynomial smoothed complexity, we need to prove that the measure of conditioning $\kappa = \frac{\pwidth(A)}{\diam(A)}$ is at least inverse polynomial \cnote{is it clear that inverse polynomial means $1/p$ but not $p^{-1}$?} in $d, \hfrac{1}{\sigma}, \hfrac{1}{\delta}$.
	We are going to get this by giving a polynomial lower bound on $\pwidth(A)$ and a polynomial upper bound on $\diam(A)$.

\subsubsection{Inverse polynomial smoothed minwidth}

We know from \cref{thm:summary} that $\minwidth \leq  \pwidth$, and from \cref{lem:minwidth} that $\minwidth = \width$ for any simplex. 
Thus, we instead find a lower bound on $\width$, namely the diameter of a ball contained in the simplex, which is also a lower bound on $\pwidth$.
In the next lemma, we prove that a random simplex contains a ball of radius $\Omega(d^{-2})$ with probability close to 1.
\begin{lemma}
\label{lem:simplex}
Let $A =\{A_1, \dotsc, A_{d+1}\}$ be a set of independent Gaussian random vectors with means $\mu_i$, $\norm{\mu_i} \leq 1, i \in [d+1]$, and covariance matrix $\sigma ^ 2 \mathbf{I}_d$.
Then for $\delta > 0$,
\[
\mathbb{P} \left( \minwidth \bigl(\conv (A) \bigr) \geq  \sqrt{2 \pi}\sigma \delta  (d+1)^{-2} \right) \geq 1- \delta .
\]
Moreover,
\begin{align*}
\mathbb{P} \left( \pwidth \bigl(\conv (A) \bigr) \geq  \sqrt{2 \pi}\sigma \delta  (d+1)^{-2} \right) \geq 1- \delta .
\end{align*}

\end{lemma}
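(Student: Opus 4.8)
The plan is to bound $\width(\conv(A))$ from below by a constant multiple of the inradius of the random simplex $S := \conv(A)$, and then bound that inradius below in terms of the vertex-to-opposite-facet distances, which for Gaussian points are easy to control. By \cref{lem:minwidth} we have $\minwidth(S) = \width(S)$, and by \cref{thm:summary} $\pwidth(S) \ge \minwidth(S)$, so it suffices to prove the stated bound for $\width(S)$. Moreover, if $B(x,r) \subseteq S$ then any two parallel supporting hyperplanes of $S$ are at distance at least $2r$, so $\width(S) \ge 2\,r(S)$ where $r(S)$ denotes the inradius; it therefore remains to lower bound $r(S)$.

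To do that I would express $r(S)$ through heights. Almost surely $A_1,\dots,A_{d+1}$ are affinely independent, so $S$ is a full-dimensional simplex and each facet $F_i := \conv(A_j : j \ne i)$ spans a hyperplane; write $h_i := \dist(A_i, \aff(F_i))$. Viewing $S$ as the pyramid with base $F_i$ and apex $A_i$ gives $\vol_d(S) = \tfrac1d h_i \vol_{d-1}(F_i)$, while decomposing $S$ into the $d+1$ pyramids with apex the incenter gives $\vol_d(S) = \tfrac{r(S)}{d}\sum_i \vol_{d-1}(F_i)$. Eliminating the facet volumes yields $1 = r(S)\sum_{i=1}^{d+1} 1/h_i$, hence $r(S) = \bigl(\sum_i 1/h_i\bigr)^{-1} \ge \min_i h_i/(d+1)$, so $\width(S) \ge 2\min_i h_i/(d+1)$. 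It now suffices to show $\min_i h_i \ge \sqrt{\pi/2}\,\sigma\delta/(d+1)$ with probability at least $1-\delta$.

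For the tail bound, fix $i$ and condition on $\{A_j : j \ne i\}$. Almost surely these span a hyperplane $H_i = \{x : \langle n,x\rangle = b\}$ with $\norm{n} = 1$, and $h_i = \abs{\langle n, A_i\rangle - b}$, where $\langle n, A_i\rangle \sim \mathcal{N}(\langle n,\mu_i\rangle, \sigma^2)$. By \cref{lem: gaussian conc}, conditionally on $\{A_j : j \ne i\}$ we get $\prob(h_i < t) \le \prob(\abs{X} < t) \le \sqrt{2/\pi}\, t/\sigma$ for $X \sim \mathcal{N}(0,\sigma^2)$, bounding the Gaussian density by $1/\sqrt{2\pi}$; this bound then holds unconditionally as well. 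A union bound over $i \in [d+1]$ gives $\prob(\min_i h_i < t) \le (d+1)\sqrt{2/\pi}\, t/\sigma$, and choosing $t = \sqrt{\pi/2}\,\sigma\delta/(d+1)$ makes the right-hand side equal to $\delta$. On the complementary event, $\width(S) \ge 2\min_i h_i/(d+1) \ge 2\sqrt{\pi/2}\,\sigma\delta/(d+1)^2 = \sqrt{2\pi}\,\sigma\delta/(d+1)^2$, which is the claimed inequality for $\minwidth$; the ``moreover'' part follows from $\pwidth(S) \ge \minwidth(S)$.

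The argument is short, and I do not expect a serious obstacle; the one point requiring care is that the heights $h_1,\dots,h_{d+1}$ are dependent (they are built from overlapping subsets of the vertices), which is why I condition on the complementary vertex set for each $i$ and then union bound rather than attempting to use independence. Everything else is elementary simplex geometry combined with the one-dimensional Gaussian anti-concentration in \cref{lem: gaussian conc}; note also that one could instead extract $h_i$ from \cref{lem:RudelsonVershynin}, but the direct point-hyperplane computation is cleaner here.
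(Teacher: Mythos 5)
Your proof is correct and follows essentially the same route as the paper: reduce to $\width$ via \cref{lem:minwidth} and \cref{thm:summary}, lower bound the width by the inscribed ball, control the heights $h_i$ by conditioning on the opposite facet and applying \cref{lem: gaussian conc}, then union bound. The only cosmetic difference is that you locate a ball of radius $\min_i h_i/(d+1)$ at the incenter via the volume identity $r=(\sum_i 1/h_i)^{-1}$, whereas the paper places it at the centroid using $C_i = D_i/(d+1)$; both yield the identical constant.
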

\begin{proof}
	It is easy to see that $A$ forms a simplex with probability 1. 
	From \cref{lem:minwidth}, we know the minwidth of a simplex is its width.
	Let $D_i$ be the distance from $A_i$ to the affine hull of its opposite facet, $\aff \{A_j \suchthat  j\neq i \}$.
	Conditioning on $\aff \{A_j \suchthat  j\neq i \}$, by the rotational invariance of Gaussian distribution, $D_i$ is equal in distribution to the absolute value of a Gaussian random variable with mean $\mu \in \RR $ (not necessarily be zero) and variance $\sigma^2$.
	Let  $X \sim \mathcal{N}(0,\sigma^2)$.
	By \cref{lem: gaussian conc}, we have
	$
	\mathbb{P}\left(D_i < t\right) \leq \mathbb{P}\left( \norm{X} < t \right)
	$ for all $t$.
	The right hand side is upper bounded by $2t/\sqrt{2 \pi} \sigma$, which is the product of maximal Gaussian density and length of interval.
	Apply union bound to get
	\[
	\mathbb{P}\left( \bigcap_{i=1}^{d+1} \{D_i \geq  t\}\right) \geq 1- \frac{2t(d+1)}{\sqrt{2 \pi }\sigma}.
	\]
	Let $C_i$ be the distance between the center of mass of $\conv(A)$ and $\aff(A_j:  j\neq i)$.
	Note that $C_i = D_i/(d+1)$.
	Then
	\[
	\mathbb{P}\left( \bigcap_{i=1}^{d+1} \{C_i \geq \frac{t}{d+1}\} \right) \geq 1- \frac{2t(d+1)}{\sqrt{2 \pi } \sigma}.
	\]
	The above expression states that with some probability the ball centered at the center of mass and of radius $t/(d+1)$ lies inside $\conv(A)$.
	Setting $t = \frac{\delta \sigma \sqrt{\pi}}{\sqrt{2}(d+1)}$ and using the fact that the width of the simplex is at least the diameter of the inscribed ball, we get 
	\[
	\mathbb{P}\left(\width \bigl (\conv(A) \bigr) \geq \sqrt{2 \pi}\sigma \delta  (d+1)^{-2} \right) \geq 1- \delta.
	\]
	The claim follows immediately from \cref{lem:minwidth} and \cref{thm:summary}.
\end{proof}

\subsubsection{Smoothed diameter}

\begin{lemma}\label{lem:simplex diameter}
	Let $A= \{A_1, \dotsc, A_{d+1}\}$ be a set of independent Gaussian random vectors with means $\mu_i$, $\norm{\mu_i} \leq 1, i \in [d+1]$, and covariance matrix $\sigma^2 \mathbf{I}_d$.
	Then for $\delta > 0$,
	\[
		\mathbb{P}\Bigg( \diam(A) \leq 2\Big( \sigma \sqrt{ 2d + 3 \ln \Bigl( \frac{d+1}{\delta} \Bigr)} + 1 \Big) \Bigg) \ge 1 - \delta.
	\]
\end{lemma}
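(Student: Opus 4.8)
The plan is to bound the norm of each vertex $A_i$ separately, take a union bound over the $d+1$ vertices, and then invoke the elementary fact that $\diam(A) = \max_{i,j}\norm{A_i - A_j} \le 2\max_i\norm{A_i}$: once all the $A_i$ lie in a common ball centered at the origin, twice its radius bounds the diameter.

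To control a single $\norm{A_i}$, write $A_i = \mu_i + G_i$ where $G_i \sim \mathcal{N}(0,\sigma^2\mathbf{I}_d) \deq \sigma X$ for a standard Gaussian vector $X$ in $\RR^d$. By the triangle inequality and $\norm{\mu_i} \le 1$ we get $\norm{A_i} \le 1 + \sigma\norm{X}$, so it suffices to obtain a tail bound for $\norm{X}^2 = \sum_{j=1}^d X_j^2 \sim \chi^2_d$. Applying \cref{lem:laurent} with all weights $\alpha_j = 1$ (so $Z = \norm{X}^2 - d$, $\norm{\alpha}_2 = \sqrt{d}$, $\norm{\alpha}_\infty = 1$) gives $\mathbb{P}\bigl(\norm{X}^2 \ge d + 2\sqrt{dt} + 2t\bigr) \le e^{-t}$ for every $t > 0$. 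Since $2\sqrt{dt} \le d + t$ by AM--GM, we have $d + 2\sqrt{dt} + 2t \le 2d + 3t$, hence $\mathbb{P}\bigl(\norm{X} \ge \sqrt{2d+3t}\bigr) \le e^{-t}$, and therefore $\mathbb{P}\bigl(\norm{A_i} \ge 1 + \sigma\sqrt{2d+3t}\bigr) \le e^{-t}$.

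A union bound over $i \in [d+1]$ then yields $\mathbb{P}\bigl(\max_i\norm{A_i} \ge 1 + \sigma\sqrt{2d+3t}\bigr) \le (d+1)e^{-t}$. Choosing $t = \ln\bigl(\tfrac{d+1}{\delta}\bigr)$ makes the right-hand side exactly $\delta$, and on the complementary event $\diam(A) \le 2\max_i\norm{A_i} \le 2\bigl(\sigma\sqrt{2d + 3\ln(\tfrac{d+1}{\delta})} + 1\bigr)$, which is the stated bound. There is no genuine obstacle here; the only point requiring mild care is the algebraic slackening that absorbs $d + 2\sqrt{dt} + 2t$ into $2d + 3t$, which is what lets us present the estimate in the clean form above — any alternative loosening of the Laurent--Massart tail would work just as well with a slightly different constant under the square root.
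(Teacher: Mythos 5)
Your proof is correct and is essentially identical to the one in the paper: both decompose $A_i = \mu_i + X_i$, use the triangle inequality, apply the Laurent--Massart bound (\cref{lem:laurent}) to get $\mathbb{P}(\norm{X_i} \ge \sigma\sqrt{d+2\sqrt{dt}+2t}) \le e^{-t}$, absorb this into $\sqrt{2d+3t}$ via AM--GM, take a union bound over the $d+1$ points, and set $t=\ln((d+1)/\delta)$. The only cosmetic difference is that the paper applies the lemma with weights $\alpha=(\sigma^2,\dotsc,\sigma^2)$ rather than rescaling to a standard Gaussian first.
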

\begin{proof}
	Let $A_i = \mu_i + X_i$, where $X_i \sim \mathcal{N}(0, \sigma^2 \mathbf{I}_d)$.
	Let $t>0$.
	Triangle inequality gives that
	\(
	\mathbb{P}(\|A_i\| > t+1) = \mathbb{P}(\|X_i + \mu_i \| > t+1) \le \mathbb{P}(\|X_i\|>t).
	\)
	Apply \cref{lem:laurent} with $\alpha = (\sigma^2,\dotsc,\sigma^2)$, we have
	\[
		\mathbb{P}\left( \|A_i\| > \sigma \sqrt{ d + 2\sqrt{dt} + 2t} + 1 \right) \leq \mathbb{P} \left( \norm{X_i} \geq \sigma \sqrt{ d + 2\sqrt{dt} + 2t} \right) \leq e^{-t},
	\]
	which shows that every $A_i$ is contained in a ball of radius $\sigma \sqrt{ d + 2\sqrt{dt} + 2t} + 1 \leq \sigma \sqrt{ 2d + 3t} + 1 $ with high probability.
	With union bound, we see the diameter of the ball is an upper bound of the diameter of convex hull of $A$:
	\[
	\mathbb{P}\Big( \diam \bigl(\conv(A) \bigr) \leq 2 \big( \sigma \sqrt{ 2d + 3t} + 1 \big) \Big) \ge 1 - (d+1) e^{-t}.
	\]
	The claim then follows by setting $t=\ln \bigl((d+1)/\delta\bigr)$.
\end{proof}

Next we restate and prove our main theorem for this section: 
\thmsimplexsmoothed*
\begin{proof}
	We proved in \cref{lem:simplex} and \cref{lem:simplex diameter} that,
	\[
		\mathbb{P} \left( \pwidth \bigl(\conv (A) \bigr) \geq  \sqrt{2 \pi}\sigma \delta  (d+1)^{-2} \right) \geq 1- \delta .
	\]
	and
	\[
		\mathbb{P}\Bigg( \diam(A) \leq 2\Big( \sigma \sqrt{ 2d + 3 \ln \Bigl( \frac{d+1}{\delta} \Bigr)} + 1 \Big) \Bigg) \ge 1 - \delta.
	\]
	Thus with probability at least $1-2\delta$, we have
	\begin{align*}
	\frac{\pwidth(A)}{\diam(A)}
	& \geq  \frac{\sqrt{2 \pi} \sigma \delta  (d+1)^{-2}}{  2\Big( \sigma \sqrt{ 2d + 3 \ln \left( \frac{d+1}{\delta} \right)} + 1 \Big)  } \\
	& =   \frac{ \delta \sqrt{\hfrac{\pi}{2}} }{ (d+1)^2  \Bigl( \sqrt{ 2d + 3 \ln \left( \frac{d+1}{\delta} \right)} + \frac{1}{\sigma} \Bigr)  } \\
	& \geq 1/ \rho(d, 1/\sigma, 1/\delta)
	\end{align*}
	where $\rho$ is a polynomial function of $d, \hfrac{1}{\sigma}, \hfrac{1}{\delta}$.
\end{proof}


Going back to \eqref{eqn:conv rate}, let $h_t = f(u_t) - f^* $. We have
\begin{align*}
h_t \leq \left(1-\frac{\mu}{4L}\left(\frac{\pwidth(A)}{\diam(A)}\right)^2\right)^t h_0.
\end{align*}
Based on our smoothed analysis on the measure of conditioning in \cref{thm:simplex smoothed}, with probability at least $1-2\delta$
\[
h_t \leq \left(1 - \frac{\mu}{4L \rho^2} \right)^t h_0 \leq e^{-\frac{\mu t}{4L \rho^2}} h_0.
\]
Hence one needs at most $\frac{ 4L \rho^2  \ln(\frac{1}{\eps})}{ \mu }$ good iterations to get a solution whose value is within distance $\eps(f_0 - f^*)$ of $f^*$.
Let $T$ denote the number of good iterations, we have (using the notation from \cref{def:psc})
\[
\max_{\substack{A \subseteq B(0,1) \subseteq \RR^d \\ \card{A}=d+1}} \prob_g \left( T(A+g) \geq \frac{ 4L \rho(d, \frac{1}{\sigma},\frac{1}{\delta})^2  \ln( \frac{1}{\eps})}{ \mu } \right) \leq 2\delta.
\]

\section{Conditioning of random matrices}\label{sec:matrixconditioning}

In this section we prove that the smallest singular value of some square submatrix of a $d$-by-$n$ Gaussian random matrix is exponentially small with probability exponentially close to 1 when $n/d \geq c > 1$.
From \cref{lem:RudelsonVershynin}, we know that the smallest singular value of a square matrix is comparable to the minimum distance between one column vector and the span of the other column vectors (one-off-distance).
If we consider exponentially narrow bands around each span of $d-1$ column vectors of a rectangular matrix, the matrix will have exponentially small minimum singular value if some other column vector falls in one of those bands.
We lower bound the Gaussian measure of the union of bands by a constant using the first two terms of the inclusion-exclusion principle (Bonferroni inequality).
See \cref{sec:introduction} for a high level overview of the proof.

We start by giving an upper bound of the intersection of two bands in Gaussian measure, which appears in the second term of the inclusion-exclusion principle. The following lemma shows that the Gaussian measure of the intersection depends on the width of bands and the angle between two bands.


\begin{lemma}\label{lem:parallelogram}
	Let $u,v \in \RR^d$ be unit length vectors, let $\eps > 0$, and let $c_S, c_T \in \RR$.
	Let
	\begin{align*}
		\mathcal{B}_S = \{x \in \RR^d \suchthat c_S \leq \inner{x}{u} \leq c_S +\eps \}, \\
		\mathcal{B}_T = \{x \in \RR^d \suchthat c_T \leq \inner{x}{v} \leq c_T +\eps \}.
	\end{align*}
	Then
	\[
		\mathcal{G}(\mathcal{B}_S \cap \mathcal{B}_T) \leq \frac{\eps^2}{\sqrt{2\pi ( 1-(\inner{u}{v})^2 )}}.
	\]
\end{lemma}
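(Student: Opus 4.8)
The plan is to use the rotational invariance of the standard Gaussian to reduce the computation to the plane $W := \linspan\{u,v\}$, on which $\mathcal{B}_S \cap \mathcal{B}_T$ cuts out a parallelogram (hence the name of the lemma), and then to bound the Gaussian measure of that parallelogram by its area times the maximal Gaussian density. First I would dispose of the degenerate case: if $(\inner{u}{v})^2 = 1$ the right-hand side is $+\infty$ and there is nothing to prove, so assume $u,v$ are linearly independent, i.e.\ $(\inner{u}{v})^2 < 1$. Since $u,v \in W$, both slabs are cylinders over $W$: writing $\pi_W$ for the orthogonal projection onto $W$, we have $\mathcal{B}_S \cap \mathcal{B}_T = \pi_W^{-1}(\widetilde P)$ with $\widetilde P := \{w \in W \suchthat c_S \le \inner{w}{u} \le c_S + \eps,\ c_T \le \inner{w}{v} \le c_T + \eps\}$. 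Because $\pi_W$ applied to a standard Gaussian vector in $\RR^d$ yields a standard Gaussian vector in $W$, this gives $\mathcal{G}(\mathcal{B}_S \cap \mathcal{B}_T) = \gamma(\widetilde P)$, where, after fixing an orthonormal basis and identifying $W$ with $\RR^2$, $\gamma$ denotes the standard Gaussian probability measure on $\RR^2$.

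Next I would compute $\area(\widetilde P)$. Under the identification $W \cong \RR^2$, let $M$ be the $2 \times 2$ matrix whose rows are the coordinate vectors of $u$ and $v$; these rows are unit vectors and $M M^\top = \begin{pmatrix} 1 & \inner{u}{v} \\ \inner{u}{v} & 1 \end{pmatrix}$, so $\lvert \det M \rvert = \sqrt{1 - (\inner{u}{v})^2} \ne 0$. The linear isomorphism $w \mapsto (\inner{w}{u}, \inner{w}{v}) = Mw$ carries $\widetilde P$ onto the axis-parallel rectangle $[c_S, c_S + \eps] \times [c_T, c_T + \eps]$ of area $\eps^2$, hence $\area(\widetilde P) = \eps^2 / \lvert \det M \rvert = \eps^2 / \sqrt{1 - (\inner{u}{v})^2}$. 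Bounding the Gaussian density on $\RR^2$ by its maximum value $\tfrac{1}{2\pi}$,
\[
\mathcal{G}(\mathcal{B}_S \cap \mathcal{B}_T) = \gamma(\widetilde P) \le \frac{1}{2\pi} \area(\widetilde P) = \frac{\eps^2}{2\pi \sqrt{1 - (\inner{u}{v})^2}} \le \frac{\eps^2}{\sqrt{2\pi (1 - (\inner{u}{v})^2)}},
\]
which is the claimed bound (the argument in fact yields the slightly stronger constant $\tfrac{1}{2\pi}$ in place of $\tfrac{1}{\sqrt{2\pi}}$).

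There is no serious obstacle here; the only points needing a little care are the reduction to the plane (the projection/product-measure bookkeeping) and the degenerate parallel case. An equivalent route that avoids the explicit change of variables is to write $v = (\inner{u}{v})\,u + \sqrt{1 - (\inner{u}{v})^2}\, u^\perp$ for a unit vector $u^\perp \perp u$, observe that $\inner{x}{u}$ and $\inner{x}{u^\perp}$ are independent standard Gaussians, condition on the value of $\inner{x}{u}$, and bound the two resulting one-dimensional probabilities — that $\inner{x}{u}$ lies in an interval of length $\eps$, and that $\inner{x}{u^\perp}$ lies in an interval of length $\eps/\sqrt{1-(\inner{u}{v})^2}$ — by $\eps/\sqrt{2\pi}$ and $\eps/\sqrt{2\pi(1 - (\inner{u}{v})^2)}$ respectively, using that the standard Gaussian density is at most $1/\sqrt{2\pi}$.
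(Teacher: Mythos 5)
Your proof is correct and follows essentially the same route as the paper's: reduce to the plane spanned by $u,v$, bound the Gaussian measure of the resulting parallelogram by its area (computed as $\eps^2/\abs{\det M} = \eps^2/\sqrt{1-(\inner{u}{v})^2}$ via the determinant) times the maximal density, and dispose of the parallel case trivially. The only difference is that you use the sharp two-dimensional density bound $1/(2\pi)$ where the paper uses the weaker $1/\sqrt{2\pi}$, so you in fact obtain a slightly stronger constant.
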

\begin{proof}
	If $u$ and $v$ are parallel then the claim holds.
	If they are not parallel, then by the structure of the Gaussian measure $\mathcal{G}$ this is a two-dimensional problem in the plane spanned by $u,v$.
	Identify this plane with $\RR^2$.
	$\mathcal{G}(\mathcal{B}_S \cap \mathcal{B}_T)$ is at most the maximum density $1/\sqrt{2 \pi}$ multiplied by the area of the parallelogram $P' := \{x \in \RR^2 \suchthat c_S \leq \inner{x}{u} \leq c_S +\eps, c_T \leq \inner{x}{u} \leq c_T +\eps  \}$.
	One can see that $P'$ has the same area as $P := \{x \in \RR^2 \suchthat \abs{\inner{x}{u}} \leq \eps/2, \abs{\inner{x}{v}} \leq \eps/2 \}$.
	Defining $A$ to be the matrix with rows $u,v$, we have $P = \{ x \suchthat \norm{Ax}_\infty \leq \eps/2\}$.
	This implies $\area(P) =  \eps^2 \abs{\det A^{-1}} =  \eps^2/\abs{\det A} = \eps^2/\sqrt{\det A A^T} = \eps^2/\sqrt{1-(\inner{u}{v})^2}$.
	The claim follows.
\end{proof}

We now switch our focus to the random regime.
The following lemma gives a probabilistic upper bound of the intersection of two bands around the spans of two (possibly not disjoint) subsets of random vectors  in high dimensional space.
The bound is good when not too many points are shared by the subsets (so that the behavior is not very different from two independent bands).
\begin{lemma}\label{lem:twobands}
	Let $d \geq 1$.
	Let $0 \leq k \leq d-1$.
	Let $A_1,\dotsc,A_k$, $S_1, \dotsc, S_{d-k-1}$, $T_1, \dotsc, T_{d-k-1}$ be $d$-dimensional iid.\ standard Gaussian random vectors.
	Let 
	\begin{align*}
	\mathcal{B}_S &= (\linspan \{A_1, \dotsc, A_k, S_1, \dotsc, S_{d-k-1} \})_{\eps/2}, \\
	\mathcal{B}_T &= (\linspan \{A_1, \dotsc, A_k, T_1, \dotsc, T_{d-k-1} \})_{\eps/2}.
	\end{align*}
	Then for any $t \geq 1$,
	\[
	\mathbb{P} \left(\mathcal{G}(\mathcal{B}_S \cap \mathcal{B}_T) \geq \frac{\eps^2 t}{\sqrt{2\pi}} \right) \leq  \frac{1}{t^{d-k-2}}.
	\]
\end{lemma}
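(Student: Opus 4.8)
The plan is to reduce the statement to a question about the angle between the unit normals of the two random hyperplanes whose $\eps/2$-neighborhoods are $\mathcal{B}_S$ and $\mathcal{B}_T$, use \cref{lem:parallelogram} to bound the Gaussian measure of the intersection in terms of that angle, and then pin down the distribution of the angle via rotational invariance of Gaussians and the generalized Archimedes' formula (\cref{lem:archimedes}).

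First I would condition on $A_1,\dotsc,A_k$. Almost surely these are linearly independent, so $W := (\linspan\{A_1,\dotsc,A_k\})^\perp$ has dimension $d-k$, and almost surely the two spans defining $\mathcal{B}_S,\mathcal{B}_T$ are hyperplanes through the origin, so $\mathcal{B}_S = \{x \suchthat \abs{\inner{x}{u}} \leq \eps/2\}$ and $\mathcal{B}_T = \{x \suchthat \abs{\inner{x}{v}} \leq \eps/2\}$ for some unit normals $u,v \in W$. Applying \cref{lem:parallelogram} with $c_S = c_T = -\eps/2$ gives $\mathcal{G}(\mathcal{B}_S \cap \mathcal{B}_T) \leq \eps^2 / \bigl(\sqrt{2\pi}\,\sqrt{1-(\inner{u}{v})^2}\bigr)$, so $\mathcal{G}(\mathcal{B}_S \cap \mathcal{B}_T) \geq \eps^2 t/\sqrt{2\pi}$ forces $1-(\inner{u}{v})^2 \leq 1/t^2$. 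Thus it is enough to show $\mathbb{P}\bigl(1 - (\inner{u}{v})^2 \leq 1/t^2\bigr) \leq 1/t^{d-k-2}$, uniformly over the conditioning.

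Next I would identify the conditional law of $(u,v)$. Both lie in $W$, and $u$ is the unit normal in $W$ to $\linspan\{P_W S_1, \dotsc, P_W S_{d-k-1}\}$, where $P_W$ is the orthogonal projection onto $W$ and the $P_W S_i$ are iid standard Gaussian on $W \cong \RR^{d-k}$; hence by rotational invariance $u$ is uniform on $\sphere^{d-k-1}$, and likewise $v$, independently. Fixing $u$ to be a coordinate vector, $\inner{u}{v} \deq V_1$ where $V$ is uniform on $\sphere^{d-k-1}$, so $1 - (\inner{u}{v})^2 \deq \norm{(V_2,\dotsc,V_{d-k})}^2$. When $k \leq d-3$ (so $d-k \geq 3$) I would drop the last coordinate, using $\norm{(V_2,\dotsc,V_{d-k-1})} \leq \norm{(V_2,\dotsc,V_{d-k})}$, and note that $(V_2,\dotsc,V_{d-k-1})$ has the same distribution as $(V_1,\dotsc,V_{d-k-2})$, which by \cref{lem:archimedes} is uniform in $B^{d-k-2}$; hence $\mathbb{P}\bigl(\norm{(V_2,\dotsc,V_{d-k})} \leq 1/t\bigr) \leq \mathbb{P}\bigl(\norm{(V_1,\dotsc,V_{d-k-2})} \leq 1/t\bigr) = t^{-(d-k-2)}$. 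When $k \in \{d-2, d-1\}$ the exponent $d-k-2 \leq 0$ makes $t^{-(d-k-2)} \geq 1$ (and for $k = d-1$ one has $\mathcal{B}_S = \mathcal{B}_T$), so the bound is trivial. Integrating out $A_1,\dotsc,A_k$ then completes the argument.

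The only mildly delicate points are the off-by-one bookkeeping in Archimedes' formula — we start from the $(d-k-1)$-coordinate vector $(V_2,\dotsc,V_{d-k})$ and must pass to a $(d-k-2)$-coordinate subvector before invoking \cref{lem:archimedes} — and remembering to dispose of the small-dimensional cases $k \in \{d-2,d-1\}$ separately rather than through the generic computation. Neither is a genuine obstacle.
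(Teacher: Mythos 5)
Your argument is correct and follows essentially the same route as the paper's proof: reduce to the $(d-k)$-dimensional orthogonal complement of $\linspan\{A_1,\dotsc,A_k\}$ where the two unit normals are iid uniform on the sphere, apply \cref{lem:parallelogram}, and then bound $\mathbb{P}\bigl(1-(\inner{u}{v})^2 \leq 1/t^2\bigr)$ via rotational symmetry and \cref{lem:archimedes} after dropping one coordinate. Your handling of the degenerate cases $k \geq d-2$ matches the paper's "the claim is immediate," and your explicit justification that the normals are uniform (via the projected Gaussians) is a slightly more careful rendering of the same step.
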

\begin{proof}
	If $d \leq 2$ or $k \geq d-2$, then the claim is immediate.
	Otherwise, $0 \leq k \leq d-3$ and we argue in the following way:
	By the structure of the Gaussian measure $\mathcal{G}$ this is a $(d-k)$-dimensional problem in $\{A_1, \dotsc, A_k\}^\perp$.
	More precisely, let $U, V$ be two $(d-k)$-dimensional iid.\ uniformly random unit-length vectors and define $\mathcal{B}_S' = \{x \in \RR^{d-k} \suchthat 
	\abs{\inner{x}{U}} \leq \eps/2 \}$ and $\mathcal{B}_T' = \{x \in \RR^{d-k} \suchthat \abs{\inner{x}{V}} \leq \eps/2 \}$.
	Then $\mathcal{G}(\mathcal{B}_S \cap \mathcal{B}_T)$ has the same distribution as $\mathcal{G}(\mathcal{B}_S' \cap \mathcal{B}_T')$. 
	From \cref{lem:parallelogram} we have $\mathcal{G}(\mathcal{B}_S' \cap \mathcal{B}_T') \leq \frac{\eps^2}{\sqrt{2\pi (1-(\inner{U}{V})^2)}}$.
	
	
	Using the rotational symmetry of the distribution of $U$ and $V$ and then \cref{lem:archimedes} we get  
	\begin{align*}
	\mathbb{P}\bigl(\sqrt{(1-(\inner{U}{V})^2)} \leq 1/t\bigr) 
	&= \mathbb{P} \left( \sqrt{U_1^2 + \dotsb +  U_{d-k-1}^2} \leq 1/t \right) \\
	&\leq \mathbb{P} \left( \sqrt{U_1^2 + \dotsb +  U_{d-k-2}^2} \leq 1/t \right) \\
	&= 1/t^{d-k-2}.
	\end{align*}
	The claim follows.
\end{proof}

The main technical content of our singular value bound is the following lower bound on the Gaussian volume of the union of bands around any $d-1$ columns of a $d$-by-$n$ Gaussian random matrix. 
We also include an upper bound on the volume.
\begin{lemma}\label{lem:union of bands(non-asym)}
	Let $ \eps \geq 0$, $d\geq 2$.
	For \{$A_1, \dotsc, A_n\} \subseteq \RR^d$, define
	\[
	V = \mathcal{G} \left( \Bigl( \bigcup_{S \subseteq [n], \card{S}=d-1} \linspan A_S \Bigr)_\eps \right).
	\]
	\begin{enumerate}
	\item $V \leq \frac{2 \eps}{\sqrt{2\pi}}  \binom{n}{d-1} $.
	
	\item Suppose $A_1, \dotsc, A_n$ are $d$-dimensional iid.\ standard Gaussian random vectors with $\frac{n}{d-1} \geq c_0 > 1$.
	Then there exist constants $c_2, c_4 > 1$ (that depend only on $c_0$) such that when $\eps \leq 1/(c_4 c_2^{d-1})$ and with probability at least $1-c_4e^{-d}$ we have
	$
		V \geq \frac{c_2^{d-1}}{\sqrt{2 \pi}} \eps
	$.
	
	\end{enumerate}
\end{lemma}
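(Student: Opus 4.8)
Statement (1) is a union bound. For each $S \subseteq [n]$ with $\card{S} = d-1$, enlarge $\linspan A_S$ to a $(d-1)$-dimensional subspace $W$; then $(\linspan A_S)_\eps \subseteq W_\eps$ and, writing $W=\{x\suchthat \inner{x}{u}=0\}$ for a unit vector $u$, rotational invariance of $\mathcal{G}$ gives $\mathcal{G}(W_\eps)=\mathbb{P}(\abs{Z_1}\leq \eps)\leq \tfrac{1}{\sqrt{2\pi}}\cdot 2\eps$ for a standard Gaussian vector $Z$ (maximal density times interval length). Summing over the $\binom{n}{d-1}$ sets $S$ yields $V \leq \frac{2\eps}{\sqrt{2\pi}}\binom{n}{d-1}$.

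For statement (2) the plan is to run the two-term inclusion--exclusion (Bonferroni) argument outlined in \cref{sec:introduction}, but over a carefully chosen well-separated subfamily of the index sets. Write $\mathcal{B}_S := (\linspan A_S)_\eps$. First, \cref{lem:gilbert general} applied with $w=d-1$ (using $n/(d-1)\geq c_0$) produces constants $c_1\in(0,1)$ and $\beta>1$ depending only on $c_0$ and a family of $(d-1)$-subsets of $[n]$ of size at least $\beta^{d-1}$ whose pairwise Hamming distances are at least $c_1(d-1)$; equivalently any two distinct sets in it share at most $(d-1)(1-c_1/2)$ indices. Fix $c_2\in(1,\beta)$ (a function of $c_0$) and let $\mathcal{T}$ be a subfamily of size $m:=\lceil 2c_2^{d-1}\rceil$, which exists once $d$ is large enough that $\beta^{d-1}\geq m$. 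Almost surely every $\linspan A_S$ is exactly $(d-1)$-dimensional, so for $\eps\leq 1$ we have $\mathcal{G}(\mathcal{B}_S)=\mathbb{P}(\abs{Z_1}\leq \eps)\geq \tfrac{2\eps}{\sqrt{2\pi}}e^{-1/2}\geq \tfrac{\eps}{\sqrt{2\pi}}$, hence $\sum_{S\in\mathcal{T}}\mathcal{G}(\mathcal{B}_S)\geq \frac{m\eps}{\sqrt{2\pi}}$.

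Next I would control the second Bonferroni term. For a pair $S\neq T$ in $\mathcal{T}$ with $k:=\card{S\cap T}\leq (d-1)(1-c_1/2)$, \cref{lem:twobands} (with its $\eps$ replaced by $2\eps$) gives, for every $t\geq 1$,
\[
\mathbb{P}\Bigl(\mathcal{G}(\mathcal{B}_S\cap\mathcal{B}_T)\geq \tfrac{4\eps^2 t}{\sqrt{2\pi}}\Bigr)\leq t^{-(d-k-2)}\leq t^{-(\frac{c_1}{2}(d-1)-2)}.
\]
Pick a constant $t_0>1$ (depending only on $c_0$) with $c_2^2\, t_0^{-c_1/2}\leq e^{-1}$ and union bound over the at most $m^2$ pairs: the event $\mathcal{E}$ that $\mathcal{G}(\mathcal{B}_S\cap\mathcal{B}_T)\leq \frac{4\eps^2 t_0}{\sqrt{2\pi}}$ for all pairs satisfies $\mathbb{P}(\mathcal{E})\geq 1-m^2 t_0^{-(\frac{c_1}{2}(d-1)-2)}\geq 1-c_4 e^{-d}$ for a suitable constant $c_4$. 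On $\mathcal{E}$, the two-term Bonferroni inequality gives
\[
V\geq \sum_{S\in\mathcal{T}}\mathcal{G}(\mathcal{B}_S)-\tfrac12\sum_{S\neq T}\mathcal{G}(\mathcal{B}_S\cap\mathcal{B}_T)\geq \frac{m\eps}{\sqrt{2\pi}}-\frac{2m^2\eps^2 t_0}{\sqrt{2\pi}}\geq \frac{m\eps}{2\sqrt{2\pi}}\geq \frac{c_2^{d-1}\eps}{\sqrt{2\pi}},
\]
where the last three inequalities use, respectively, $\eps\leq \frac{1}{4m t_0}$, and $m\geq 2c_2^{d-1}$; and the hypothesis $\eps\leq 1/(c_4 c_2^{d-1})$ implies both $\eps\leq \frac{1}{4m t_0}$ and $\eps\leq 1$ after enlarging $c_4$ once more, since $m=\Theta(c_2^{d-1})$. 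Finally, the finitely many values of $d$ below the threshold are absorbed by enlarging $c_4$ so that $c_4 e^{-d}\geq 1$ there, making the probability bound vacuous.

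The delicate point is the simultaneous calibration of constants in the last step: the subfamily $\mathcal{T}$ must be large enough that $\sum_S \mathcal{G}(\mathcal{B}_S)$ clears the target $c_2^{d-1}\eps/\sqrt{2\pi}$, yet the base $c_2$ must stay strictly below the Gilbert--Varshamov base $\beta$ so that the permitted range $\eps\leq 1/(c_4 c_2^{d-1})$ still forces the second Bonferroni term below the first with a single bounded constant $c_4$ serving both roles; this is exactly why $c_2$ cannot be pushed up to $\beta$ and why the statement is claimed only for large $d$. A secondary point is that a union bound over $\Theta(c_2^{2d})$ pairs is affordable only because the Hamming separation makes the exponent $d-k-2$ in \cref{lem:twobands} grow linearly in $d$.
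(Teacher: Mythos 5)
Your proposal is correct and follows essentially the same route as the paper's proof: the union bound for part (1), and for part (2) a Gilbert--Varshamov-separated subfamily of index sets, the two-term Bonferroni inequality, and \cref{lem:twobands} in a union bound over pairs. The only differences are bookkeeping — you take a subfamily of size $2c_2^{d-1}$ with the crude band measure $\geq \eps/\sqrt{2\pi}$ and handle small $d$ by making the probability bound vacuous, whereas the paper keeps $N=c_2^{d-1}$ (valid for all $d\geq 2$ by \cref{lem:gilbert general}) and uses the sharper bound $\mathcal{G}(\mathcal{B}_S)\geq \frac{2\eps}{\sqrt{2\pi}}e^{-\eps^2/2}$ so that half of the first Bonferroni term already meets the target.
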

\details{When $c_0$ is close to 1, small $n=d$ can satisfy $n/(d-1) \geq c_0$ but in this case $c_3$ is a large constant so that the probability $1-c_3e^{-d}$ is positive only for large $d$.}

\begin{proof}[Proof of part 1]
		The upper bound follows from the union bound and the fact that the 1-dimensional Gaussian density is upper bounded by $1/\sqrt{2\pi}$.
\end{proof}
\begin{proof}[Proof of part 2]
	Let $\mathcal{S} = \{S \subseteq [n], \card{S}=d-1 \}$. 
	Use \cref{lem:gilbert general} to get the bound $A\bigl(n,c_1(d-1),d-1\bigr) \geq c_2^{d-1}$. 
	We get a subfamily $\mathcal{T} \subseteq \mathcal{S}$ such that for all $S, T \in \mathcal{T}$ with $S \neq T$ we have $\card{S \cap T} \leq (1-\frac{c_1}{2}) (d-1)$ and $\card{\mathcal{T}} = c_2^{d-1}$ for some constants $0 < c_1 < 1$, $c_2 > 1$ (that depend only on $c_0$), and any $d \geq 2$.
	Let $N = \card{\mathcal{T}}$.
	
	Let $\mathcal{B}_S = (\linspan A_S)_\eps$.
	Use the first two terms of the inclusion-exclusion principle (Bonferroni inequality) and use \cref{lem:twobands} in a union bound applied to all pairs of sets in $\mathcal{T}$  to get  $\mathcal{G}(\mathcal{B}_S \cap \mathcal{B}_T) \leq \frac{4 \eps^2 t}{\sqrt{2\pi}}$ for all $S,T \in \mathcal{T}, S \neq T$. 
	We get a bound on $V$ that holds with probability at least 
	$1-\frac{\binom{N}{2}}{t^{d-1 - \left(1-\hfrac{c_1}{2}\right) (d-1)-2}} 
	= 1-\frac{\binom{N}{2}}{t^{\hfrac{c_1(d-1)}{2} -1}} 
	\geq 1- \frac{N^2}{t^{\hfrac{c_1(d-1)}{2} - 1}} 
	= 1- t \left(\frac{ c_2^{2}}{t^{c_1/2}}\right)^{d-1} 
	= 1 - c_3 e^{-d}
	$
	(choosing a constant $t > 1$ that depends on $c_1(c_0)$ and $c_2(c_0)$ such that $c_2^2/t^{c_1/2} = 1/e$ and then setting 
	$c_3 = t/e$, 
	which ultimately depends only on $c_0$).
	The bound on $V$ is
	\begin{align*}
	V 
	&\geq \mathcal{G} \left(\Bigl(\bigcup_{S \in \mathcal{T}} \linspan A_S\Bigr)_\eps \right) \\
	&\geq \sum_{S \in \mathcal{T}} \mathcal{G}(\mathcal{B}_S) - \frac{1}{2} \sum_{S,T \in \mathcal{T}, S \neq T} \mathcal{G}(\mathcal{B}_S \cap \mathcal{B}_T) \\
	&\geq \frac{2N\eps }{\sqrt{2 \pi}} e^{-\eps^2/2}- \binom{N}{2} \frac{4\eps^2 t}{\sqrt{2\pi}} \\
	&\geq \frac{ 2 N \eps}{\sqrt{2\pi}} (e^{-\eps^2/2} - 2 t N \eps) \\
	&\geq \frac{ 2 N \eps}{\sqrt{2\pi}} (1-\eps^2/2 - 2 t N \eps) \\
	&\geq \frac{N \eps}{\sqrt{2\pi}} \qquad \text{(for $\eps \leq 1/(8tN)$).}
	\end{align*}
In other words, $V \geq \frac{c_2^{d-1} \eps}{\sqrt{2\pi}}$ for $\eps \leq 1/(8e c_3 c_2^{d-1})$.	
We finish our proof by taking $c_4 = 8ec_3$.
\end{proof}

We are ready now to restate and prove the main results of the section.

\thmsigmamgeneral*
\begin{proof}
	Pick $c_1 \in (1,c_0)$.
	Let $m = \floor{c_1 d}$.
	Note that $m \geq c_1 d - 1 \geq c_1 d - c_1 \geq c_1(d-1)$, so that we can apply \cref{lem:union of bands(non-asym)} to columns $A_1, \dotsc, A_m$ with $\eps = \hfrac{1}{c_4c_2^{d-1}}$.
	Then we get $V \geq \frac{1}{\sqrt{2\pi} c_4 }$ with probability greater than $1-c_4e^{-d}$.
	This implies that with probability greater than 
	\begin{align*}
	(1-c_4e^{-d}) \bigl( 1-(1- \frac{1}{\sqrt{2\pi} c_4 } )^{n-m} \bigr) 
	& \geq (1-c_4 e^{-d}) \bigl(1- (1- \frac{1}{\sqrt{2\pi} c_4 } )^{(c_0-c_1)d} \bigr)\\
	& \geq 1-c_4 e^{-d} - (1-\frac{1}{\sqrt{2\pi} c_4 }  )^{(c_0-c_1)d}\\
	& \geq 1 - 2c_4 c_6^{d}
	\end{align*}
	where $c_6 = \max \{1/e, (1-\frac{1}{\sqrt{2\pi} c_4 } )^{(c_0-c_1)}\}$, 
	at least one of $A_{m+1},\dotsc, A_n$, say $A_*$, falls in $V$, that is, falls within distance $\eps = 1/c_4c_2^{d-1}$ of $\linspan (A_S)$ for some $S \subseteq [m], \card{S}=d-1$. 
	\Cref{lem:RudelsonVershynin} gives $\sigma_d(A_S,A_*) \leq  1/c_4c_2^{d-1}$.
\end{proof}

\thmsigmamgeneraltwo*
\begin{proof}
	Apply \cref{lem:union of bands(non-asym)} to columns $A_1, \dotsc, A_{n-1}$ to get
	\[
	V \leq  \frac{2\eps}{\sqrt{2\pi}} \binom{n}{d-1} \leq \frac{2\eps}{\sqrt{2\pi}} \left(\frac{en}{d-1} \right)^{d-1} \leq \frac{2\eps}{\sqrt{2\pi}} (eC_0)^{d-1}.
	\]
	By picking $\eps = 1/C_1^{d-1}$ where $C_1 > eC_0$, there exists a constant $\frac{eC_0}{C_1} < C_2 < 1$ such that $V \leq C_2^{d-1}$.
	This implies that, with probability at most $C_2^{d-1}$, column $A_n$ is within distance $ 1/C_1^{d-1}$ of $\linspan A_S$ for some $S \subseteq [n-1], \card{S}=d-1$.
	A similar claim holds for columns $A_1, \dotsc, A_{n-1}$ as well.
	Applying the union bound, we get that no $A_i$ falls within distance $1/C_1^{d-1}$ of $\linspan A_S$ for any $S \subseteq [n-1], \card{S}=d-1$ with probability at least $1 - nC_2^{d-1}$.
	\Cref{lem:RudelsonVershynin} gives $\sigma_d(A_S,A_n) \geq 1/C_1^{d-1}$ with probability at least $1 - nC_2^{d-1}$.
\end{proof}

\section{On the stability of tensor decomposition}\label{sec:tensor}

Kruskal \cite{MR444690} showed a sufficient condition under which the component vectors $a_i, b_i, c_i$, $i=1,\dotsc,n$ of an order-3 tensor $T=\sum_{i=1}^n a_i \otimes b_i \otimes c_i$ are uniquely determined by the tensor (up to inherent ambiguities).
The condition depends on a parameter now known as the Kruskal rank of a matrix: For a $d$-by-$n$ matrix $A$, the Kruskal rank of $A$, denoted $\krank(A)$, is the maximum $r \in [n]$ such that any $r$ columns of $A$ are linearly independent.
The condition is $\krank(A) + \krank(B) + \krank(C) \geq 2n + 2$, where $A$, $B$, $C$ are the matrices with columns $(a_i)$, $(b_i)$, $(c_i)$, respectively.
For concreteness, it is helpful to consider the symmetric case $A=B=C \in \RR^{d \times n}$.
Kruskal's condition becomes $3 \krank(A) \geq 2n+2$.
Informally, for a \emph{generic} matrix $A$ we have $\krank(A)=d$ and so Kruskal's result guarantees uniqueness for generic $A$ when $n\leq 3d/2-1$.

Bhaskara, Charikar and Vijayaraghavan~\cite[Theorem 5]{bhaskara2014uniqueness} extended Kruskal's uniqueness to a result that guarantees \emph{robust decomposition}.
That is, when the observed tensor is a small perturbation of the original tensor, the components of the perturbed tensor are uniquely determined and close to the components of the original tensor.
Their condition for robust unique decomposition is a refinement of Kruskal's condition:
Let $\tau > 0$.
The \emph{robust Kruskal rank (with threshold $\tau$)} of $A$, denoted $\rkrank{\tau}{A}$, is the maximum $k \in [n]$ such that for any subset $S \subseteq [n]$ of size $k$ we have $\sigma_k(A_S) \geq 1/\tau$ ($\sigma_k$ denotes the $k$th largest singular value).
The condition is $\rkrank{\tau}{A} + \rkrank{\tau}{B} + \rkrank{\tau}{C} \geq 2n + 2$ and the error in the recovered components depends polynomially on $\tau$.

In this context, \cref{thm:sigmam general} can be stated in the following equivalent way:
\begin{theorem}
	Let $A$ be an $d$-by-$n$ random matrix with iid.\ standard Gaussian entries with $d \geq 2$ and $n/d \geq c_0> 1$.
	Then, there exist constants $c_4, c_5 > 1$, $0 < c_6 < 1$ (that depend only on $c_0$) such that with probability at least $1-2c_4 c_6^d$, 
	\[
		\rkrank{\tau}{A} = d \Rightarrow \tau \geq  c_4c_5^{d-1}.
	\]
\end{theorem}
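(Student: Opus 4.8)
The plan is to read the statement as a purely logical restatement of \cref{thm:sigmam general}; no new probabilistic work is needed. First I would unfold the definition of the robust Kruskal rank. By definition $\rkrank{\tau}{A}$ is the largest $k \in [n]$ such that every size-$k$ column subset $S \subseteq [n]$ has $\sigma_k(A_S) \geq 1/\tau$. Since $A$ is $d$-by-$n$, no $d\times k$ submatrix with $k > d$ can have full column rank, so $\rkrank{\tau}{A} \leq d$ always, and (using $n/d \geq c_0 > 1$, so $d \leq n$ and size-$d$ subsets exist) the value $d$ is the largest possible. Hence the hypothesis $\rkrank{\tau}{A} = d$ is equivalent, via the $k=d$ instance of the definition, to the bound
\[
\min_{S \subseteq [n],\ \card{S}=d} \sigma_d(A_S) \geq \frac{1}{\tau}.
\]

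Next I would invoke \cref{thm:sigmam general} with the same constant $c_0$. It supplies constants $c_2, c_4 > 1$ and $0 < c_6 < 1$, depending only on $c_0$, such that the event
\[
\mathcal{E} := \Bigl\{\ \min_{S \subseteq [n],\ \card{S}=d} \sigma_d(A_S) \leq \tfrac{1}{c_4 c_2^{d-1}}\ \Bigr\}
\]
has probability at least $1 - 2 c_4 c_6^d$. I would then argue on $\mathcal{E}$: if in addition $\rkrank{\tau}{A} = d$, chaining the two displayed inequalities gives
\[
\frac{1}{\tau} \;\leq\; \min_{S \subseteq [n],\ \card{S}=d} \sigma_d(A_S) \;\leq\; \frac{1}{c_4 c_2^{d-1}},
\]
so $\tau \geq c_4 c_2^{d-1}$. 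Setting $c_5 := c_2 > 1$ then yields exactly the claimed implication, with the asserted probability $1 - 2 c_4 c_6^d$ and constants $c_4, c_5 (= c_2), c_6$ all depending only on $c_0$.

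There is essentially no obstacle beyond bookkeeping — the entire quantitative content lives in \cref{thm:sigmam general}. The one point worth stating carefully is the equivalence in the first paragraph: that ``$\rkrank{\tau}{A} = d$'' is precisely the right encoding of ``every $d\times d$ column submatrix of $A$ is $\tau^{-1}$-well conditioned,'' which holds because a $d$-by-$n$ matrix cannot have robust Kruskal rank exceeding $d$. Given that, the theorem is just the contrapositive of \cref{thm:sigmam general}: on a high-probability event some $d\times d$ submatrix is exponentially ill conditioned, so $\rkrank{\tau}{A}$ can equal $d$ only if $\tau$ is exponentially large.
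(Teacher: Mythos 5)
Your proposal is correct and matches the paper's intent exactly: the paper presents this theorem as an "equivalent" restatement of \cref{thm:sigmam general} without further proof, and your argument — unfolding the definition of $\rkrank{\tau}{A}=d$ as $\min_{S}\sigma_d(A_S)\geq 1/\tau$ and taking the contrapositive on the high-probability event of \cref{thm:sigmam general} with $c_5:=c_2$ — is precisely the intended bookkeeping.
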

This has the following implication for Bhaskara, Charikar and Vijayaraghavan's result:
Even though Kruskal's result guarantees uniqueness for generic $A$ when $n = 3d/2-1$ (say, with probability 1 for a random Gaussian matrix we have $\krank(A)=d$), Bhaskara, Charikar and Vijayaraghavan's robust uniqueness can give a polynomial bound on the reconstruction error on no more than an exponentially small fraction of matrices $A$ when the fraction is measured by the Gaussian measure.
This rarity of sufficiently well-conditioned matrices $A$ is somewhat surprising.
\details{not clear what happens for $n=cd$ and aiming for $\rkrank{\tau}{A} = c'd$, $c>1$ and $c'<1$}

\section{On the complexity of the simplex method and the diameter of polytopes}\label{sec:deltadistance}
In \cite{brunsch2013finding}, Brunsch and R\"oglin introduced the following property of a matrix:
\begin{definition}[{$\delta$-distance property, \cite{DBLP:conf/stacs/BrunschGR15}}]
Let $A = (a_1, \dotsc, a_m)^\top$ be an $m$-by-$n$ matrix with unit rows.
We say that $A$ satisfies the $\delta$-distance property if: for any $I \subseteq [m]$ and any $j \in [m]$ whenever $a_j \notin \linspan \{a_i \suchthat i \in I\}$ we have $d(a_j, \linspan \{a_i \suchthat i \in I\}) \geq \delta$.
\end{definition}
This property has been used in several papers \cite{brunsch2013finding,%
DBLP:conf/stacs/BrunschGR15,%
DBLP:journals/dcg/DadushH16,%
DBLP:journals/mp/EisenbrandV17%
} to study polytopes of the form $\{x \in \RR^n \suchthat Ax \leq b\}$ to provide upper bounds of the form $\poly(m,n,1/\delta)$ on their diameter and the number of pivot steps of the simplex method.
Our \cref{thm:sigmam general} combined with \cref{lem:RudelsonVershynin} and concentration of the length of a Gaussian random vector implies that, for $m/n \geq c' > 1$, matrices $A$ with the $\delta$-distance property for $\delta \geq c^n$, $0<c<1$, are ``rare'': they are exponentially unlikely when the rows are iid.\ random unit vectors.
As in \cref{sec:tensor}, this rarity of well-conditioned matrices $A$ is somewhat surprising.

\section{On the smoothed analysis of polytope conditioning}\label{sec:polytopeconditioning}
	In this section we prove that the vertex-facet distance of the convex hull of a linear number of $d$-dimensional iid.\ Gaussian points can be exponentially small with probability at least some constant.
	The argument is a more elaborate version of the argument for the minimum singular value in \cref{sec:matrixconditioning} and works in the following way.
	\Cref{fig:vf small} shows a polytope, the convex hull of a partial sequence of random points, and $\eps$-innner bands at all facets.
	If a new point falls into the blue region, then the new polytope, which is the convex hull of the old polytope plus the new point, will have vertex-facet distance no larger than $\eps$:
	the new point is a vertex and its distance to the affine hull of the facet associated to the band where the point lies in is less than $\eps$.
	
	To get a lower bound on the Gaussian measure of the blue region (\cref{lem:bands2 lim} \lnote{check}), we add the measures of the bands and then we subtract the measures of pairwise intersections of bands and $\eps$-inner neighbourhood (grey region). 
	\Cref{lem:twobandsaff} gives a bound on the measure of a pairwise intersection.
	Its proof is divided into two cases: \cref{lem:twobandsaff no k} for the case where the two facets do not share vertices and \cref{lem:twobandsaff2} for the case where they do share vertices. This argument is a refinement of the proof of \cref{lem:twobands}.
\begin{figure}[ht]
	\centering
	\includegraphics[width=0.4\textwidth,trim=60pt 70pt 80pt 70pt, clip]{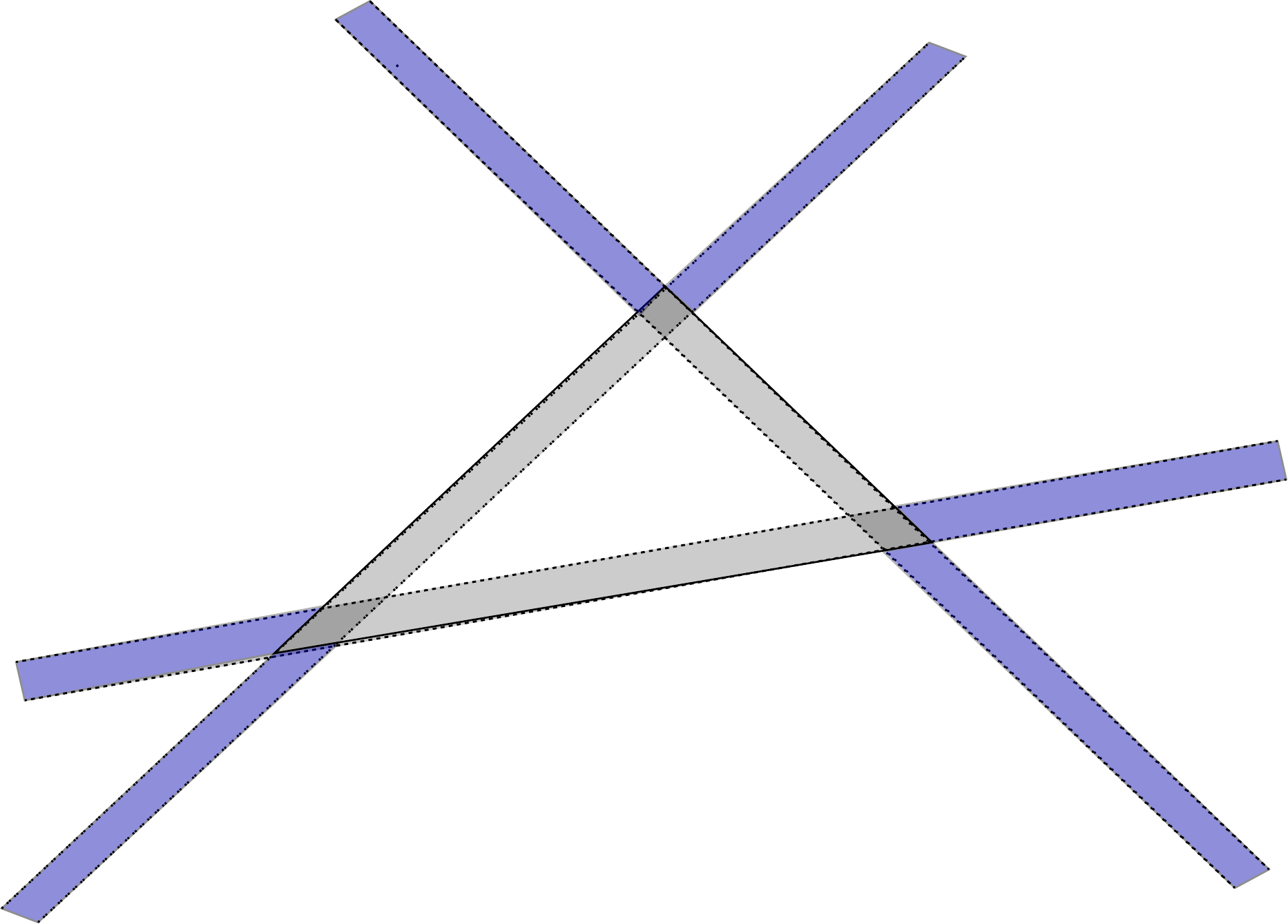}
	\caption{A polytope (triangle) and the region (blue) where a new point would create a small vertex-facet distance.}
	\label{fig:vf small}
\end{figure}


\cnote{remark: \cref{lem:twobandsaff} can be easily generalized to covariance matrix $\sigma^2 \mathbf{I}_d$.}

\ifstoc
\else
\begin{lemma}\label{lem:twobandsaff no k}
	Let $S_1, \dots, S_{d}$, $T_1, \dots, T_{d}$ be iid.\ standard Gaussian random vectors in $\RR^d$.
	Let 
	\begin{align*}
	\mathcal{B}_S &= \bigl(\aff \{ S_1, \dotsc, S_{d} \}\bigr)_{\eps /2}, \\
	\mathcal{B}_T &= \bigl(\aff \{ T_1, \dotsc, T_{d} \}\bigr)_{\eps /2}.
	\end{align*}
	Then for $t \geq 1$
	\[
	\mathbb{P} \left(\mathcal{G}(\mathcal{B}_S \cap \mathcal{B}_T) \geq \frac{\eps^2 t}{\sqrt{2\pi}} \right) \leq  \frac{1}{t^{d-2}}.
	\]
\end{lemma}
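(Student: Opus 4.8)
The plan is to reduce the intersection bound to \cref{lem:parallelogram} and then control the angle between the two bands using the rotational invariance of the Gaussian together with \cref{lem:archimedes}. This is essentially the ``no shared vertices'' analogue of \cref{lem:twobands} (the case $k=0$), with $\linspan$ replaced by $\aff$, so no projection onto a shared subspace is needed.

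First I would note that almost surely $S_1,\dotsc,S_d$ are affinely independent, so $H_S := \aff\{S_1,\dotsc,S_d\}$ is a hyperplane; let $u$ be a unit normal to it. Then $\mathcal{B}_S = \{x\in\RR^d \suchthat \abs{\inner{x}{u} - b_S}\le \eps/2\}$ for the appropriate offset $b_S\in\RR$, i.e.\ $\mathcal{B}_S$ is a slab of width $\eps$ in direction $u$; likewise $\mathcal{B}_T$ is a slab of width $\eps$ in direction $v$, where $v$ is a unit normal to $\aff\{T_1,\dotsc,T_d\}$. \Cref{lem:parallelogram} then gives, for any fixed $u,v$,
\[
\mathcal{G}(\mathcal{B}_S\cap\mathcal{B}_T)\ \le\ \frac{\eps^2}{\sqrt{2\pi\,(1-(\inner{u}{v})^2)}}.
\]

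Next I would identify the joint law of $(u,v)$. Since the standard Gaussian on $\RR^d$ is rotation invariant, for every orthogonal $Q$ the tuple $(QS_1,\dotsc,QS_d)$ has the same law as $(S_1,\dotsc,S_d)$ and $Qu$ is a unit normal to $\aff\{QS_1,\dotsc,QS_d\}$; hence the law of $u$ is rotation invariant, so $u$ is uniform on $\sphere^{d-1}$ (the sign ambiguity in $u$ is immaterial, since only $(\inner{u}{v})^2$ enters). The same holds for $v$, and $u,v$ are independent because they are functions of disjoint blocks of independent vectors. Conditioning on $v$ and rotating so that $v = e_d$, we get $1-(\inner{u}{v})^2 = U_1^2+\dotsb+U_{d-1}^2 \ge U_1^2+\dotsb+U_{d-2}^2$ with $U := u$ uniform on $\sphere^{d-1}$.

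Finally I would combine the two displays: if $\mathcal{G}(\mathcal{B}_S\cap\mathcal{B}_T)\ge \eps^2 t/\sqrt{2\pi}$, then $1-(\inner{u}{v})^2\le 1/t^2$, hence $U_1^2+\dotsb+U_{d-2}^2\le 1/t^2$; for $d\ge 3$, \cref{lem:archimedes} shows this last event has probability at most $(1/t)^{d-2}$, while for $d=2$ the claimed bound $1/t^{d-2}=1$ is trivial. The only step requiring genuine care is the uniformity-on-the-sphere claim for the normal of the affine hull of $d$ Gaussian points; the rest is a routine reduction to \cref{lem:parallelogram} and \cref{lem:archimedes}.
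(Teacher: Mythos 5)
Your proposal is correct and follows essentially the same route as the paper: both arguments observe that the unit normals to the two affine hulls are independent and uniform on $\sphere^{d-1}$ by rotational invariance, bound the intersection of the two slabs via \cref{lem:parallelogram}, and control the angle anti-concentration via \cref{lem:archimedes}. The only (minor, and valid) difference is that you apply \cref{lem:parallelogram} directly to the offset slabs, since that lemma already allows arbitrary offsets $c_S, c_T$, whereas the paper first compares the offset bands to centered bands "by logconcavity" so as to reuse the argument of \cref{lem:twobands} verbatim; your shortcut removes that comparison step without loss.
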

\begin{proof}
	By the rotational invariance of the Gaussian distribution, unit normal vectors $U, V$ to $\mathcal{B}_S, \mathcal{B}_T$ are independent and are uniformly distributed on $\mathcal{S}^{d-1}$.
	Define 
	\begin{align*}
	\mathcal{B}_S' = \{x \in \RR^{d} \suchthat \abs{\inner{x}{U}} \leq \eps/2 \},\\
	\mathcal{B}_T' = \{x \in \RR^{d} \suchthat \abs{\inner{x}{V}} \leq \eps/2 \}.
	\end{align*}
	By a standard argument (say, using logconcavity) we have $\mathbb{P} \bigl( \mathcal{G}(\mathcal{B}_S \cap \mathcal{B}_T) \geq t  \bigr) \leq \mathbb{P} \bigl( \mathcal{G}(\mathcal{B}_S' \cap \mathcal{B}_T') \geq t  \bigr) $. 
	Then, by the argument in the proof of \cref{lem:twobands} we get that for any $t \geq 1$,
	\(
	\mathbb{P} \left(\mathcal{G}(\mathcal{B}_S' \cap \mathcal{B}_T') \geq \frac{\eps^2 t}{\sqrt{2\pi}} \right) \leq  \frac{1}{t^{d-2}}.
	\)
	The claim follows.
\end{proof}

\begin{lemma}\label{lem:twobandsaff2}
Let $A_1,\dots,A_k$, $S_1, \dots, S_{d-k}$, $T_1, \dots, T_{d-k}$ be iid.\ standard Gaussian random vectors in $\RR^d$, and $1 \leq k \leq d$.
Let 
\begin{align*}
\mathcal{B}_S &= \bigl( \aff \{A_1, \dotsc, A_k, S_1, \dotsc, S_{d-k} \} \bigr)_{\eps /2}, \\
\mathcal{B}_T &= \bigl( \aff \{A_1, \dotsc, A_k, T_1, \dotsc, T_{d-k} \} \bigr)_{\eps /2}.
\end{align*}
Then for $ 0 < 2\alpha \leq \beta < \pi/ 2 $,
\begin{equation}\label{eqn:bandintersection}
\mathbb{P} \left(\mathcal{G}(\mathcal{B}_S \cap \mathcal{B}_T) \geq \frac{\epsilon^2 }{\sqrt{2\pi} \sin \alpha} \right) \leq   (\sin \beta)^{d-k-1} +  2 \left(\frac{\sin \alpha}{\sin(\beta - \alpha)}\right) ^{d-k-2}.
\end{equation}
In particular, for $t > 2\pi$ we have
\[
\mathbb{P} \left(\mathcal{G}(\mathcal{B}_S \cap \mathcal{B}_T) \geq \frac{\epsilon^2  t}{\sqrt{2 \pi}} \right) 
\leq 3 \left(\frac{ \pi ^{\hfrac{3}{2}} }{ \sqrt{2t} } \right)^{d-k-2}.
\]

\end{lemma}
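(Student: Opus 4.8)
First I would pass from the Gaussian measure of the intersection to the dihedral angle between the two affine hulls. Write $u,v$ for the unit normals of $\aff\{A_1,\dots,A_k,S_1,\dots,S_{d-k}\}$ and $\aff\{A_1,\dots,A_k,T_1,\dots,T_{d-k}\}$. By \cref{lem:parallelogram}, $\mathcal{G}(\mathcal{B}_S\cap\mathcal{B}_T)\le \eps^2/\bigl(\sqrt{2\pi}\sqrt{1-(\inner{u}{v})^2}\bigr)$, so the event $\{\mathcal{G}(\mathcal{B}_S\cap\mathcal{B}_T)\ge \eps^2/(\sqrt{2\pi}\sin\alpha)\}$ is contained in $\{\sin\angle(u,v)\le\sin\alpha\}$, and it suffices to bound $\mathbb{P}(\sin\angle(u,v)\le\sin\alpha)$ by the right-hand side of \eqref{eqn:bandintersection}. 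Both hyperplanes contain $W:=\aff\{A_1,\dots,A_k\}$, whose direction space $L_0$ has dimension $k-1$ almost surely, so $u,v\in L_0^\perp$. I would condition on $A_1,\dots,A_k$ and work in $L_0^\perp\cong\RR^m$, $m=d-k+1$: there $u$ is the normal to $\linspan\{\bar S_1,\dots,\bar S_{m-1}\}$ and $v$ the normal to $\linspan\{\bar T_1,\dots,\bar T_{m-1}\}$, where $\bar S_i=\proj_{L_0^\perp}(S_i-A_1)$ and $\bar T_i=\proj_{L_0^\perp}(T_i-A_1)$ are independent Gaussian vectors in $\RR^m$ with identity covariance and a common mean $\nu:=\proj_{L_0^\perp}(-A_1)$. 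This is exactly the situation in the proof of \cref{lem:twobands}, except that now $\nu$ need not be $0$; that nonzero mean is the sole source of the extra term in \eqref{eqn:bandintersection}.

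To exploit the structure I would peel off the centroid: write $H_T=\linspan\{\bar T_1,\dots,\bar T_{m-1}\}$ as a direct sum $\linspan\{\bar w\}\oplus V_0$, where $\bar w=\tfrac1{m-1}\sum_i\bar T_i\sim N(\nu,\tfrac1{m-1}\mathbf{I}_m)$ and $V_0=\linspan\{\bar T_i-\bar T_j\}$ is a uniformly random $(m-2)$-dimensional subspace independent of $\bar w$ (the differences are isotropic, so their span is rotation invariant); likewise for $H_S$ with an independent copy. Inside $V_0^\perp\cong\RR^2$ one has $\cos\angle(u,v)=\abs{\inner{\proj_{V_0^\perp}(u)}{v}}$, with $v$ the unit vector of $V_0^\perp$ orthogonal to $\proj_{V_0^\perp}(\bar w)$, and $\norm{\proj_{V_0^\perp}(u)}^2$ is $\mathrm{Beta}(1,(m-2)/2)$-distributed by the coordinate symmetry of the sphere together with \cref{lem:archimedes} (and similarly for $\proj_{V_0^\perp}(\nu)$). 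The first term $(\sin\beta)^{d-k-1}=(\sin\beta)^{m-2}$ is the probability of the bad event that the configuration is already within angle $\approx\beta$ of degenerate — concretely that a relevant projection of $u$ or of the mean direction $\nu$ into $V_0^\perp$ is too close to the special direction forced by the other hyperplane; this probability is $(1-\cos^2\beta)^{(m-2)/2}=(\sin\beta)^{m-2}$ from the Beta law. On the complementary good event the two directions are separated by at least $\approx\beta-\alpha$, and $\angle(u,v)\le\alpha$ then forces the small perturbation $\bar w-\nu$ to carry $\proj_{V_0^\perp}(\bar w)$ across that gap and land in an angular window of width $\approx\alpha$; here the bias $\nu$ is removed by invoking \cref{lem: monotonicity} (and \cref{lem:noncentral chi-square}), which stochastically dominate the relevant biased ratio of squared Gaussians by its centered version, after which a two-dimensional instance of the Archimedes estimate used in the proof of \cref{lem:twobands} produces the factor $2\bigl(\sin\alpha/\sin(\beta-\alpha)\bigr)^{d-k-2}$. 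Adding the two cases yields \eqref{eqn:bandintersection}.

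I expect the main obstacle to be exactly this nonzero mean. In \cref{lem:twobands}, passing to $\{A_1,\dots,A_k\}^\perp$ made the relevant normals uniform on a sphere and a single Archimedes estimate closed the argument; here the normals — equivalently the hyperplanes — carry a bias toward or away from the direction of $\nu$, whose magnitude ranges from $O(1)$ for moderate $k$ up to order $\sqrt d$ when $k$ is small, and the role of the auxiliary angle $\beta$ is to quarantine the rare configurations where the bias is large and, on the remaining configurations, to replace the biased quadratic forms by centered ones before the spherical estimate can be applied. Making the two-parameter bookkeeping (the interplay of $\alpha$, $\beta$, and the separation $\beta-\alpha$, and the constraint $2\alpha\le\beta$ that makes this gap nonnegative) close cleanly is the delicate step.

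Finally, the ``in particular'' statement follows by specializing: take $\sin\alpha=1/t$ and $\sin\beta=\pi^{3/2}/\sqrt{2t}$, which for $t>2\pi$ gives $\beta<\pi/2$ and $2\alpha\le\beta$; then $\sin(\beta-\alpha)\ge\sin\beta\cos\alpha-\sin\alpha$ is at least a fixed constant multiple of $1/\sqrt t$, so both terms of \eqref{eqn:bandintersection} are at most $(\pi^{3/2}/\sqrt{2t})^{d-k-2}$ up to absolute constants, and a short computation absorbs everything into the factor $3$.
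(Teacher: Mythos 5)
Your first paragraph matches the paper exactly: reduce to the angle between the unit normals $u,v$ via \cref{lem:parallelogram}, condition on $A_1,\dots,A_k$, and pass to $L_0^\perp\cong\RR^m$, $m=d-k+1$, where the sole new difficulty is the common nonzero mean $\nu$. From there the paper proceeds quite differently from you: it writes the normal as a formal determinant, uses the matrix determinant lemma to show $\hat U_1^2\deq Y_0^2/(Y_0^2+\sum_iY_i^2)$ with $Y_0$ centered and the $Y_i$ of mean $\mu$, splits on whether $\hat U$ lies within angle $\beta-\alpha$ of the mean direction $\pm e_1$, handles the first case with \cref{lem: monotonicity} plus \cref{lem:archimedes}, and handles the second by projecting orthogonally to $e_1$ and bounding the image of a spherical cap via an explicit ellipse/tangent-line computation. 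Your centroid/difference-span decomposition $H_T=\linspan\{\bar w\}\oplus V_0$ is a genuinely different device, and the structural facts you state about it (independence and Haar-uniformity of $V_0$, $v\in V_0^\perp$, the Beta law of $\norm{\pi_{V_0^\perp}(u)}^2$) are all correct.

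The gap is in how you extract the two terms of \eqref{eqn:bandintersection} from this decomposition. The bad event behind the first term is never pinned down ("a relevant projection of $u$ or of $\nu$ ... too close to the special direction" is not an event), and the second term cannot arise as described: a ``two-dimensional instance of the Archimedes estimate'' cannot produce a $(d-k-2)$-th power, and the probability that the two-dimensional Gaussian $\pi_{V_0^\perp}(\bar w)$ lands in an angular window of width $\approx\alpha$ is not exponentially small in $d-k$ (it is of constant order when the window contains the mean). Moreover, \cref{lem: monotonicity} acts in the paper on the explicit ratio $Y_0^2/(Y_0^2+\sum Y_i^2)$ produced by the determinant computation; no counterpart of that ratio is identified in your setup. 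A smaller issue: your choice $\sin\beta=\pi^{3/2}/\sqrt{2t}$ in the ``in particular'' step is not a valid sine for $2\pi<t\le\pi^3/2$.

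Ironically, the decomposition you set up closes in one line, with no $\beta$ and no case analysis, and gives a stronger bound. Since $v\in V_0^\perp$ and $V_0^\perp$ is a Haar-uniform $2$-plane independent of $u$, Cauchy--Schwarz gives $\abs{\inner{u}{v}}=\abs{\inner{\pi_{V_0^\perp}(u)}{v}}\le\norm{\pi_{V_0^\perp}(u)}$, so by \cref{lem:archimedes} (for $m\geq 3$; the remaining cases are trivial)
\[
\prob\bigl(\abs{\inner{u}{v}}\ge\cos\alpha\bigr)\le\prob\bigl(\norm{\pi_{V_0^\perp}(u)}^2\ge\cos^2\alpha\bigr)=(\sin\alpha)^{m-2}=(\sin\alpha)^{d-k-1},
\]
which is at most $(\sin\beta)^{d-k-1}$ for $\alpha\le\beta$ and hence implies \eqref{eqn:bandintersection} and the ``in particular'' statement. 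The step that carries the weight here is the one you already verified: $V_0$ is Haar-uniform and independent of $u$ because the differences $\bar T_i-\bar T_j$ are centered and jointly rotation-invariant regardless of the common mean $\nu$.
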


\begin{proof}
	If $d-k \leq 2$, then the bound holds immediately.
	Otherwise, $d -k > 2$ and we argue in the following way.
	By the structure of the Gaussian measure, this reduces to a $(d-k+1)$-dimensional problem:
	Conditioning on $A_i = a_i$, $i = 1,\dotsc,k$, we project onto the orthogonal complement of the linear subspace parallel to $\aff \{a_1, \dotsc, a_k\}$.
	We will then prove the bound claimed in \eqref{eqn:bandintersection} conditioning on $A_1, \dotsc, A_k$, which implies the claimed bound by total probability.

	With a slight abuse of notation, we denote the projection of $\aff \{a_1, \dotsc, a_k\}$ as $a_1$ and the projections of $S_i, T_i$ as $S_i, T_i,\ i = 1,\dotsc,d-k$. 
	Using the fact that the Gaussian distribution is rotationally invariant, we may assume without loss of generality that  $a_1 = \mu e_1$ for some $\mu \geq 0$. 
	A normal vector to $\aff \{a_1, S_1, \dots, S_{d-k} \}$ is\footnote{In the formula for $U$, the determinant should be interpreted as a formal cofactor expansion along the first row; the entries in the first row are the canonical vectors and the expansion gives the coefficients of these vectors (as subdeterminants).}
	\begin{align}\label{eq:normal} 
		U = \det \begin{pmatrix}
		e_1 & e_2 & \cdots & e_{d-k+1}\\
		& & P_1^T & & \\
		& & \vdots & & \\
		& & P_{d-k}^T & &
		\end{pmatrix},
	\end{align}
	where $P_i := S_i-a_1$, $i \in \left[d-k\right]$.
	Define the matrix $P = \begin{pmatrix}	P_1 \cdots P_{d-k} \end{pmatrix}$.
	Let $V$ be a normal vector to $\aff \{a_1, T_1, \dots, T_{d-k} \}$, defined similarly.
	
	Set $\bigl(\begin{smallmatrix} H_i \\ P_i' \end{smallmatrix}\bigr) = P_i $, where $H_i \sim \mathcal{N}(\mu,1)$ and $P_i' \sim \mathcal{N}(0, \mathbf{I}_{d-k})$.
	Denote $H^T = \begin{pmatrix} H_1 \cdots H_{d-k} \end{pmatrix}$ as the first row of matrix $P$ and $P' = \begin{pmatrix}	P_1' \cdots  P_{d-k}' \end{pmatrix}$ as the rest. $H$ and $P'$ are independent.
	
	Note that $\norm{U}^2 = \det(P^TP)$ (follows from \eqref{eq:normal} and the Cauchy-Binet formula).
	Also, $U_1 = U\cdot e_1= \det(P^{'})$.
	We now compute the distribution of the first coordinate of unit normal vector $\hat{U}$ (using the \emph{matrix determinant lemma} to compute the determinant of a rank-1 update).
	\begin{align*}
		\hat{U}_1^2 &= \frac{\det(P'^TP')}{\det(P^TP)} \\
		&=\frac{\det(P'^TP')}{\det(P'^TP'+HH^T)}\\
		&=\frac{\det(P'^TP')}{(1+H^TP'^{-1}P'^{-T}H)\det(P'^TP')}\\
		&=\frac{1}{1+H^TP'^{-1}P'^{-T}H}.
	\end{align*}

	\begin{claim}\label{claim: first coord} 
		We have $H^TP'^{-1}P'^{-T}H \deq \frac{\sum_{i=1}^{d-k}Y_i^2}{Y_0^2}$,
		where $Y_0 \sim \mathcal{N}(0,1)$, $Y_i \sim \mathcal{N}(\mu,1)$, $i \in \left[d-k\right]$ and $Y_0, Y_1, \dotsc, Y_{d-k}$ are independent.
	\end{claim}
	\begin{claimproof}
		Random variables $P'$ and $H$ are independent.
		Moreover, $P'$ is a Gaussian matrix and therefore the distribution of $P'^{-1}$ is invariant under any orthogonal transformation applied to rows or columns. 
		Thus, it is enough to consider the case $H = \norm{H}e_1$. 
		Note that $\norm{H}^2 \deq \sum_{i=1}^{d-k}Y_i^2$, and 
		$
		e_1^TP'^{-1}P'^{-T}e_1 
		= \norm{\text{first row of $P'^{-1}$}}^2 
		\deq \frac{1}{Y_0^2}
		$.\details{last step: by definition of inverse, $\norm{\text{first row of $P'^{-1}$}} = 1/\text{distance of first column of $P'$ to span of the rest}$.} 
		The claim follows.
	\end{claimproof}

	Recall that $\hat{U}, \hat{V}$ are unit normal vectors to $\mathcal{B}_S, \mathcal{B}_T$, respectively.
	We aim to show that $\mathbb{P} \bigl(\hat{V} \in \mathcal{C}_\alpha(\hat{U}) \cup \mathcal{C}_\alpha(- \hat{U})\bigr)$, i.e. $\prob\bigl(\abs{\hat{U}\cdot \hat{V}} \geq \cos \alpha \bigr)$, is upper bounded by an expression of the form $c(\alpha)^d$ with $c(\alpha) \to 0$ as $\alpha \to 0$ (where $\mathcal{C}_\alpha(\hat U)$ denotes the spherical cap centered at $\hat{U}$ with angle $\alpha$).
	To see this, we divide the analysis into two cases, depending on whether the cap is close to $e_1$.
	The case analysis depends on a parameter $\beta$ that will need to satisfy the constraint $\beta \geq 2 \alpha$. 

	\paragraph{Case 1:}
	$\mathcal{C}_\alpha(\hat{U}) \subseteq \mathcal{C}_\beta(e_1) \cup \mathcal{C}_\beta(-e_1)$ (equivalently, $\abs{\hat{U}_1} \geq \cos (\beta - \alpha)$).

	In this case, the $\alpha$-cap around $\hat{U}$ is contained in a larger cap centered at $e_1$.
	\begin{align}
	\mathbb{P} \Bigl(\bigl\{ \hat{V} \in \mathcal{C}_\alpha(\hat{U}) \cup \mathcal{C}_\alpha(- \hat{U}) \bigr\}  \cap \bigl\{ \mathcal{C}_\alpha(\hat{U}) \subseteq \mathcal{C}_\beta(e_1) \cup \mathcal{C}_\beta(-e_1) \bigr\}\Bigr) 
	&\le \mathbb{P} \bigl(\hat{V} \in \mathcal{C}_\beta(e_1) \cup \mathcal{C}_\beta(-e_1)\bigr) \nonumber \\
	\text{(using $\beta \leq \pi/2$)} \qquad &= \mathbb{P} (\hat{V}_1^2 \ge \cos^2 \beta).\label{eq:v1hatquared}
	\end{align} 
	From \cref{claim: first coord} we get
	\begin{align*}
		\hat{V}_1^2 \overset{d}{=} \frac{Y_0^2}{Y_0^2 + \sum_{i=1}^{n} Y_i^2}.
	\end{align*}
	To upper bound \eqref{eq:v1hatquared}, we get from \cref{lem: monotonicity} that making $a_1=0$ (equivalently, $\mu=0$) only makes the rhs larger and we then bound the case $a_1 = 0$ explicitly.
	More precisely, let $W$ be a normal vector to $\linspan \{T_1, \dotsc, T_{d-k}\}$ defined similarly to $U$ and $V$:
	\begin{align*} 
		W = \det \begin{pmatrix}
		e_1 & e_2 & \cdots & e_{d-k+1}\\
		& & T_1^T & & \\
		& & \vdots & & \\
		& & T_{d-k}^T & &
		\end{pmatrix}.
	\end{align*}
	Note that $\hat W$ is a uniformly random unit vector.
	Following the same computation as for $V$, one can derive
	\[
	\hat{W}_1^2 \overset{d}{=} \frac{X_0^2}{X_0^2 + \sum_{i=1}^{n} X_i^2},
	\]
	where $X_0, X_i \sim \mathcal{N}(0,1)$, $i \in \left[d-k\right]$.
	Then, by \cref{lem: monotonicity},
	\( 
		\mathbb{P}(\hat{V}_1^2 \geq \cos^2 \beta) \leq	\mathbb{P}(\hat{W}_1^2 \geq \cos^2 \beta)
	\).
	Hence,
	\begin{align*}
		\mathbb{P} \Bigl(\bigl\{ \hat{V} \in \mathcal{C}_\alpha(\hat{U}) \cup \mathcal{C}_\alpha(- \hat{U}) \bigr\}  \cap \bigl\{ \mathcal{C}_\alpha(\hat{U}) \subseteq \mathcal{C}_\beta(e_1) \cup \mathcal{C}_\beta(-e_1) \bigr\}\Bigr) 
		&\leq \mathbb{P}(\hat{W}_1^2 \geq \cos^2 \beta)\\
		&= \mathbb{P} \left(\sum_{i=2}^{d-k+1} \hat{W}_i^2 \leq \sin^2 \beta \right)\\
		&\leq \mathbb{P} \left(\sum_{i=2}^{d-k} \hat{W}_i^2 \leq \sin^2 \beta \right)\\
		&\leq (\sin \beta)^{d-k-1}  \quad \text{(\cref{lem:archimedes})}. 
	\end{align*}

\paragraph{Case 2:}
$\mathcal{C}_\alpha(\hat{U})  \not \subseteq \mathcal{C}_\beta(e_1) \cup \mathcal{C}_\beta(-e_1)$. \details{equivalently $\abs{\hat{U}_1} < \cos (\beta - \alpha)$ .}
	
	If $\mathcal{C}_\alpha(\hat{U}) $ is not contained in $\mathcal{C}_\beta(e_1) \cup \mathcal{C}_\beta(-e_1)$, then $\hat{U}$ makes an angle at least $\beta - \alpha$ with $e_1$ and $-e_1$, that is
	\begin{align}
		\abs{\hat{U}_1} < \cos (\beta - \alpha) \label{eq:u1norm}.
	\end{align}
	Our goal here is to bound
	\begin{align}
		\mathbb{P} \Bigl(\bigl\{ \hat{V} \in \mathcal{C}_\alpha(\hat{U}) \cup \mathcal{C}_\alpha(- \hat{U}) \bigr\} & \cap    \bigl\{ \mathcal{C}_\alpha(\hat{U}) \not \subseteq \mathcal{C}_\beta(e_1) \cup \mathcal{C}_\beta(-e_1) \bigr\} \Bigr) \nonumber\\
		& \qquad =
			\prob \Bigl(\bigl\{ \hat{V} \in \mathcal{C}_\alpha(\hat{U}) \bigr\}  \cap \bigl\{ \mathcal{C}_\alpha(\hat{U}) \not \subseteq \mathcal{C}_\beta(e_1) \cup \mathcal{C}_\beta(-e_1) \bigr\}\Bigr) \nonumber\\
		&\qquad \quad + \mathbb{P} \Bigl(\bigl\{ \hat{V} \in  \mathcal{C}_\alpha(- \hat{U}) \bigr\}  \cap \bigl\{ \mathcal{C}_\alpha(\hat{U}) \not \subseteq \mathcal{C}_\beta(e_1) \cup \mathcal{C}_\beta(-e_1) \bigr\} \Bigr) \nonumber \\
	   & \qquad = 2 \cdot \mathbb{P} \Bigl(\bigl\{ \hat{V} \in \mathcal{C}_\alpha(\hat{U}) \bigr\}  \cap \bigl\{ \mathcal{C}_\alpha(\hat{U}) \not \subseteq \mathcal{C}_\beta(e_1) \cup \mathcal{C}_\beta(-e_1) \bigr\}\Bigr) \label{eq:case2hat}. 
	\end{align}
	
	Observe that the distribution of $\hat{U}$ and the distribution of $\hat{V}$ are invariant under rotations orthogonal to $e_1$.
	Thus, if we let $\hat{U}_{-1}, \hat{V}_{-1}$ be the projections of $\hat{U}, \hat{V}$ orthogonal to $e_1$ and $\widehat{U_{-1}}, \widehat{V_{-1}}$ be their normalizations, respectively, then $\widehat{U_{-1}}, \widehat{V_{-1}} \sim \operatorname{Unif}(\mathcal{S}^{d-k})$.
	
	This observation motivates us to use the corresponding probability of projections to bound \eqref{eq:case2hat}.
	We will show that under condition \eqref{eq:u1norm} of case 2, $\hat{V} \in \mathcal{C}_\alpha(\hat{U})$ implies that $\widehat{V_{-1}} \in \mathcal{C}_{f(\alpha)}(\widehat{U_{-1}})$, where $f(\alpha)$ is a bound (to be understood) on the angle that depends only on $\alpha$.
	As events,
	\begin{align}
	\{ \hat{V} \in \mathcal{C}_\alpha(\hat{U})\}
	&\subseteq \{ \hat{V}_{-1} \in \operatorname{Proj}_{e_1^{\perp}} \mathcal{C}_\alpha(\hat{U})  \}  \nonumber\\
	&\subseteq \{ \widehat{V_{-1}} \in \mathcal{C}_{f(\alpha)}(\widehat{U_{-1}})\}   \label{eq: case2bound}.
	\end{align}
	

	\begin{figure}[t]
	\centering
	\includegraphics[width = \columnwidth]{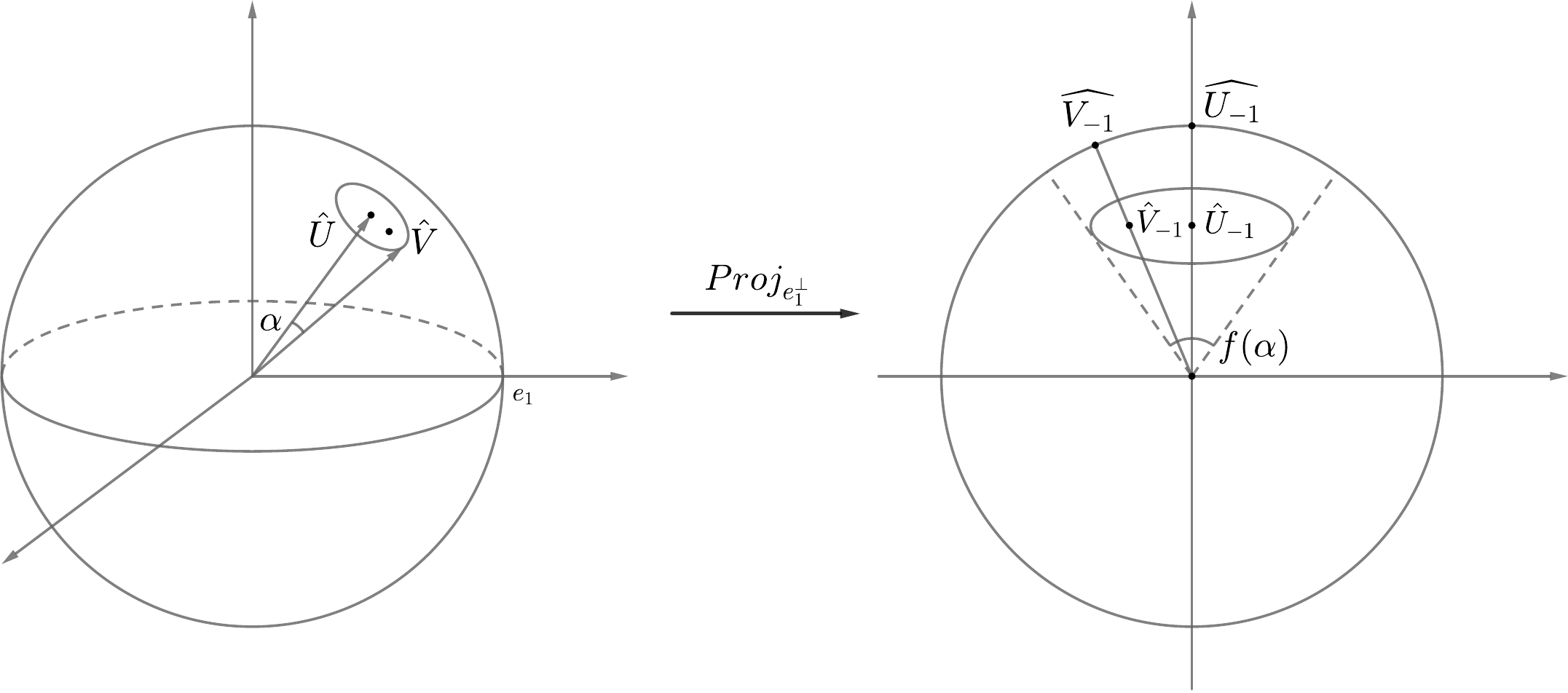}
	\caption{Case 2 of proof of \cref{lem:twobandsaff}}
	\label{fig:proj}
	\end{figure}
	
	Bounding $f(\alpha)$ is a three-dimensional problem since $\widehat{U_{-1}}, \widehat{V_{-1}}$ are in $\linspan\{e_1, \hat{U}, \hat{V}\}$.
	From now on, the analysis lives in the above three-dimensional space to get an upper bound on $f(\alpha)$.
	Let $\tilde{e}_2 = (\hat{U} - \hat{U}\cdot e_1)/ \norm{\hat{U} - \hat{U}\cdot e_1}$ (so that $\{ e_1,\tilde{e}_2\}$ is an orthonormal basis of $\linspan \{e_1, \hat U \}$).
	Let $\{ e_1,\tilde{e}_2,\tilde{e}_3 \}$ be an orthonormal basis of $\linspan\{e_1, \hat{U}, \hat{V}\}$,
	and let $\hat{U} = (\hat{U}_1,\hat{U}_2,0)$ be the coordinate tuple of $\hat{U}$ relative to $\{ e_1,\tilde{e}_2,\tilde{e}_3 \}$.
	Consider $x \in \mathcal{C}_{\alpha}(\hat{U})$ such that $x \cdot \hat{U} = \cos \gamma$. 
	Note that its coordinates $(x_1, x_2, x_3)$ in our chosen basis satisfy the following system of equations:
	\begin{align*} 
	x_1^2 + x_2^2 + x_3^2 &= 1 \\
	x_1\hat{U}_1 + x_2\hat{U}_2 &= \cos \gamma.  
	\end{align*}
	The projections of all such $x$ (for fixed $\gamma$) onto $\linspan\{\tilde{e}_2, \tilde{e}_3\}$ form the ellipse:
	\details{
	\begin{align*}
		x_1\hat{U}_1 + x_2\hat{U}_2 = \cos \gamma &\implies x_1 = \frac{\cos \gamma -x_2\hat{U}_2}{\hat{U}_1}\\
		x_1^2 + x_2^2 + x_3^2 = 1 &\implies (\frac{\cos \gamma -x_2\hat{U}_2}{\hat{U}_1})^2 + x_2^2 + x_3^2 = 1\\
		&\implies \cos ^2 \gamma + x_2^2 \hat{U}_2^2 - 2x_2\hat{U}_2\cos \gamma + x_2^2 \hat{U}_1^2 + x_3^2\hat{U}_1^2=\hat{U}_1^2\\
		&\implies (\hat{U}_1^2 + \hat{U}_2^2)x_2^2  - 2x_2\hat{U}_2\cos \gamma + \cos ^2 \gamma+x_3^2\hat{U}_1^2=\hat{U}_1^2\\
	\hat{U}_1^2 + \hat{U}_2^2 = 1	&\implies (x_2 - \hat{U}_2 \cos \gamma)^2 - \hat{U}_2^2 \cos^2 \gamma+ \cos ^2 \gamma+x_3^2\hat{U}_1^2=\hat{U}_1^2\\
		&\implies (x_2 - \hat{U}_2 \cos \gamma)^2 + \hat{U}_1^2 \cos ^2 \gamma +x_3^2\hat{U}_1^2=\hat{U}_1^2\\
		&\implies (x_2 - \hat{U}_2 \cos \gamma)^2  +x_3^2\hat{U}_1^2 = \hat{U}_1^2 \sin ^2 \gamma
	\end{align*}
	}
	\[ 
		(x_2 - \hat{U}_2 \cos \gamma)^2  +x_3^2\hat{U}_1^2 = \hat{U}_1^2 \sin ^2 \gamma.
	\]
	
	If $\hat{U}_1 = 0$, then $\hat{U}_2 = 1$, and the projection is the line segment inside unit circle at $x_2 = \cos \gamma$.
	The angle between $x_{-1}$ and $\widehat{U_{-1}}$ is upper bounded by $\gamma$.
	As $\gamma$ ranges from $0$ to $\alpha$, $\widehat{U_{-1}}$ and $\widehat{V_{-1}}$ form an angle at most $\alpha$ when $\hat{U}_1 = 0$.

	If  $\hat{U}_1 \neq 0$, the projection is an ellipse inside the unit circle.
	As shown in \cref{fig:proj}, angle between $x_{-1}$ and $\widehat{U_{-1}}$ can be upper bounded by angle formed by $\widehat{U_{-1}}$ and tangent line $x_2 = \frac{\sqrt{\cos^2 \gamma-\hat{U}_1^2}}{\sin \gamma} x_3$.
	Note that from \eqref{eq:u1norm} we know $\hat{U}_1^2 < \cos^2(\beta - \alpha) \leq \cos^2\alpha \leq \cos^2\gamma$ (here we use $\beta \geq 2\alpha$ explicitly), so the tangent line always exists.
	\details{		
	Suppose the tangent line is $x_2=kx_3$. To find $k$, we plug the line into the ellipse, then set the discriminant to 0.
	\begin{align*}
		&(x_2 - \hat{U}_2 \cos \gamma)^2  +x_3^2\hat{U}_1^2 = \hat{U}_1^2 \sin ^2 \gamma\\
		\iff & k^2x_3^2 - 2kx_3\hat{U}_2 cos \gamma + \hat{U}_2^2\cos^2 \gamma +x_3^2\hat{U}_1^2 = \hat{U}_1^2 \sin ^2 \gamma\\
		\iff &( k^2 + \hat{U}_1^2 )x_3^2 - 2kx_3\hat{U}_2 cos \gamma + (\hat{U}_2^2\cos^2  - \hat{U}_1^2 \sin ^2 \gamma) =0.
	\end{align*}
	Setting discriminant of the above function of $x_3$ equal to zero:
	\begin{align*}
		&\Delta = 4k^2\hat{U}_2^2 \cos^2 \gamma - 4(k^2 + \hat{U}_1^2)(\hat{U}_2^2\cos^2\gamma - \hat{U}_1^2\sin^2\gamma) = 0\\
		\implies& k^2\hat{U}_1^2\sin^2\gamma - \hat{U}_1^2\hat{U}_2^2\cos^2\gamma + \hat{U}_1^4\sin^2\gamma = 0\\
		\implies& \hat{U}_1 = 0 \text{ or } k^2 = \frac{\cos^2 \gamma - \hat{U}_1^2}{\sin^2 \gamma}.
	\end{align*}
	}
	
	Hence the angle between $x_{-1}$ and $\widehat{U_{-1}}$ is at most $\arctan \Big( \frac{\sin \gamma}{\sqrt{\cos^2\gamma-\hat{U}_1^2}}\Big).$
	Furthermore, since $\arctan \Big( \frac{\sin \gamma}{\sqrt{\cos^2\gamma-\hat{U}_1^2}}\Big)$ is increasing in $\gamma$,
	we can conclude that for any $\hat{V} \in \mathcal{C}_\alpha(\hat{U})$, its normalized projection orthogonal to $e_1$, $\widehat{V_{-1}}$, is contained in the spherical cap centered at $\widehat{U_{-1}}$ with polar angle at most $\arctan \Big( \frac{\sin \alpha}{\sqrt{\cos^2\alpha-\hat{U}_1^2}}\Big)$ when $\hat{U}_1 \neq 0$.\details{also $=\arctan \Big( \frac{\hat U_2^2}{\sin^2 \alpha}-1\Big)^{-1/2}$}
	
	Therefore with \eqref{eq:u1norm}, we can take
	\begin{align*}
		f(\alpha) = \max \{\arctan \Big( \frac{\sin \alpha}{\sqrt{\cos^2\alpha-\cos^2 (\beta - \alpha)}}\Big), \alpha \}.
	\end{align*}
	Combine with \eqref{eq:case2hat} and \eqref{eq: case2bound},
	\begin{multline*}
		\mathbb{P} (\{ \hat{V} \in \mathcal{C}_\alpha(\hat{U}) \cup \mathcal{C}_\alpha(- \hat{U}) \}  \cap \{ \mathcal{C}_\alpha(\hat{U}) \not \subseteq \mathcal{C}_\beta(e_1) \cup \mathcal{C}_\beta(-e_1) \}) \\
		\begin{aligned}
			&\leq 2 \cdot \mathbb{P} (\{ \widehat{V_{-1}} \in \mathcal{C}_{f(\alpha)}(\widehat{U_{-1}})\}  \cap \{ \mathcal{C}_\alpha(\hat{U}) \not \subseteq \mathcal{C}_\beta(e_1) \cup \mathcal{C}_\beta(-e_1) \})\\
			&\leq 2 \cdot \mathbb{P}\left(\abs{\widehat{U_{-1}} \cdot \widehat{V_{-1}}} \geq \cos (f(\alpha))\right) \\
			& =  2 \cdot \mathbb{P}\left( \sqrt{1 - (\widehat{U_{-1}} \cdot \widehat{V_{-1}})^2} \leq \sin f(\alpha) \right) \\
			&\leq 2 \left(\sin f(\alpha) \right)^{d-k-2} \qquad \text{(\cref{lem:archimedes})} \\
			&= 2\left(\max \{\frac{\sin \alpha}{\sqrt{\cos^2\alpha - \cos^2 (\beta - \alpha)  + \sin^2 \alpha}} , \sin \alpha \} \right) ^{d-k-2} \\
			&= 2\left(\max \{\frac{\sin \alpha}{\sin(\beta - \alpha)} , \sin \alpha \} \right) ^{d-k-2}\\
			& = 2 \left(\frac{\sin \alpha}{\sin(\beta - \alpha)}\right) ^{d-k-2}.
		\end{aligned}
	\end{multline*}\details{$\sin \arctan x = x/\sqrt{x^2+1}$}
	
	Therefore,
	\begin{equation}\label{equ:conditionalbound}
	\mathbb{P} \left( \abs{\hat{U} \cdot \hat{V}} \geq \cos \alpha \right) \leq (\sin \beta)^{d-k-1} +  2 \left(\frac{\sin \alpha}{\sin(\beta - \alpha)}\right) ^{d-k-2}.
	\end{equation}
	Note that we proved bound \eqref{equ:conditionalbound} conditioning on $A_i$'s, hence it is also a valid bound for random $A_i$'s (unconditionally).
	By \cref{lem:parallelogram}, \eqref{equ:conditionalbound} implies
	\begin{equation}
		\mathbb{P} \left(\mathcal{G}(\mathcal{B}_S \cap \mathcal{B}_T) \geq \frac{\epsilon^2 }{\sqrt{2\pi} \sin \alpha} \right) \leq   (\sin \beta)^{d-k-1} +  2 \left(\frac{\sin \alpha}{\sin(\beta - \alpha)}\right) ^{d-k-2}.
	\end{equation}
	Use inequalities $(2/\pi)x \leq \sin x \leq x$ for $0 \leq x \leq \pi/2$ to get
	\[
		\mathbb{P} \left(\mathcal{G}(\mathcal{B}_S \cap \mathcal{B}_T) \geq \frac{\epsilon^2 \sqrt{\pi} }{\sqrt{8} \alpha} \right) \leq \beta^{d-k-1} +  2 \left(\frac{ \pi \alpha}{2(\beta - \alpha)}\right) ^{d-k-2}.
	\]
	Set $\beta = \sqrt{\alpha}$ and restrict $0 < \alpha < 1/4$ so that $\sqrt{\alpha} \leq 1/2$. 
	The above probabilistic bound simplifies to
	\begin{align*}
		\mathbb{P} \left(\mathcal{G}(\mathcal{B}_S \cap \mathcal{B}_T) \geq \frac{\epsilon^2 \sqrt{\pi} }{\sqrt{8} \alpha} \right)
		& \leq \alpha^{(d-k-1)/2} +  2 \left(\frac{ \pi \alpha}{2(\sqrt{\alpha} - \alpha)}\right) ^{d-k-2}\\
		& = \alpha^{(d-k-1)/2} +  2 \left(\frac{ \pi \sqrt{\alpha}}{2(1 - \sqrt{\alpha})}\right) ^{d-k-2}\\
		& \leq \alpha^{(d-k-1)/2} +  2 \left( \pi \sqrt{\alpha} \right) ^{d-k-2}  \quad  (\text{use }1-\sqrt{\alpha} > 1/2) \\
		& \leq 3 \left( \pi \sqrt{\alpha} \right) ^{d-k-2}
	\end{align*}
	The claim follows by setting $\alpha = \frac{\pi}{2t}$.
\end{proof}

Combining \cref{lem:twobandsaff no k,lem:twobandsaff2} we get
\fi

\begin{lemma}\label{lem:twobandsaff}
	Let $A_1,\dots,A_k$, $S_1, \dots, S_{d-k}$, $T_1, \dots, T_{d-k}$ be iid.\ standard Gaussian random vectors in $\RR^d$ and $0 \leq k \leq d$.
	Let 
	\begin{align*}
	\mathcal{B}_S &= (\aff \{A_1, \dotsc, A_k, S_1, \dotsc, S_{d-k} \})_{\eps /2}, \\
	\mathcal{B}_T &= (\aff \{A_1, \dotsc, A_k, T_1, \dotsc, T_{d-k} \})_{\eps /2}.
	\end{align*}
	Then for $t > 2\pi$ we have
	\[
	\mathbb{P} \left(\mathcal{G}(\mathcal{B}_S \cap \mathcal{B}_T) \geq \frac{\epsilon^2  t}{\sqrt{2 \pi}} \right) 
	\leq 3 \left(\frac{ \pi ^{\hfrac{3}{2}} }{ \sqrt{2t} } \right)^{d-k-2}.
	\]
	
	\lnote{move pi right to left? $\sqrt{s}=\pi/\sqrt{t}$ to get 
	\[
		\mathbb{P} \left(\mathcal{G}(\mathcal{B}_S \cap \mathcal{B}_T) \geq \frac{ \pi^{5/2} }{\sqrt{8}}\epsilon^2 s \right) 
		\leq \frac{3}{s^{\hfrac{(d-k-2)}{2}}}.
	\]
	} 
	
\end{lemma}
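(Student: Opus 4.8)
The plan is to deduce \cref{lem:twobandsaff} directly from the two lemmas just established, by a case split according to whether the two affine flats share a vertex, i.e.\ whether $k = 0$ or $1 \le k \le d$. In both regimes the prior lemma already produces a tail bound on $\mathcal{G}(\mathcal{B}_S \cap \mathcal{B}_T)$, and the only work is to exhibit a single clean upper bound of the form $3\bigl(\pi^{3/2}/\sqrt{2t}\bigr)^{d-k-2}$ that dominates both.

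For $1 \le k \le d$, \cref{lem:twobandsaff2} already states in its ``in particular'' conclusion that for $t > 2\pi$ we have $\mathbb{P}\bigl(\mathcal{G}(\mathcal{B}_S \cap \mathcal{B}_T) \ge \eps^2 t/\sqrt{2\pi}\bigr) \le 3\bigl(\pi^{3/2}/\sqrt{2t}\bigr)^{d-k-2}$, which is exactly the claimed bound, so nothing more is needed in this case. For $k = 0$, \cref{lem:twobandsaff no k} gives $\mathbb{P}\bigl(\mathcal{G}(\mathcal{B}_S \cap \mathcal{B}_T) \ge \eps^2 t/\sqrt{2\pi}\bigr) \le t^{-(d-2)}$ for all $t \ge 1$, and it remains to check $t^{-(d-2)} \le 3\bigl(\pi^{3/2}/\sqrt{2t}\bigr)^{d-2}$ whenever $t > 2\pi$. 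For $d \le 2$ both sides are at least $1$, so the inequality is immediate; for $d \ge 3$ it is equivalent, after clearing powers, to $\bigl(2/(\pi^3 t)\bigr)^{(d-2)/2} \le 3$, which holds since $2/(\pi^3 t) < 1/\pi^4 < 1$ for $t > 2\pi$. Putting the two cases together gives the stated bound for every $0 \le k \le d$.

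I do not expect a genuine obstacle here: the real content of the estimate (the two-case spherical-cap analysis relating the angle between the normal vectors of the two flats to the Gaussian measure of the intersection) lives in \cref{lem:twobandsaff2}, and this statement is only the bookkeeping step that unifies the ``no shared vertices'' and ``shared vertices'' regimes under one bound. The only mild care needed is to keep track of the constant ($3$ rather than $1$) and of the degenerate small-$d$ (or $k$ close to $d$) cases, where the exponent $d-k-2$ is nonpositive and the right-hand side is $\ge 1$, making the bound vacuously true.
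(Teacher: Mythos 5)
Your proposal is correct and matches the paper exactly: the paper derives \cref{lem:twobandsaff} by simply combining \cref{lem:twobandsaff no k,lem:twobandsaff2}, with the $1 \le k \le d$ case read off from the ``in particular'' conclusion of \cref{lem:twobandsaff2}. Your explicit verification that the $k=0$ tail bound $t^{-(d-2)}$ from \cref{lem:twobandsaff no k} is dominated by $3\bigl(\pi^{3/2}/\sqrt{2t}\bigr)^{d-2}$ for $t>2\pi$ is a correct piece of bookkeeping that the paper leaves implicit.
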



Suppose $P_n = \conv(A_1, \dotsc, A_n)$ is a full-dimensional simplicial polytope in $\mathbb{R}^d$ and $\mathcal{F}_n$ is its set of facets.
For $S \in \mathcal{F}_n$, we abuse notation so that $S$ also denotes the index set of vertices of $S$.
Let $U_S$ be an unit inner normal vector of $\aff (A_S)$ to $P_n$.
Define $(\aff A_S)_{\eps^-} := \{x \in \RR^d \suchthat \ 0<d(x, \aff A_S) \leq \eps, U_S\cdot (x-A_s) \geq 0 , s \in S\}$.

\begin{lemma}\label{lem:bands2 lim}
	Let $\delta \in (0,1)$.
	Suppose $A_1, \dotsc, A_n$ are $d$-dimensional iid.\ standard Gaussian random vectors with $d = \floor{\delta n}$.
	Let $P_n = \conv(A_1, \dotsc, A_n)$, which is full-dimensional simplicial a.s.
	For $\eps > 0$,	define a.s.
	\[
	V_n = \mathcal{G}\left( \bigcup_{S \in \mathcal{F}_n } (\aff A_S)_{\eps^- } \setminus P_n \right).
	\]
	\begin{enumerate}
		\item $V_n \leq \frac{ \eps}{\sqrt{2\pi}}  \binom{n}{d} $.
		
		\item There exist $c_2, c_7, c_8 > 1$ (that depend only on $\delta$) such that when $\eps=\eps(d) \leq \hfrac{1}{(c_8c_2^d)}$ we have
		\(
		 \lim_{n \goesto \infty} \mathbb{P} \left(V_n \geq (\hfrac{c_2^d}{c_7}) \eps\right) = 1.
		\)		
	\end{enumerate}	
\end{lemma}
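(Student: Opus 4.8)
Part 1 is a direct union bound. Since $P_n$ is simplicial a.s., every facet has exactly $d$ vertices, so distinct facets are distinct $d$-subsets of $[n]$ and $\card{\mathcal{F}_n}\leq\binom{n}{d}$; each $(\aff A_S)_{\eps^-}$ lies in a slab $\{x: c_S\leq \inner{U_S}{x}\leq c_S+\eps\}$ of Gaussian measure at most $\eps/\sqrt{2\pi}$ (maximal one–dimensional density times width), and summing over facets gives $V_n\leq\frac{\eps}{\sqrt{2\pi}}\binom{n}{d}$.

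For Part 2 the plan follows the structure of \cref{lem:union of bands(non-asym)} part 2, but with extra work because we may only use \emph{facets}, not arbitrary $(d-1)$-subsets, and because the facet family is itself random. I would work on a good event $E_n$ of probability $\goesto 1$ on which: (i) $\card{\mathcal{F}_n}\geq c_{11}^d$ for a constant $c_{11}=c_{11}(\delta)>1$ (\cref{lem: num of facets}); (ii) the shortest vector $V_S$ of $\aff A_S$ satisfies $\norm{V_S}\leq c_9=c_9(\delta)$ for \emph{all} $d$-subsets $S$ (\cref{lem:boundofaffmin}, using $n/d\goesto 1/\delta$); and (iii) $\mathcal{G}\bigl((\aff A_S)_{\eps}\cap(\aff A_T)_{\eps}\bigr)\leq 4\eps^2 t/\sqrt{2\pi}$ for all $d$-subsets $S\neq T$ with $\card{S\cap T}\leq (1-c)d$, where $c=c(\delta)\in(0,1/2)$ and $t=t(\delta)>2\pi$ are constants fixed below. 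Item (iii) comes from applying \cref{lem:twobandsaff} (with its $\eps$ replaced by $2\eps$, since $(\aff A_S)_{\eps^-}\subseteq(\aff A_S)_{\eps}$) to each \emph{fixed} pair of $d$-subsets, and a union bound over the $\leq\sum_{k\leq (1-c)d}\binom{n}{d}\binom{d}{k}\binom{n-d}{d-k}$ such pairs; estimating the binomials by $(O_\delta(1))^d\,(O_\delta(1/c))^{d-k}$ and using $d-k\geq cd$, the failure probability is at most $\poly(d)\cdot\bigl[O_\delta(1)\cdot(O_\delta(1/c)\cdot \pi^{3/2}/\sqrt{2t})^{c}\bigr]^{d}$, which $\goesto 0$ once $t$ is large enough given $c,\delta$.

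On $E_n$ I would then \emph{greedily prune} $\mathcal{F}_n$ to a subfamily $\mathcal{T}$ with pairwise intersections $\leq (1-c)d$: repeatedly pick a facet, add it to $\mathcal{T}$, and discard every facet sharing more than $(1-c)d$ vertices with it; at most $\poly(d)\binom{d}{cd}\binom{n}{cd}=(O_\delta(1/c))^d$ facets are discarded per step, so $\card{\mathcal{T}}\geq c_{11}^d/(O_\delta(1/c))^d\geq c_2^d$ for a constant $c_2=c_2(\delta)>1$, provided $c$ is small enough that $c_{11}$ beats the per-step deletion base (possible, since that base $\goesto 1$ as $c\goesto 0$) and $d$ is large. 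Keeping exactly $\lceil c_2^d\rceil$ of these facets, Bonferroni gives
\[
\mathcal{G}\Bigl(\bigcup_{S\in\mathcal{T}}(\aff A_S)_{\eps^-}\Bigr)\geq \sum_{S\in\mathcal{T}}\mathcal{G}\bigl((\aff A_S)_{\eps^-}\bigr)-\sum_{\{S,T\}\subseteq\mathcal{T}}\mathcal{G}\bigl((\aff A_S)_{\eps^-}\cap(\aff A_T)_{\eps^-}\bigr),
\]
where on $E_n$ each slab $(\aff A_S)_{\eps^-}=\{x: b_S\leq\inner{U_S}{x}\leq b_S+\eps\}$ with $\abs{b_S}=\dist(0,\aff A_S)=\norm{V_S}\leq c_9$ has measure $\geq \eps\,\varphi(c_9+1)=:c_{10}\eps$ (for $\eps\leq 1$), and each pairwise intersection is $\leq 4\eps^2 t/\sqrt{2\pi}$ by (iii). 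Observing that $\bigl(\bigcup_{S\in\mathcal{T}}(\aff A_S)_{\eps^-}\bigr)\cap P_n\subseteq P_n\setminus (P_n)_{-\eps}$ — a point of an inner $\eps$-slab of a facet that lies in $P_n$ is within $\eps$ of that facet's supporting hyperplane, so moving it $\eps$ along the outer normal exits $P_n$ — and bounding $\mathcal{G}(P_n\setminus (P_n)_{-\eps})\leq c\eps d^{1/4}$ by \cref{lem: gauss neighborhood}, I obtain
\[
V_n\geq \card{\mathcal{T}}c_{10}\eps-\tfrac12\card{\mathcal{T}}^2\cdot\tfrac{4\eps^2 t}{\sqrt{2\pi}}-c\eps d^{1/4}\geq c_2^d\eps\Bigl(c_{10}-\tfrac{8tc_2^d\eps}{\sqrt{2\pi}}\Bigr)-c\eps d^{1/4}.
\]
Choosing $c_8=c_8(\delta)$ so that $\eps\leq 1/(c_8c_2^d)$ forces the bracket to be $\geq c_{10}/2$ and $\eps\leq 1$; then $c_2^dc_{10}\eps/2$ dominates $c\eps d^{1/4}$ for $d$ large, leaving $V_n\geq (c_2^d/c_7)\eps$ with $c_7=4/c_{10}$. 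Since $\mathbb{P}(E_n)\goesto 1$ and $d=\floor{\delta n}\goesto\infty$, this yields $\lim_{n\goesto\infty}\mathbb{P}\bigl(V_n\geq (c_2^d/c_7)\eps\bigr)=1$.

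I expect the main obstacle to be precisely the tension between the randomness of $\mathcal{F}_n$ and the need for a well-separated subfamily: one cannot condition on ``$S,T$ are facets'' when invoking \cref{lem:twobandsaff}, which forces the union bound over \emph{all} $d$-subset pairs in (iii), and one must keep the separation parameter $c$ small enough to get the greedy count $c_2>1$ while choosing $t$ (depending on $c$) large enough for that union bound to converge — these requirements are compatible, but verifying it needs the entropy/binomial estimates above. A secondary point is that the $O(\eps d^{1/4})$ boundary-layer term must be subtracted only once (for the union), not once per facet, or the bound would not survive.
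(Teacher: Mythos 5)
Your proposal is correct and follows essentially the same route as the paper's proof: prune $\mathcal{F}_n$ greedily (Gilbert--Varshamov) to an exponentially large subfamily with pairwise intersections at most $(1-c)d$, apply Bonferroni, lower-bound each band's measure via \cref{lem:boundofaffmin}, upper-bound pairwise intersections via \cref{lem:twobandsaff}, and subtract the boundary layer once via \cref{lem: gauss neighborhood}. The one place you diverge is the point you flag yourself: the paper applies \cref{lem:twobandsaff} in a union bound over the $\binom{N}{2}$ pairs of the \emph{random} family $\mathcal{T}$, whereas you union over all well-separated pairs of $d$-subsets and absorb the extra entropy into a larger constant $t$ --- this is a more careful handling of the randomness of the facet family, at no cost to the conclusion.
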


\begin{proof}[Proof of part 1]
	The upper bound follows from the union bound of at most $\binom{n}{d}$ facets and the fact that the 1-dimensional Gaussian density is upper bounded by $1/\sqrt{2\pi}$.
\end{proof}
\begin{proof}[Proof of part 2]
	From \cref{lem: num of facets}, there exists a constant $c_{\mathcal{F}} > 1$ (that depends only on $\delta$) such that $\mathbb{P}\left(\card{\mathcal{F}_n} \geq c_{\mathcal{F}}^d \right) \goesto 1 \text{ as }  n \goesto \infty.$
	Since $P_n$ is simplicial a.s., we may present $\mathcal{F}_n$ as a set of binary $n$-vectors with exactly $d$ ones.
	Let $A_{\mathcal{F}_n}(t)$ be the maximum number of vectors in $\mathcal{F}_n$ with pairwise Hamming distance greater than or equal to $t$.
	Similarly as in the proof of \cref{lem:gilbert general}, one can pick vectors greedily (Gilbert-Varshamov bound) so that when $\card{\mathcal{F}_n} \geq c_{\mathcal{F}}^d$ and $c\in (0,1)$, and using $n/d < 2/\delta$ when $d\geq 2$,
	\[
	A_{\mathcal{F}_n}(cd) \geq \frac{c_{\mathcal{F}}^d}{(ne/c d)^{cd}} 
	\geq \frac{c_{\mathcal{F}}^d}{(2e/c\delta)^{cd}}.
	\]
	\details{$\frac{n}{d} = \frac{n}{\floor{\delta n}} <  \frac{n}{\delta n - 1} \leq  \frac{2n}{\delta n}  = \frac{2}{\delta} $}
	Since $\lim_{c \to 0^+} (2e/c\delta)^{c} = 1$ and $(2e/c\delta)^{c}$ is increasing for $0 \leq c \leq \hfrac{2}{\delta}$, we can pick $c_1 \in (0,1)$ such that $(2e/c_1\delta)^{c_1} < c_{\mathcal{F}}$.
	Let $c_2 = \frac{ c_{\mathcal{F}} }{ \left(\hfrac{2e}{c_1\delta}\right)^{c_1} } > 1$.
	Then we have,
	\begin{align}
	\lim_{n \goesto \infty}\mathbb{P} \bigl( A_{\mathcal{F}_n}(c_1d) \geq c_2^d  \bigr) = 1. \label{eqn:exp subset of facets}
	\end{align}
	Here we get a subset of facets $\mathcal{T} \subseteq \mathcal{F}_n$ such that any two different facets in $\mathcal{T}$ share no more than $(1-\frac{c_1}{2}) d$ vertices, and $\card{\mathcal{T}} = c_2^{d}$ for some constants $0 < c_1 < 1$, $c_2 > 1$ (that depend only on $\delta$).
	Let $N = \card{\mathcal{T}}$.
	Let $\mathcal{B}_S = (\aff A_S)_{\eps^-}$, $S \in \mathcal{F}_n$.
	Using an argument similar to the proof of \cref{lem:union of bands(non-asym)}, we get
	\begin{align*}
	V_n 
	&= \mathcal{G}\left( \bigcup_{S \in \mathcal{F}_n } (\aff A_S)_{\eps^- } \setminus P_n \right) \\
	&= \mathcal{G}\left( \bigcup_{S \in \mathcal{F}_n } (\aff A_S)_{\eps^- }\right) -  \mathcal{G}\bigl( P_n \setminus (P_n)_{-\eps}\bigr) \nonumber\\
	&\geq \mathcal{G}\left(\bigcup_{S \in \mathcal{T}} (\aff A_S)_{\eps^-}\right) - \mathcal{G}\bigl( P_n \setminus (P_n)_{-\eps} \bigr) \nonumber\\
	&\geq \sum_{S \in \mathcal{T}} \mathcal{G}(\mathcal{B}_S) - \frac{1}{2} \sum_{S,T \in \mathcal{T}, S \neq T} \mathcal{G}(\mathcal{B}_S \cap \mathcal{B}_T) - \mathcal{G}\bigl( P_n \setminus (P_n)_{-\eps} \bigr).
	\end{align*}
	
	We are going to bound each of the three terms in the last expression.
	
	\paragraph{First term: $\sum_{S \in \mathcal{T}} \mathcal{G}(\mathcal{B}_S)$.}
	From \cref{lem:boundofaffmin}, there exists a constant $c_3 > 0$ (that depends only on $\delta$) such that
	\(
	\mathbb{P}\left(\max_{S \subseteq \left[n\right], |S| = d} \dist(\aff A_S, 0) \leq c_3  \right) \geq 1 - 2e^{- d}\).
	Moreover, we increase $c_3$ so that $c_3>1$, which ensures that $c_3 \geq \eps$.	
	Recall that $\mathcal{B}_S = (\aff A_S)_{\eps^-}$. 
	We get
	\begin{equation}
	\mathbb{P}\left(  \sum_{S \in \mathcal{T}} \mathcal{G}(\mathcal{B}_S) \geq \frac{N\eps }{\sqrt{2 \pi}} e^{- 2c_3^2}  \right) \geq 1 - 2e^{- d}. \label{eqn:V term1}
	\end{equation}
	\details{$\mathbb{P}\left(\max_{S \subseteq \left[n\right], |S| = d} \dist(\aff A_S, 0) \leq c_2 (2+\sqrt{n/d}) \right) \geq 1 - 2e^{- d}$ and $n/d \leq 2/\delta$}
	
	\paragraph{Second term: $\frac{1}{2}\sum_{S,T \in \mathcal{T}, S \neq T} \mathcal{G}(\mathcal{B}_S \cap \mathcal{B}_T)$.}
	Use \cref{lem:twobandsaff} in a union bound applied to all pairs of sets in $\mathcal{T}$.
	For $t > 2\pi$ we have,
	\[
	\frac{1}{2} \sum_{S,T \in \mathcal{T}, S \neq T} \mathcal{G}(\mathcal{B}_S \cap \mathcal{B}_T) 
	\leq \binom{N}{2} \frac{\epsilon^2  t}{\sqrt{2 \pi}}
	\] 
	holds with probability at least
	\begin{align*}
	1-3\binom{N}{2} \left(\frac{ \pi ^{\hfrac{3}{2}} }{ \sqrt{2t} } \right)^{d-(1-c_1/2) d-2}
	&\geq  1- \frac{3N^2 t^2}{\pi^3} \left( \frac{ \pi ^{\hfrac{3}{2}} }{ \sqrt{2t} } \right)^{c_1 d/2}  \nonumber \\
	&= 1 - \frac{3 t^2}{\pi^3} \left( c_2^2 \left( \frac{ \pi ^{\hfrac{3}{2}} }{ \sqrt{2t} }  \right)^{c_1/2} \right)^d .
	\end{align*}
	Choose $t = c_4 := \frac{1}{2} \pi^3 (ec_2^2)^{\frac{4}{c_1}}$ to get
	\begin{align}
	\mathbb{P} \left( \frac{1}{2} \sum_{S,T \in \mathcal{T}, S \neq T} \mathcal{G}(\mathcal{B}_S \cap \mathcal{B}_T) 
	\leq \binom{N}{2} \frac{\epsilon^2  c_4}{\sqrt{2 \pi}}  \right) \geq 1 - \frac{3 c_4^2}{\pi^3} e^{-d}. \label{eqn:V term2}
	\end{align}
	
	\paragraph{Third term: $\mathcal{G}\left(  P_n \setminus (P_n)_{-\eps} \right)$.}
	From \cref{lem: gauss neighborhood}, we know \( \mathcal{G}\bigl(P_n \setminus (P_n)_{-\eps} \bigr) \leq c_5\eps d^{1/4}\) for some absolute constant $c_5$.
	
	Combining \eqref{eqn:exp subset of facets}, \eqref{eqn:V term1} and \eqref{eqn:V term2} we conclude that with probability $1-o(1)$ as $d \goesto \infty$:
	\begin{align*}
	V_n 
	&\geq \sum_{S \in \mathcal{T}} \mathcal{G}(B_S) - \frac{1}{2} \sum_{S,T \in \mathcal{T}, S \neq T} \mathcal{G}(B_S \cap B_T) - \mathcal{G}\bigl( P_n \setminus (P_n)_{-\eps} \bigr)\\
	&\geq \frac{N\eps }{\sqrt{2 \pi}} e^{-2c_3^2}- \binom{N}{2} \frac{\epsilon^2  c_4}{\sqrt{2 \pi}} - c_5\eps d^{1/4} \\
	&\geq \frac{N \eps}{\sqrt{2\pi}} \left(e^{-2c_3^2} - \frac{N\eps c_4}{2} - \frac{\sqrt{2\pi} c_5 d^{1/4}}{N} \right).
	\end{align*}
	Note that $\hfrac{\sqrt{2\pi} c_5 d^{1/4}}{N}$ decays exponentially in $d$.
	Therefore, when $\eps \leq 1 / e^{2c_3^2}c_4N$,
	\[
	\lim_{n \goesto \infty} \mathbb{P} \left(V_n \geq \frac{ N\eps}{3\sqrt{2\pi} e^{2c_3^2} } \right) = 1.
	\]
	The proof is finished by setting $c_7 = 3\sqrt{2\pi} e^{2c_3^2}$ and $c_8 = e^{2c_3^2}c_4$.
\end{proof}

We are ready now to restate and prove the main result of the section.
\thmpolytopeconditioning*
\begin{proof}
	For $\diam(P_{n+1})$, by \cref{lem:laurent} we have
	\begin{align*}
	\prob \bigl(\diam(P_{n+1})^2 \leq 2d - 4\sqrt{dt} \bigr) 
	& = \prob \bigl( \norms{A_i - A_j} \leq 2d - 4\sqrt{dt}, \ \forall i \neq j \in [n+1]  \bigr) \\
	& \leq \prob \left( \bigcap_{i=1}^{\floor{(n+1)/2}} \norms{A_{2i-1} - A_{2i}} \leq 2d - 4\sqrt{dt}  \right) \\
	& \leq \left(  e^{-t} \right)^{n/2}. 
	\end{align*}
	We get the claimed bound by setting $t = d/16$.
	\details{	
		$
		\prob \left(\diam(P_n) \geq d \right)  \geq 1 - \left(  e^{-d/16} \right)^{\floor{n/2}} \geq 1 - e^{-\frac{nd}{64}}.
		$
	}

	Apply \cref{lem:bands2 lim} to $P_{n} = \conv(A_1, \dotsc, A_{n})$ with $\eps = \hfrac{1}{(c_8c_2^d)}$, we have
	\[
	\lim_{n \goesto \infty} \mathbb{P} \left(V_n \geq \frac{1}{c_7c_8} \right) = 1.
	\]
	Since $\vf(P_{n+1}) \leq \eps$ when $A_{n+1} \in V_n$, 
	\(
	\lim_{n \goesto \infty} \mathbb{P} \left( \vf(P_{n+1}) \leq 1/(c_8c_2^d) \right) \geq  \frac{1}{c_7c_8}.
	\)
	The claim follows by picking $c = 1/c_2$ and $c' = 1/c_7c_8$.
%
%
\end{proof}

\paragraph{Acknowledgments.}
We would like to thank Nina Amenta, Jes\'us De Loera, Miles Lopes, Javier Pe\~na, Thomas Strohmer, Roman Vershynin and Van Vu for helpful discussions. 
This material is based upon work supported by the National Science Foundation under Grants CCF-1657939, CCF-1422830, CCF-2006994 and CCF-1934568.

\bibliographystyle{alpha}
\bibliography{bib}

\begin{thebibliography}{BGMN05}

\bibitem[AV97]{alon1997anti}
Noga Alon and Van Vu.
\newblock Anti-hadamard matrices, coin weighing, threshold gates, and
  indecomposable hypergraphs.
\newblock {\em Journal of Combinatorial Theory, Series A}, 79(1):133--160,
  1997.

\bibitem[AW91]{DBLP:journals/dcg/AffentrangerW91}
Fernando Affentranger and John~A. Wieacker.
\newblock On the convex hull of uniform random points in a simple $d$-polytope.
\newblock {\em Discret. Comput. Geom.}, 6:291--305, 1991.

\bibitem[Bal93]{Ball1993}
Keith Ball.
\newblock The reverse isoperimetric problem for {G}aussian measure.
\newblock {\em Discrete {\&} Computational Geometry}, 10(4):411--420, 1993.

\bibitem[BCMV13]{DBLP:journals/corr/BhaskaraCMV13}
Aditya Bhaskara, Moses Charikar, Ankur Moitra, and Aravindan Vijayaraghavan.
\newblock Smoothed analysis of tensor decompositions.
\newblock {\em CoRR}, abs/1311.3651, 2013.

\bibitem[BCV14]{bhaskara2014uniqueness}
Aditya Bhaskara, Moses Charikar, and Aravindan Vijayaraghavan.
\newblock Uniqueness of tensor decompositions with applications to polynomial
  identifiability.
\newblock In {\em Conference on Learning Theory}, pages 742--778, 2014.

\bibitem[BGMN05]{barthe2005probabilistic}
Franck Barthe, Olivier Guédon, Shahar Mendelson, and Assaf Naor.
\newblock A probabilistic approach to the geometry of the {$\ell_p^n$}-ball.
\newblock {\em The Annals of Probability}, 33(2):480--513, 2005.

\bibitem[BGR15]{DBLP:conf/stacs/BrunschGR15}
Tobias Brunsch, Anna Gro{\ss}wendt, and Heiko R{\"{o}}glin.
\newblock Solving totally unimodular {LP}s with the shadow vertex algorithm.
\newblock In {\em 32nd International Symposium on Theoretical Aspects of
  Computer Science}, volume~30 of {\em LIPIcs}, pages 171--183, 2015.

\bibitem[BLR18]{boroczky2018facets}
Karoly~J Boroczky, Gabor Lugosi, and Matthias Reitzner.
\newblock Facets of high-dimensional {G}aussian polytopes.
\newblock {\em arXiv preprint arXiv:1808.01431}, 2018.

\bibitem[BR13]{brunsch2013finding}
Tobias Brunsch and Heiko R{\"o}glin.
\newblock Finding short paths on polytopes by the shadow vertex algorithm.
\newblock In {\em International Colloquium on Automata, Languages, and
  Programming}, pages 279--290. Springer, 2013.

\bibitem[BS17]{Beck2017}
Amir Beck and Shimrit Shtern.
\newblock Linearly convergent away-step conditional gradient for non-strongly
  convex functions.
\newblock {\em Mathematical Programming}, 164(1):1--27, Jul 2017.

\bibitem[BV04]{DBLP:conf/stoc/BeierV04}
Ren{\'{e}} Beier and Berthold V{\"{o}}cking.
\newblock Typical properties of winners and losers in discrete optimization.
\newblock In {\em Proceedings of the 36th Annual {ACM} Symposium on Theory of
  Computing}, pages 343--352, 2004.

\bibitem[BV06]{DBLP:journals/siamcomp/BeierV06}
Ren{\'{e}} Beier and Berthold V{\"{o}}cking.
\newblock Typical properties of winners and losers in discrete optimization.
\newblock {\em {SIAM} J. Comput.}, 35(4):855--881, 2006.

\bibitem[CCK17]{chernozhukov2017central}
Victor Chernozhukov, Denis Chetverikov, and Kengo Kato.
\newblock Central limit theorems and bootstrap in high dimensions.
\newblock {\em The Annals of Probability}, 45(4):2309--2352, 2017.

\bibitem[CJL19]{cai2019asymptotic}
T~Tony Cai, Tiefeng Jiang, and Xiaoou Li.
\newblock Asymptotic analysis for extreme eigenvalues of principal minors of
  random matrices.
\newblock {\em arXiv preprint arXiv:1905.08757}, 2019.

\bibitem[CT05]{DBLP:journals/tit/CandesT05}
Emmanuel~J. Cand{\`{e}}s and Terence Tao.
\newblock Decoding by linear programming.
\newblock {\em {IEEE} Trans. Inf. Theory}, 51(12):4203--4215, 2005.

\bibitem[CT06]{DBLP:journals/tit/CandesT06}
Emmanuel~J. Cand{\`{e}}s and Terence Tao.
\newblock Near-optimal signal recovery from random projections: Universal
  encoding strategies?
\newblock {\em {IEEE} Trans. Inf. Theory}, 52(12):5406--5425, 2006.

\bibitem[DH16]{DBLP:journals/dcg/DadushH16}
Daniel Dadush and Nicolai H{\"{a}}hnle.
\newblock On the shadow simplex method for curved polyhedra.
\newblock {\em Discret. Comput. Geom.}, 56(4):882--909, 2016.

\bibitem[DT05]{Donoho9452}
David~L. Donoho and Jared Tanner.
\newblock Neighborliness of randomly projected simplices in high dimensions.
\newblock {\em Proceedings of the National Academy of Sciences},
  102(27):9452--9457, 2005.

\bibitem[EV17]{DBLP:journals/mp/EisenbrandV17}
Friedrich Eisenbrand and Santosh~S. Vempala.
\newblock Geometric random edge.
\newblock {\em Math. Program.}, 164(1-2):325--339, 2017.

\bibitem[FW56]{doi:10.1002/nav.3800030109}
Marguerite Frank and Philip Wolfe.
\newblock An algorithm for quadratic programming.
\newblock {\em Naval Research Logistics Quarterly}, 3(1‐2):95--110, 1956.

\bibitem[GH13]{DBLP:journals/corr/abs-1301-4666}
Dan Garber and Elad Hazan.
\newblock A polynomial time conditional gradient algorithm with applications to
  online and stochastic optimization.
\newblock {\em CoRR}, abs/1301.4666, 2013.

\bibitem[GM86]{MR842638}
Jacques Gu\'{e}lat and Patrice Marcotte.
\newblock Some comments on {W}olfe's `away step'.
\newblock {\em Mathematical Programming}, 35(1):110--119, 1986.

\bibitem[HMR04]{MR2093024}
Daniel Hug, G\"{o}tz~Olaf Munsonius, and Matthias Reitzner.
\newblock Asymptotic mean values of {G}aussian polytopes.
\newblock {\em Beitr\"{a}ge Algebra Geom.}, 45(2):531--548, 2004.

\bibitem[HR05]{MR2144555}
Daniel Hug and Matthias Reitzner.
\newblock Gaussian polytopes: variances and limit theorems.
\newblock {\em Adv. in Appl. Probab.}, 37(2):297--320, 2005.

\bibitem[Juk11]{MR2865719}
Stasys Jukna.
\newblock {\em Extremal combinatorics: with applications in computer science}.
\newblock Springer Science \& Business Media, 2011.

\bibitem[Kru77]{MR444690}
Joseph~B. Kruskal.
\newblock Three-way arrays: rank and uniqueness of trilinear decompositions,
  with application to arithmetic complexity and statistics.
\newblock {\em Linear Algebra Appl.}, 18(2):95--138, 1977.

\bibitem[LHR20]{DBLP:journals/siamcomp/LoeraHR20}
Jes{\'{u}}s A.~De Loera, Jamie Haddock, and Luis Rademacher.
\newblock The minimum euclidean-norm point in a convex polytope: Wolfe's
  combinatorial algorithm is exponential.
\newblock {\em {SIAM} J. Comput.}, 49(1):138--169, 2020.

\bibitem[LJJ13]{LacosteJulien2013AnAI}
Simon Lacoste-Julien and Martin Jaggi.
\newblock An affine invariant linear convergence analysis for {F}rank-{W}olfe
  algorithms.
\newblock 2013.

\bibitem[LJJ15]{NIPS2015_5925}
Simon Lacoste-Julien and Martin Jaggi.
\newblock On the global linear convergence of {F}rank-{W}olfe optimization
  variants.
\newblock In {\em Advances in Neural Information Processing Systems 28}, pages
  496--504. 2015.

\bibitem[LM00]{laurent2000}
B.~Laurent and P.~Massart.
\newblock Adaptive estimation of a quadratic functional by model selection.
\newblock {\em Ann. Statist.}, 28(5):1302--1338, 10 2000.

\bibitem[Naz03]{nazarov2003maximal}
Fedor Nazarov.
\newblock On the maximal perimeter of a convex set in {$\mathbb{R}^n$} with
  respect to a {G}aussian measure.
\newblock In {\em Geometric aspects of functional analysis}, pages 169--187.
  Springer, 2003.

\bibitem[PNAJ20]{DBLP:conf/aistats/PedregosaNAJ20}
Fabian Pedregosa, Geoffrey Negiar, Armin Askari, and Martin Jaggi.
\newblock Linearly convergent {F}rank-{W}olfe with backtracking line-search.
\newblock In {\em International Conference on Artificial Intelligence and
  Statistics}, pages 1--10. PMLR, 2020.

\bibitem[PnR19]{MR3920711}
Javier Pe\~{n}a and Daniel Rodr\'{\i}guez.
\newblock Polytope conditioning and linear convergence of the {F}rank-{W}olfe
  algorithm.
\newblock {\em Math. Oper. Res.}, 44(1):1--18, 2019.

\bibitem[PnRS16]{doi:10.1137/15M1009937}
Javier Pe\~{n}a, Daniel Rodr\'{\i}guez, and Negar Soheili.
\newblock On the von {N}eumann and {F}rank-{W}olfe algorithms with away steps.
\newblock {\em SIAM Journal on Optimization}, 26(1):499--512, 2016.

\bibitem[Ray70]{MR258089}
H.~Raynaud.
\newblock Sur l'enveloppe convexe des nuages de points al\'{e}atoires dans
  {$R^{n}$}. {I}.
\newblock {\em J. Appl. Probability}, 7:35--48, 1970.

\bibitem[Raz88]{razborov1988bounded}
Alexander~A Razborov.
\newblock Bounded-depth formulae over $\{\&,\oplus\}$ and some combinatorial
  problems.
\newblock {\em Problems of Cybernetics. Complexity Theory and Applied
  Mathematical Logic}, pages 149--166, 1988.

\bibitem[RV05]{DBLP:conf/ipco/RoglinV05}
Heiko R{\"o}glin and Berthold V{\"o}cking.
\newblock Smoothed analysis of integer programming.
\newblock In {\em International Conference on Integer Programming and
  Combinatorial Optimization}, pages 276--290. Springer, 2005.

\bibitem[RV07]{DBLP:journals/mp/RoglinV07}
Heiko R{\"{o}}glin and Berthold V{\"{o}}cking.
\newblock Smoothed analysis of integer programming.
\newblock {\em Math. Program.}, 110(1):21--56, 2007.

\bibitem[ST01]{10.1145/380752.380813}
Daniel Spielman and Shang-Hua Teng.
\newblock Smoothed analysis of algorithms: Why the simplex algorithm usually
  takes polynomial time.
\newblock In {\em Proceedings of the Thirty-Third Annual ACM Symposium on
  Theory of Computing}, page 296–305, 2001.

\bibitem[Ver18]{vershyninbook}
Roman Vershynin.
\newblock High-dimensional probability: An introduction with applications in
  data science.
\newblock 47, 2018.

\bibitem[VS92]{MR1166627}
A.~M. Vershik and P.~V. Sporyshev.
\newblock Asymptotic behavior of the number of faces of random polyhedra and
  the neighborliness problem.
\newblock volume~11, pages 181--201. 1992.

\bibitem[Wol76]{Wolfe1976}
Philip Wolfe.
\newblock Finding the nearest point in a polytope.
\newblock {\em Mathematical Programming}, 11(1):128--149, Dec 1976.

\bibitem[Zie93]{Ziegler}
G{\"u}nter~M Ziegler.
\newblock {\em Lectures on Polytopes.}
\newblock 1993.

\bibitem[Zie00]{MR1785291}
G{\"u}nter~M Ziegler.
\newblock Lectures on 0/1-polytopes.
\newblock In {\em Polytopes—combinatorics and computation}, pages 1--41.
  2000.

\end{thebibliography}

\ifstoc
\appendix

\section{Technical lemmas for \cref{sec:polytopeconditioning}}

\fi

\end{document}